\setlist[enumerate]{itemsep=0mm}
\newtheorem{problem}{Problem}
\newtheorem{definition}{Definition}
\newtheorem{assumption}{Assumption}
\newtheorem{lemma}{Lemma}
\newtheorem{theorem}{Theorem}
\newtheorem{corollary}{Corollary}
\newcommand{\llm}[0]{\textsc{llm}}
\newcommand{\llms}[0]{\textsc{llm}s}
\newcommand{\qnn}[0]{\textsc{qnn}}
\newcommand{\qnns}[0]{\textsc{qnn}s}
\newcommand{\vqcs}[0]{\textsc{vqc}s}
\crefname{section}{sec.}{sec.}
\Crefname{section}{Sec.}{Sec.}
\crefname{equation}{eq.}{eq.}
\Crefname{equation}{Eq.}{Eq.}
\title{Large Language Models Can Help Mitigate Barren Plateaus\\ in Quantum Neural Networks}
\author{Jun Zhuang \\
  Department of Computer Science \\
  Boise State University, ID \\
  \texttt{junzhuang@boisestate.edu} \\
\And
 Chaowen Guan \\
  Department of Computer Science \\  
  University of Cincinnati, OH \\
 \texttt{guance@ucmail.uc.edu} \\
}
\begin{document}
\maketitle

\begin{abstract}
In the era of noisy intermediate-scale quantum (NISQ) computing, Quantum Neural Networks (\qnns) have emerged as a promising approach for various applications, yet their training is often hindered by barren plateaus (BPs), where gradient variance vanishes exponentially as the qubit size increases. Most initialization-based mitigation strategies rely heavily on pre-designed static parameter distributions, thereby lacking adaptability to diverse model sizes or data conditions. To address these limitations, we propose AdaInit, a foundational framework that leverages large language models with the submartingale property to iteratively synthesize initial parameters for \qnns\ that yield non-negligible gradient variance, thereby mitigating BPs. Unlike conventional one-shot initialization methods, AdaInit adaptively explores the parameter space by incorporating dataset characteristics and gradient feedback, with theoretical guarantees of convergence to finding a set of effective initial parameters for \qnns. We provide rigorous theoretical analyses of the submartingale-based process and empirically validate that AdaInit consistently outperforms existing initialization methods in maintaining higher gradient variance across various \qnn\ scales. We believe this work may initiate a new avenue to mitigate BPs.
\end{abstract}

\section{Introduction}
\label{sec:intro}
In recent years, there have been significant advancements in quantum computing, particularly with the advent of noisy intermediate-scale quantum (NISQ) devices~\citep{preskill2018quantum}. Within this research landscape, quantum neural networks (\qnns), which integrate quantum circuits with classical deep-learning layers, have been widely applied in various domains, such as quantum machine learning~\citep{stein2021qugan}, quantum chemistry and materials modeling~\citep{kandala2017hardware}, combinatorial optimization~\citep{farhi2014quantum}, and medical imaging~\citep{rahman2025nqnn, jiao2025mediq}. However, recent studies~\citep{ortiz2021entanglement, cerezo2021cost} reveal that \qnn\ performance may be hindered by gradient issues, such as barren plateaus (BPs), a kind of gradient issue that the initialization of \qnns\ might be trapped on a flattened landscape at the beginning of training. \citet{McClean2018landscapes} first systematically investigate BPs and affirm that the gradient variance will exponentially decrease when the \qnns\ satisfy the assumption of the 2-design Haar distribution. Under these circumstances, most gradient-based training approaches would fail.
To better illustrate the BPs' mitigation process, we present an example in Fig.~\ref{fig:BPs_example}.
\vspace{-2pt}
\begin{figure}[h]
  \centering
  \includegraphics[width=0.99\linewidth]{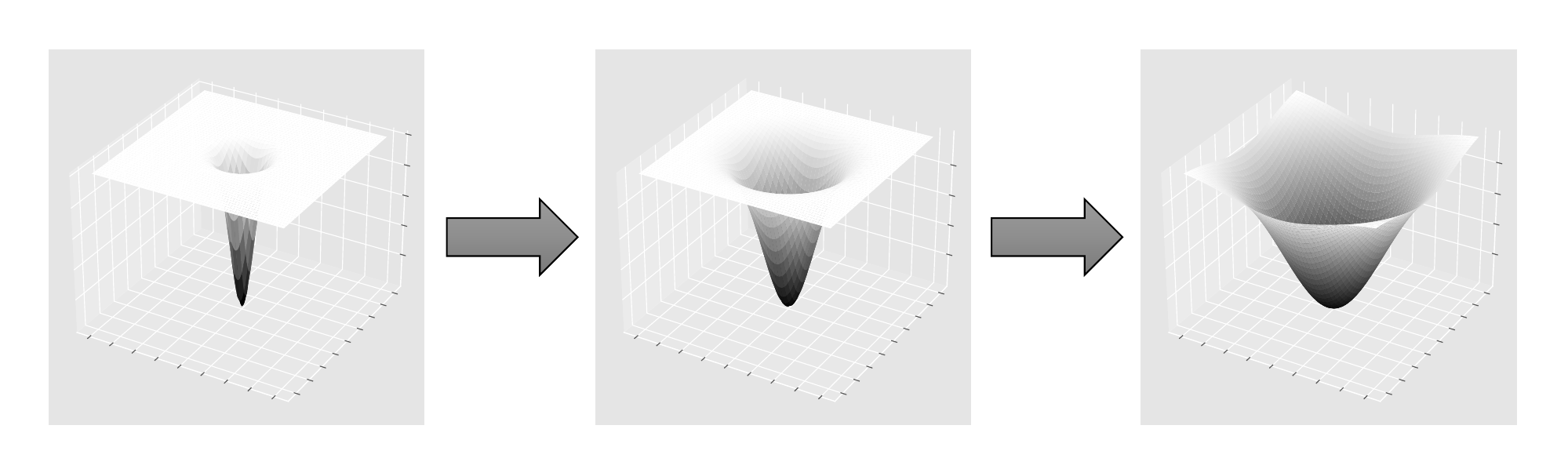}
  \caption{Example of BPs' mitigation process. A plateau-dominated loss landscape ($1^{st}$ image), a.k.a. BPs, could be gradually recovered to the normal case ($3^{rd}$ image) after mitigation.}
\label{fig:BPs_example}
\vspace{-2pt}
\end{figure}

Numerous studies have been devoted to mitigating the BPs, among which initialization-based strategies have proven particularly effective by using well-designed distributions to initialize \qnns' parameters~\citep{sack2022avoiding}. However, most existing initialization-based strategies, such as GaInit~\citep{zhang2022escaping} or BeInit~\citep{kulshrestha2022beinit}, rely on a one-shot generation of static initial parameters, which suffer from two critical limitations: (i) they often depend on idealized distribution assumptions, and (ii) they lack adaptability when scaling to various model sizes or data conditions. These limitations are particularly problematic where BPs hinder effective \qnns' training.
Initializing \qnns' parameters using deep-learning generative models could be a feasible solution since they can adaptively model data distribution on various datasets~\citep{friedrich2022avoiding}. Within the category of generative models, large language models (\llms) have demonstrated their superior performance in recent years~\citep{achiam2023gpt, dubey2024llama}. Nonetheless, until now, leveraging the superior generative performance of \llms\ to alleviate BPs is still under-explored.

To fill this research gap, we propose {\bf AdaInit}, an iterative framework that combines generative modeling with the submartingale property. We prove its convergence, demonstrating that AdaInit {\bf Ada}ptively generates effective {\bf Init}ial parameters for \qnns. These parameters ensure non-negligible gradient variance at the beginning of training, thereby mitigating BPs. Beyond the theoretical guarantee, AdaInit intuitively can help \qnns\ locate initial parameters in the ``non-flat'' loss landscape, which prevents training from being trapped from the start.
Our method is grounded by {\bf two key ideas}.
First, instead of statically pre-designing the initialization distribution, we leverage generative models, such as \llms, to synthesize candidate parameters based on dataset descriptions and prior gradient feedback. This allows the initializer to actively explore the parameter space and adaptively refine the candidate.
Second, by modeling the iterative process as a submartingale, we provide a theoretical guarantee that this process will almost surely converge within a finite number of iterations to initial parameters that yield non-negligible gradient variances. We model such a process because the submartingale property can depict the trend of expected improvement in gradient variance and guarantee convergence in finite time. Besides theoretical analysis, we conduct extensive experiments to validate the effectiveness of our proposed framework across various model scales. The results reveal that our framework can maintain higher gradient variances against three classic initialization methods and two popular initialization-based strategies for mitigating BPs.
Overall, our primary contributions can be summarized as:
\begin{itemize}[itemsep=-0.0mm]
  \item We propose a new \llm-driven submartingale-based framework, AdaInit, for mitigating BPs. To the best of our knowledge, we open a new avenue to leverage \llms\ with submartingale property to model \qnns' initial parameters for mitigating BPs.
  \item We theoretically analyze the submartingale property of the iterative process and rigorously prove its supremum and expected hitting time.
  \item Extensive experiments across various model scales demonstrate that as the model size of \qnns\ increases, our framework can maintain higher gradient variances against classic initialization methods.
\end{itemize}

\section{Preliminaries}
\label{sec:pre}
In this section, we first introduce the preliminary background about the basics of quantum computing and barren plateaus, and then present the necessary tools from probability theory.

{\bf Quantum Basics.}
A quantum state can be seen as a unit vector $|\psi\rangle$ in a complex Hilbert space $\mathcal{H}^m$, satisfying the normalization condition $\langle \psi | \psi \rangle = 1$. We use the Dirac bra-ket notation, where ket $|\psi\rangle$, bra $\langle \psi|$ denote a column vector in $\mathbb{C}^m$ and its Hermitian conjugate (conjugate transpose), respectively.
Any $|\psi\rangle$ can be written as a linear combination of computational basis states, $|\psi\rangle = \sum_{i=1}^m c_i |i\rangle$, where $c_i \in \mathbb{C}$ are the \emph{amplitudes} of the basis states $|i\rangle$.
Given two states $|\psi\rangle \in \mathcal{H}^m$ and $|\phi\rangle \in \mathcal{H}^n$, the inner product can be denoted by $\langle \psi | \phi \rangle \triangleq \sum_i \psi^\dagger_i \phi_i$, whereas their tensor product can be denoted by $|\psi\rangle \otimes |\phi\rangle \in \mathcal{H}^{m \times n}$. 
If we measure state $|\psi\rangle = \sum_{i=1}^m c_i|i\rangle$ on a computational basis, we will obtain $i$ with probability $|c_i|^2$ and the state will collapse into $|i\rangle$ after measurement.

{\bf Variational Quantum Circuits (\vqcs)}, whose model architecture is constructed solely from parameterized quantum circuits without interleaving classical neural network layers, play a core role in \qnns~\citep{mitarai2018quantum, mari2020transfer, rahman2026qapnet}. Typical \vqcs\ consist of a finite sequence of unitary gates $U(\bm{\theta})$ parameterized by $\bm{\theta} \in \mathbb{R}^{LNR}$, where $L$, $N$, and $R$ denote the number of layers, qubits, and rotation gates. $U(\bm{\theta})$ can be formulated as:
\vspace{-2mm}
\begin{equation}
    U(\bm{\theta}) = U(\theta_1, ..., \theta_L) = \prod_{l=1}^{L} U_l(\theta_l),
\label{eqn:vqc}
\vspace{-1mm}
\end{equation}
where $U_l(\theta_l) = e^{-i\theta_lV_l}$, $V_l$ is a Hermitian operator.

\qnns, which are built by wrapping neural network layers with \vqcs, can be optimized using gradient-based methods. To optimize \qnns, we first define the loss function $E(\bm{\theta})$ of $U(\bm{\theta})$ as the expectation over Hermitian operator $H$:
\begin{equation}
    E(\bm{\theta}) = \langle0| U(\bm{\theta})^{\dagger} H U(\bm{\theta}) |0\rangle.
\label{eqn:loss_fn}
\end{equation}

Given the loss function $E(\bm{\theta})$, we can further compute its gradient by the following formula:
\begin{equation}
    \partial_k E \equiv \frac{\partial E(\bm{\theta})}{\partial \theta_k} = i\langle0| U_{-}^{\dagger} \left[ V_k, U_{+}^{\dagger} H U_{+} \right] U_{-} |0\rangle,
\label{eqn:gradient}
\end{equation}
where $U_{-} \equiv \prod_{l=0}^{k-1} U_l(\theta_l)$ and $U_{+} \equiv \prod_{l=k}^{L} U_l(\theta_l)$. Also, $U(\bm{\theta})$ is sufficiently random s.t. both $U_{-}$ and $U_{+}$ (or either one) are independent and match the Haar distribution up to the second moment.

\noindent
{\bf Barren Plateaus (BPs)} are first investigated by~\cite{McClean2018landscapes}, who demonstrate that the gradient variance $\mathrm{Var}[\partial E]$ of \qnns\ at the beginning of training will exponentially decrease as the number of qubits $N$ increases when the random \qnns\ match 2-design Haar distribution. This exponential pattern can be approximated as:
\begin{equation}
    \mathrm{Var}[\partial E] \propto 2^{-2N}.
\label{eq:gvar}
\end{equation}
The \Cref{eq:gvar} indicates that $\mathrm{Var}[\partial E]$ will approximate zero when the number of qubits $N$ is very large, i.e., most gradient-based approaches will fail to train \qnns\ in this case.

Based on the above description, we formally state the problem that we aim to solve as follows:
\begin{problem}
By leveraging a generative model, such as an \llm, we refine the posterior with adaptive prompting, i.e., iteratively generate effective initial parameters $\bm{\theta^*_0}$ for a \qnn\ that yields non-negligible gradient variance $\mathrm{Var}[\partial E]$, thereby mitigating barren plateaus (BPs).
\label{prob:statement}
\end{problem}

\noindent \textbf{Tools from Probability Theory.} Below, we review the definition of martingale (submartingale), along with key tools relevant to our work. We adapt the descriptions from~\citep{williams1991probability, freeman2022spfmnotes}.

\begin{definition}[Martingale, \citep{williams1991probability}]
  Let $\{M^{(t)}\}_{t\geq 1}$ be a stochastic process w.r.t. a filtration $\{\mathcal{F}^{(t)}\}_{t\geq 1}$ on a probability space $(\Omega, \mathcal{F}, P)$. The process $\{M^{(t)}\}$ is called a \emph{martingale} if (i) $\{M^{(t)}\}$ is adapted, (ii) $\mathbb{E}[|M^{(t)}|] < \infty$, for $\forall\ t \in \mathbb{Z}^+$, (iii) $\mathbb{E}[M^{(t+1)} \mid \mathcal{F}^{(t)}] = M^{(t)}$, almost surely for $\forall\ t \in \mathbb{Z}^+$.
  
  If (iii) is replace by $\mathbb{E}[M^{(t+1)} \mid \mathcal{F}^{(t)}] \geq M^{(t)}$ almost surely, we say $\{M^{(t)}\}$ is a \emph{submartingale}.
\label{def:martingale}
\end{definition}

\begin{theorem}[Doob's Forward Convergence Theorem, \citep{williams1991probability}]
  Let $\{M^{(t)}\}_{t\geq 1}$ be an $L^1$-bounded \emph{submartingale} (in Def.~\ref{def:martingale}). Then, almost surely, the limit $M^{(\infty)} = \lim_{t \to \infty}M^{(t)}$ exists and is finite.
\label{thm:fct}
\end{theorem}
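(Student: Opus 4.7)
The plan is to prove Doob's Forward Convergence Theorem via the classical upcrossing argument. The central object will be the number of \emph{upcrossings} of an interval $[a,b]$ with $a<b$ by the process up to time $N$, denoted $U_N[a,b]$: informally, this counts how many disjoint times the path goes from $\leq a$ up to $\geq b$ on $\{1,\dots,N\}$. The core estimate I will establish first is Doob's upcrossing inequality, which for a submartingale $\{M^{(t)}\}$ asserts
$$(b-a)\,\mathbb{E}[U_N[a,b]] \leq \mathbb{E}[(M^{(N)}-a)^+] - \mathbb{E}[(M^{(1)}-a)^+].$$
The standard derivation builds a $\{0,1\}$-valued predictable process $C^{(t)}$ that equals $1$ precisely during a candidate upcrossing, notes that the transform $(C\cdot M)^{(N)} = \sum_{t=1}^{N-1} C^{(t+1)}(M^{(t+1)}-M^{(t)})$ is again a submartingale (being a nonnegative predictable transform of one) and so has nonnegative expectation, and observes the pathwise bound $(C\cdot M)^{(N)} \geq (b-a)\,U_N[a,b] - (M^{(N)}-a)^+$.

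Next I will deploy the $L^1$-boundedness hypothesis $K := \sup_{t}\mathbb{E}[|M^{(t)}|] < \infty$ together with the elementary estimate $(x-a)^+ \leq |x| + |a|$ to conclude $\mathbb{E}[U_N[a,b]] \leq (|a|+K)/(b-a)$ uniformly in $N$. Monotone convergence as $N \to \infty$ then yields $\mathbb{E}[U_\infty[a,b]] < \infty$, hence the total number of upcrossings of $[a,b]$ is finite almost surely.

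From there, the conclusion follows by a clean measure-theoretic step. On the event $\{\liminf_t M^{(t)} < \limsup_t M^{(t)}\}$ one can choose rationals $a<b$ strictly between the two, which forces infinitely many upcrossings of $[a,b]$; taking a countable union over pairs $(a,b)\in\mathbb{Q}^2$ with $a<b$ shows this event has probability zero. Consequently $M^{(\infty)} := \lim_{t\to\infty} M^{(t)}$ exists almost surely in $[-\infty,\infty]$. For finiteness, Fatou's lemma gives
$$\mathbb{E}[|M^{(\infty)}|] \leq \liminf_{t\to\infty}\mathbb{E}[|M^{(t)}|] \leq K < \infty,$$
so $M^{(\infty)}$ is almost surely finite.

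The main obstacle I anticipate is the upcrossing inequality itself, specifically the careful construction of the predictable indicator process $C^{(t)}$ and the pathwise lower bound on the transform $(C\cdot M)^{(N)}$; both require slightly delicate bookkeeping to ensure that every completed upcrossing contributes at least $(b-a)$ while any trailing partial upcrossing is charged to the single term $(M^{(N)}-a)^+$. Once that step is in hand, $L^1$-boundedness combined with Fatou closes the argument almost mechanically.
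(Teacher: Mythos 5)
The paper does not prove this statement at all: it is quoted as a standard tool directly from Williams (1991), so there is no in-paper argument to compare against. Your proposal is the classical upcrossing-lemma proof of that textbook result, and its overall architecture is exactly right --- upcrossing inequality, uniform bound on $\mathbb{E}[U_N[a,b]]$ from $L^1$-boundedness, monotone convergence to get $U_\infty[a,b]<\infty$ a.s., a countable union over rational pairs $a<b$ to rule out $\liminf<\limsup$, and Fatou for finiteness of the limit.

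There is, however, one step that as literally written would fail. You propose to combine ``$(C\cdot M)^{(N)}$ is a submartingale transform, hence $\mathbb{E}[(C\cdot M)^{(N)}]\ge 0$'' with the pathwise bound $(C\cdot M)^{(N)} \ge (b-a)U_N[a,b] - (M^{(N)}-a)^+$. Both of these are lower bounds on the same quantity, so together they yield nothing: to control $\mathbb{E}[U_N]$ you need an \emph{upper} bound on $\mathbb{E}[(C\cdot M)^{(N)}]$. The standard fix is the complementary strategy: since $1-C$ is a nonnegative bounded predictable process and $M$ is a submartingale, $\mathbb{E}[((1-C)\cdot M)^{(N)}]\ge 0$, and since $(C\cdot M)^{(N)}+((1-C)\cdot M)^{(N)}=M^{(N)}-M^{(1)}$ this gives $\mathbb{E}[(C\cdot M)^{(N)}]\le \mathbb{E}[M^{(N)}]-\mathbb{E}[M^{(1)}]$; combined with the pathwise bound (whose correction term for a trailing incomplete upcrossing is $-(M^{(N)}-a)^-$, not $-(M^{(N)}-a)^+$, since one buys at a level $\le a$) this closes the estimate. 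Cleaner still is to run the whole argument on the nonnegative submartingale $Y^{(t)}=(M^{(t)}-a)^+$, for which the trailing correction is nonnegative and drops out, producing exactly the inequality $(b-a)\,\mathbb{E}[U_N[a,b]] \le \mathbb{E}[(M^{(N)}-a)^+]-\mathbb{E}[(M^{(1)}-a)^+]$ that you state. With that reorientation, the remainder of your argument goes through verbatim.
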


\begin{theorem}[Doob's Optional Stopping Theorem, \citep{williams1991probability}]
  Let $\{M^{(t)}\}_{t\geq 1}$ be a \emph{submartingale} (in Def.~\ref{def:martingale}) and let $\tau$ be a stopping time. Then $\mathbb{E}[M^{(\tau)}] \geq \mathbb{E}[M^{(0)}]$ if any one of the following conditions hold: (i) $\tau$ is bounded, (ii) $P[\tau < \infty] = 1$ and $\{M^{(t)}\}$ is bounded for $\forall\ t \in \mathbb{Z}^+$, (iii) $\mathbb{E}[\tau] < \infty$ and $|M^{(t)} - M^{(t-1)}|$ is bounded for $\forall\ t \in \mathbb{Z}^+$.
\label{thm:ost}
\end{theorem}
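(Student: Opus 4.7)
The plan is to prove the three cases in sequence, establishing case (i) by a direct computation on the telescoped sum and then reducing cases (ii) and (iii) to case (i) via a truncation-plus-limit argument.

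First, for case (i) with $\tau \leq K$ almost surely, I would expand
\begin{equation*}
M^{(\tau)} - M^{(0)} = \sum_{t=1}^{K} (M^{(t)} - M^{(t-1)})\,\1_{\{\tau \geq t\}}.
\end{equation*}
The key observation is that $\{\tau \geq t\} = \{\tau \leq t-1\}^{c} \in \mathcal{F}^{(t-1)}$ because $\tau$ is a stopping time, so conditioning on $\mathcal{F}^{(t-1)}$ and invoking the submartingale property of Def.~\ref{def:martingale} gives
\begin{equation*}
\mathbb{E}\!\left[(M^{(t)} - M^{(t-1)})\,\1_{\{\tau \geq t\}} \mid \mathcal{F}^{(t-1)}\right] = \1_{\{\tau \geq t\}}\,\mathbb{E}[M^{(t)} - M^{(t-1)} \mid \mathcal{F}^{(t-1)}] \geq 0.
\end{equation*}
Summing over $t$ and taking total expectations yields $\mathbb{E}[M^{(\tau)}] \geq \mathbb{E}[M^{(0)}]$.

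Next, for cases (ii) and (iii), I would apply case (i) to the bounded stopping time $\tau \wedge n$ to obtain $\mathbb{E}[M^{(\tau \wedge n)}] \geq \mathbb{E}[M^{(0)}]$ for every $n \in \mathbb{Z}^+$, then send $n \to \infty$. Since $P[\tau < \infty] = 1$, we have $M^{(\tau \wedge n)} \to M^{(\tau)}$ almost surely, so the remaining work is to justify that $\mathbb{E}[M^{(\tau \wedge n)}] \to \mathbb{E}[M^{(\tau)}]$. For case (ii), the uniform bound on $\{M^{(t)}\}$ supplies a constant dominating function and dominated convergence closes the argument. For case (iii), I would dominate the telescoped sum by $|M^{(\tau \wedge n)} - M^{(0)}| \leq C\,(\tau \wedge n) \leq C\,\tau$, where $C$ is the uniform bound on the increments $|M^{(t)} - M^{(t-1)}|$; since $\mathbb{E}[\tau] < \infty$, the random variable $C\,\tau + |M^{(0)}|$ is an integrable dominator, and dominated convergence once again delivers the conclusion.

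The main obstacle to anticipate is the limit-expectation interchange in cases (ii) and (iii). One might hope to invoke Doob's Forward Convergence Theorem (Thm.~\ref{thm:fct}) directly, but that result only provides almost-sure convergence of the process, not $L^1$ convergence of its expectations, so the essential work is producing an integrable dominator rather than relying on a.s.\ convergence alone. A secondary subtlety is that case (iii) assumes a uniform bound on \emph{increments}, not on $\{M^{(t)}\}$ itself, so the dominator must be built from $\tau$ (via $\mathbb{E}[\tau] < \infty$) rather than from a pointwise bound on the process values.
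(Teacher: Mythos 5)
Your proof is correct: the telescoping decomposition with $\{\tau\geq t\}\in\mathcal{F}^{(t-1)}$ handles the bounded case, and the truncation to $\tau\wedge n$ followed by dominated convergence (with the constant dominator in case (ii) and the integrable dominator $C\tau+|M^{(0)}|$ in case (iii)) is exactly the standard argument from the cited reference. The paper itself states this as a classical result from \cite{williams1991probability} without proof, so there is no in-paper argument to compare against; your write-up supplies the standard one, and your remark that the essential work lies in producing an integrable dominator (rather than appealing to a.s.\ convergence alone) is well placed.
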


\begin{theorem}[Dominated Convergence Theorem, \citep{williams1991probability}]
  Let $M^{(t)}$, $M$ be random variables s.t. $M^{(t)} \to M$ almost surely. There exists a random variable $Y \in L^{1}$ s.t. for $\forall\ t \in \mathbb{Z}^+$, $|M^{(t)}| < Y$ almost surely, then $\mathbb{E}[M^{(t)}] \to \mathbb{E}[M]$ as $t \to \infty$.
\label{thm:dct}
\end{theorem}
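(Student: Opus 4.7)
The plan is to deduce the Dominated Convergence Theorem from Fatou's Lemma, which I would take as a prerequisite (Fatou in turn follows from the Monotone Convergence Theorem applied to the sequence $\inf_{s \geq t} X_s$, which is monotone nondecreasing in $t$). First I would set up integrability: the almost-sure bound $|M^{(t)}| < Y$ together with $Y \in L^{1}$ forces $M^{(t)} \in L^{1}$ for every $t$, and the almost-sure convergence $M^{(t)} \to M$ transfers the bound to the limit so that $|M| \leq Y$ almost surely, hence $M \in L^{1}$ as well. This ensures that every expectation appearing below is finite.

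Next I would apply Fatou's Lemma to the nonnegative sequence $Y + M^{(t)} \geq 0$:
\begin{equation*}
\mathbb{E}\bigl[\liminf_{t\to\infty}(Y + M^{(t)})\bigr] \;\leq\; \liminf_{t\to\infty}\mathbb{E}[Y + M^{(t)}].
\end{equation*}
Since $M^{(t)} \to M$ almost surely, the left side equals $\mathbb{E}[Y] + \mathbb{E}[M]$, and after subtracting the finite quantity $\mathbb{E}[Y]$ from both sides I obtain $\mathbb{E}[M] \leq \liminf_{t\to\infty}\mathbb{E}[M^{(t)}]$. Symmetrically, applying Fatou to $Y - M^{(t)} \geq 0$ yields $\mathbb{E}[Y - M] \leq \mathbb{E}[Y] - \limsup_{t\to\infty}\mathbb{E}[M^{(t)}]$, i.e., $\limsup_{t\to\infty}\mathbb{E}[M^{(t)}] \leq \mathbb{E}[M]$. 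Chaining the two inequalities collapses the $\limsup$ and $\liminf$, giving the desired conclusion $\lim_{t\to\infty}\mathbb{E}[M^{(t)}] = \mathbb{E}[M]$.

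The main obstacle is not computational but set-theoretic: one must verify that the three almost-sure statements used above (the dominating bound $|M^{(t)}| < Y$ for every $t$, the pointwise convergence $M^{(t)} \to M$, and the finiteness of $Y$) can be simultaneously removed by excluding a single null set, so that the pointwise identity $\liminf_{t\to\infty}(Y \pm M^{(t)}) = Y \pm M$ used in each application of Fatou genuinely holds off a negligible set. Because a countable union of null sets is itself null, this bookkeeping is routine once written out carefully, and it is the only place where the almost-sure (rather than everywhere) nature of the hypothesis needs attention.
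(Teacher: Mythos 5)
This statement is quoted in the paper as a standard background tool from \cite{williams1991probability}; the paper supplies no proof of its own, so there is nothing to compare against. Your argument --- integrability of $M^{(t)}$ and $M$ from the dominating bound, Fatou's Lemma applied to the nonnegative sequences $Y + M^{(t)}$ and $Y - M^{(t)}$ to squeeze $\limsup_t \mathbb{E}[M^{(t)}] \leq \mathbb{E}[M] \leq \liminf_t \mathbb{E}[M^{(t)}]$, and the countable-union-of-null-sets bookkeeping --- is the canonical proof of dominated convergence (essentially the one in the cited reference) and is correct as written.
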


\begin{lemma}[Minimum of Stopping Times, \citep{freeman2022spfmnotes}]
  Let $\tau_1$ and $\tau_2$ be stopping times w.r.t. a filtration $\{\mathcal{F}^{(t)}\}$. Then $\tau_1 \wedge \tau_2 = \min(\tau_1, \tau_2)$ is also a $\{\mathcal{F}^{(t)}\}$ stopping time.
\label{lem:mst}
\end{lemma}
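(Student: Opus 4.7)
The plan is to verify the defining property of a stopping time directly from elementary set operations. Recall that $\tau$ is a $\{\mathcal{F}^{(t)}\}$-stopping time exactly when $\{\tau \leq t\} \in \mathcal{F}^{(t)}$ for every $t$ in the (discrete) time index set used in the paper. So I only need to check this membership condition for the random variable $\tau_1 \wedge \tau_2$.

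The key step is the pointwise set identity
\[
\{\tau_1 \wedge \tau_2 \leq t\} \;=\; \{\tau_1 \leq t\} \cup \{\tau_2 \leq t\},
\]
which is just the observation that $\min(\tau_1(\omega), \tau_2(\omega)) \leq t$ iff at least one of $\tau_1(\omega), \tau_2(\omega)$ is $\leq t$. By hypothesis, both sets on the right-hand side belong to $\mathcal{F}^{(t)}$. Since $\mathcal{F}^{(t)}$ is a $\sigma$-algebra, it is closed under finite unions, so their union is again in $\mathcal{F}^{(t)}$. Therefore $\{\tau_1 \wedge \tau_2 \leq t\} \in \mathcal{F}^{(t)}$ for every $t$, which is exactly the defining condition of a stopping time.

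Honestly, there is no real obstacle: the statement is a one-line consequence of the definition plus closure of a $\sigma$-algebra under finite unions, and is standard in the literature (e.g., \cite{williams1991probability, freeman2022spfmnotes}). If one prefers the dual characterization via $\{\tau > t\} \in \mathcal{F}^{(t)}$, the same argument carries through with intersections using $\{\tau_1 \wedge \tau_2 > t\} = \{\tau_1 > t\} \cap \{\tau_2 > t\}$. In the discrete-time setting employed throughout the paper, no right-continuity or ``usual conditions'' subtleties arise, so the proof requires nothing beyond what is written above.
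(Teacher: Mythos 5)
Your proof is correct: the identity $\{\tau_1 \wedge \tau_2 \leq t\} = \{\tau_1 \leq t\} \cup \{\tau_2 \leq t\}$ together with closure of $\mathcal{F}^{(t)}$ under finite unions is exactly the standard argument, and in the discrete-time setting used here no further care is needed. Note that the paper does not prove this lemma at all --- it is imported as a preliminary tool with a citation to \cite{freeman2022spfmnotes} --- so there is no in-paper proof to compare against; your argument is the textbook one and fills that role completely.
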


\section{Our Proposed Framework}
\begin{figure}[h]
  \centering
  \includegraphics[width=0.8\linewidth]{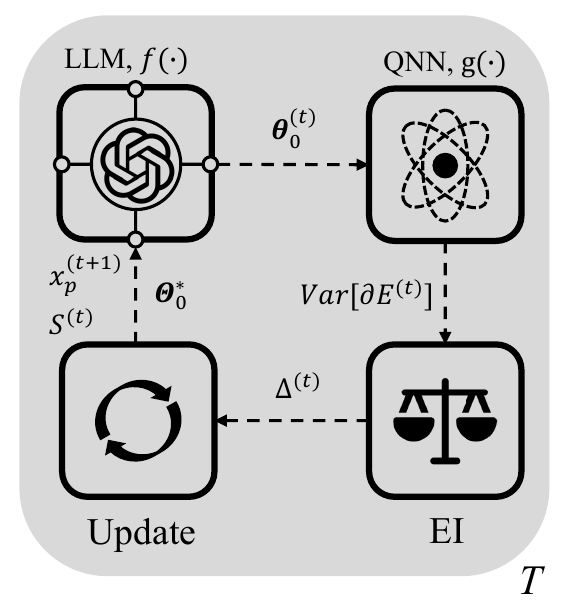}
  \caption{Our proposed framework follows an iterative process over $T$ iterations (gray area). In $t$-th iteration, we perform four sequential steps: (i) Generate $\bm{\theta}_{0}^{(t)}$ using a generative model, $f(\cdot)$, (e.g., \llm), (ii) Compute $\mathrm{Var}[\partial E^{(t)}]$ after \qnn's training, (iii) Calculate EI, $\Delta^{(t)}$, and (iv) Update prompts $x_{p}^{(t+1)}$, historical maximum gradient variance $S^{(t)}$, and effective candidates $\bm{\theta}_{0}^*$ for next iteration. Dashed arrows indicate data flow and corresponding outputs in each step.}
\label{fig:framework}
\end{figure}
In this study, we introduce a new framework, AdaInit, designed to mitigate BP issues in \qnns\ by leveraging generative models, particularly \llms. Our {\bf key innovations} can be described as follows.
{\bf (i)} First, unlike conventional one-shot initialization strategies, we propose a generative approach that iteratively generates effective initial model parameters $\bm{\theta^*_0} \in \mathbb{R}^{LNR}$ for \qnns\ that yield non-negligible gradient variance $\mathrm{Var}[\partial E]$, thereby mitigating BPs. In each iteration, we employ an \llm\ to refine the posterior (candidate initial model parameters $\bm{\theta_0}$) through adaptive prompting. After the posterior refinement, we train the \qnn\ initialized with the generated $\bm{\theta_0}$ and further compute its $\mathrm{Var}[\partial E]$ in the early stage of training for monitoring BPs. The benefit of using \llm\ to refine the posterior is that the \llm\ can incorporate diverse textual instructions via prompts and adaptively update the prompts based on feedback from the previous iteration. This adaptive refinement allows our framework to dynamically optimize the generation process.
{\bf (ii)} To validate the generation quality, we employ Expected Improvement (EI), $\Delta^{(t)}$, as a guiding metric for the iterative process. Furthermore, we rigorously prove that the process satisfies submartingale properties. Consequently, we theoretically establish the boundedness, thereby demonstrating that our proposed framework will ultimately find effective initial model parameters for \qnns in a finite step.

\begin{algorithm}[h]
\small
\caption{Iteratively generate effective initial parameters for \qnns\ using a generative model.}
\label{algo:srch_init_params}
  \begin{algorithmic}[1]
    \REQUIRE A generative model $f(\cdot)$, prompts $x_{p}$, a \qnn\ $g(\cdot)$, the number of iterations $T$, a parameter $K$.
    \STATE Initialize prompts $x_{p}$ and the model $f(\cdot)$;
    \STATE Create an empty list $\bm{\Theta}_{0}^* \gets \varnothing$ to collect effective candidates of initial model parameters for the \qnn, $g(\cdot)$;
    \FOR{$t = 1$ to $T$}
        \STATE $P(\bm{\theta}_{0}^{(t)}|x_{p}^{(t)}) \gets f(x_{p}^{(t)}|\bm{\theta}_{0}^{(t)})P(\bm{\theta}_{0}^{(t)})$;
        \STATE $\mathrm{Var}[\partial E^{(t)}] \gets g(\bm{\theta}_{0}^{(t)})$;
        \STATE $\Delta^{(t)} \gets \max(\mathrm{Var}[\partial E^{(t)}] - S^{(t-1)}, 0)$;
        \IF{$\Delta^{(t)} \geq \frac{1}{poly(N, L)K}$} 
            \STATE $x_{p}^{(t+1)} \xleftarrow{\bm{\theta}_{0}^{(t)},\ \mathrm{Var}[\partial E^{(t)}],\ S^{(t-1)}} x_{p}^{(t)}$;
            \STATE $S^{(t)} \gets \mathrm{Var}[\partial E^{(t)}]$;
            \STATE $\bm{\Theta}_{0}^* \gets \bm{\Theta}_{0}^* \oplus [\bm{\theta}_{0}^{(t)}]$;
        \ENDIF
    \ENDFOR
    \RETURN $\bm{\Theta}_{0}^*$.
  \end{algorithmic}
\end{algorithm}

We present our framework workflow in Fig.~\ref{fig:framework} and further introduce details in Algo.~\ref{algo:srch_init_params}. Given a generative model $f(\cdot)$, prompts $x_{p}$ for the $f(\cdot)$, a \qnn\ $g(\cdot)$, and the number of iterations $T$, we first initialize $f(\cdot)$, $x_{p}$ ({\bf line 1}) and also create an empty list $\varnothing$ for $\bm{\Theta}_0^*$ to collect candidates of \qnn's initial model parameters ({\bf line 2}). After initialization, we run $T$ iterations for generation ({\bf line 3}). In each iteration, let's say in the $t$-th iteration, we first employ $f(\cdot)$ with prompts $x_{p}^{(t)}$ and a prior distribution $P(\bm{\theta}_{0}^{(t)})$ to refine the posterior distribution $P(\bm{\theta}_{0}^{(t)}|x_{p}^{(t)})$, which is the generated initial model parameter $\bm{\theta}_{0}^{(t)}$ for the \qnn\ ({\bf line 4}). After generation, we train the \qnn\ $g(\bm{\theta}_{0}^{(t)})$ with certain training epochs and compute the gradient variance $\mathrm{Var}[\partial E^{(t)}]$, whose gradient is abbreviated from $\frac{\partial E(\bm{\theta}^{(t)})}{\partial \bm{\theta}^{(t)}}$, where $\bm{\theta}^{(t)}$ denotes the \qnn's model parameter in the $t$-th iteration ({\bf line 5}). After computing the variance, we evaluate the improvement using the Expected Improvement (EI) metric, comparing the current gradient variance $\mathrm{Var}[\partial E^{(t)}]$ to the historical maximum gradient variance, which is the cumulative sum of EI when EI meets the following conditions ({\bf line 6}). If the current EI, $\Delta^{(t)}$, is effectively improved, i.e., $\Delta^{(t)} \geq \nicefrac{1}{(poly(N, L)K)}$, where $\nicefrac{1}{(poly(N, L)K)}$ (with a parameter $K$ be determined later) denotes a strictly positive lower bound on the gradient variance of an $N$-qubit, $L$-layer \qnn\ for each iteration, in the absence of BPs ({\bf line 7}), then we update the prompts for the next iteration based on the current initial model parameters $\bm{\theta}_{0}^{(t)}$, the current gradient variance $\mathrm{Var}[\partial E^{(t)}]$, and the historical maximum gradient variance $S^{(t-1)}$ ({\bf line 8}). After updating prompts, we update the historical maximum $S^{(t)}$ for the next iteration, where (if $\Delta^{(t)} \geq \nicefrac{1}{(poly(N, L)K)}$) $S^{(t)} = S^{(t-1)}+\Delta^{(t)} = \mathrm{Var}[\partial E^{(t)}]$ ({\bf line 9}) and further concatenate $\bm{\theta}_{0}^{(t)}$ to the candidate list $\bm{\Theta}_{0}^*$ ({\bf line 10}), which will be returned at the end ({\bf line 13}). If so, the most effective initial model parameter $\bm{\theta}_{0}^*$ will be the last element in the candidate list. We can see that the input parameter $K$ is somewhat related to a wanted increment in $\mathrm{Var}[\partial E^{(t)}]$, which is further linked to the desired underlying property provided by the generative model. This connection is explicitly shown in the theoretical analysis below.

\paragraph{Time and space complexity.}
Our method runs $T$ iterations. In each iteration, posterior refinement, which is linearly related to the output size of $\bm{\theta}_{0}$, takes $\mathcal{O}(|\bm{\theta}_{0}|)$ for a fixed-size \qnn. Besides, training $g(\bm{\theta}_{0})$ with $T_{tr}$ epochs may take $\mathcal{O}(T_{tr} \cdot |\bm{\theta}_{0}|)$, where $T_{tr}$ denotes the number of training epochs for \qnn. Combining $\bm{\theta}_{0} \in \mathbb{R}^{LNR}$, the total {\bf time complexity} is $\mathcal{O}(T \cdot (L \cdot N \cdot R + T_{tr} \cdot L \cdot N \cdot R)) \approx \mathcal{O}(T \cdot T_{tr} \cdot L \cdot N \cdot R)$.
The space complexity primarily depends on the storage requirements. $\bm{\Theta}_{0}^*$ at most stores $T$ number of $\bm{\theta}_{0}$, which consumes $\mathcal{O}(T \cdot |\bm{\theta}_{0}|)$. The output of posterior refinement takes $\mathcal{O}(|\bm{\theta}_{0}|)$ space. Gradient variance and EI are scalars, which cost $\mathcal{O}(1)$ space. The prompts $x_{p}$ are iteratively updated and thus occupy $\mathcal{O}(|x_{p}|)$ space. Considering the size of $\bm{\theta}_{0}$, the total {\bf space complexity} is $\mathcal{O}(T \cdot L \cdot N \cdot R + L \cdot N \cdot R + |x_{p}|) \approx \mathcal{O}(T \cdot L \cdot N \cdot R + |x_{p}|)$.

\paragraph{Theoretical analysis.}
We first present necessary results and further interpret them. Full Proofs can be found in the {\bf Appendix~\ref{app:proofs}}.

First, we define the Expected Improvement (EI) at each iteration $t$ as $\Delta^{(t)}$ and its cumulative sum in the past iterations as $S^{(t-1)}$ in Def.~\ref{def:ei}. Besides, we assume that the maximum possible gradient $\partial E_{max}$ during \qnn's training is bounded by a positive constant $B_{\partial E}$, which is practical in real-world simulation. Next, we establish an upper bound for EI through Lem.~\ref{lem:gvar_bound} and Lem.~\ref{lem:ei_bound}.

\begin{definition}[Expected Improvement]
For $\forall$ $t \in \mathbb{Z}^+$, the Expected Improvement (EI) in the $t$-th iteration is defined as:
 \[
   \Delta^{(t)} = \max(\mathrm{Var}[\partial E^{(t)}] - S^{(t-1)}, 0),
 \]
where $\mathrm{Var}[\partial E^{(t)}]$ denotes the gradient variance in the $t$-th iteration, and $S^{(t-1)} = \sum_{t_i=1}^{t-1} \Delta^{(t_i)} \cdot I^{(t_i)}$ denotes the maximum observed gradient variance in the past iterations, where $I^{(t_i)}$ represents an indicator function $\mathbf{1} \bigl( \Delta^{(t_i)} \geq \nicefrac{1}{(poly(N, L)K)} \bigr)$ given a condition inside.
\label{def:ei}
\end{definition}

\begin{assumption}[Bounded Maximum Gradient]
We assume there exists a positive constant $B_{\partial E} > 0$, s.t. the maximum possible gradient $\partial E_{max}$ during \qnn's training satisfies $\bigl| \partial E_{max} \bigr| \leq B_{\partial E}$.
Without loss of generality, let's say $\partial E_{max} \in [-\frac{B_{\partial E}}{2}, \frac{B_{\partial E}}{2}]$.
\label{assms:gmax}
\end{assumption}

\begin{lemma}[Boundedness of Gradient Variance]
Given a certain-size quantum neural network (\qnn), the variance of its gradient at the beginning of training, $\mathrm{Var}[\partial E]$, is bounded by $\mathrm{Var}[\partial E] \leq (\partial E_{\max} - \partial E_{\min})^2$,
where $\partial E_{\max}$ and $\partial E_{\min}$ denote the maximum and minimum values of the gradient $\partial E$, respectively.
\label{lem:gvar_bound}
\end{lemma}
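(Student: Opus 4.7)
The plan is to treat $\partial E$ as a bounded real-valued random variable induced by the randomness of the \qnn's initialization, and then apply a standard pointwise argument to its deviation from the mean. By Assumption~\ref{assms:gmax}, $\partial E$ is bounded in absolute value (so $\partial E_{\max}$ and $\partial E_{\min}$ are both finite), and by construction $\partial E \in [\partial E_{\min}, \partial E_{\max}]$ almost surely. In particular, $\mathbb{E}[\partial E]$ also lies in this closed interval.

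First, I would write out the definition
\[
\mathrm{Var}[\partial E] \;=\; \mathbb{E}\!\left[\bigl(\partial E - \mathbb{E}[\partial E]\bigr)^2\right],
\]
which is well-defined and finite by the boundedness above. Next, I would observe that for any realization of $\partial E$, both $\partial E$ and $\mathbb{E}[\partial E]$ lie in $[\partial E_{\min}, \partial E_{\max}]$, so the pointwise inequality
\[
\bigl|\partial E - \mathbb{E}[\partial E]\bigr| \;\leq\; \partial E_{\max} - \partial E_{\min}
\]
holds almost surely. Squaring both sides and taking expectation then immediately yields
\[
\mathrm{Var}[\partial E] \;\leq\; \bigl(\partial E_{\max} - \partial E_{\min}\bigr)^2,
\]
which is the desired bound.

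Since the argument rests only on elementary properties of bounded random variables, there is no serious obstacle; the only item worth being careful about is justifying that $\mathbb{E}[\partial E]$ indeed lies in $[\partial E_{\min}, \partial E_{\max}]$, which follows from monotonicity of expectation. I note in passing that a tighter form of this inequality (Popoviciu's inequality) would give the factor $\tfrac{1}{4}(\partial E_{\max}-\partial E_{\min})^2$, but the looser bound stated in the lemma suffices for the subsequent EI analysis in Lemma~\ref{lem:ei_bound}, and keeping the simpler constant makes downstream bookkeeping cleaner.
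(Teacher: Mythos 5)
Your proof is correct and follows essentially the same argument as the paper: both bound the deviation of $\partial E$ from its mean pointwise by the range $\partial E_{\max} - \partial E_{\min}$, square, and average. The only cosmetic difference is that the paper works with the empirical variance over the $T_{tr}$ training epochs (so the mean is the sample mean $\overline{\partial E}$) rather than an expectation over initialization randomness, and your aside about Popoviciu's sharper constant $\tfrac{1}{4}$ is accurate but, as you note, unnecessary for the downstream use in Lemma~\ref{lem:ei_bound}.
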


\begin{lemma}[Boundedness of EI]
From Def.~\ref{def:ei} and Lem.~\ref{lem:gvar_bound}, in the process of generating initial model parameters $\bm{\theta}_{0}$ for a certain-size \qnn, for $\forall$ $t \in \mathbb{Z}^+$, there exists a bound for the expected improvement (EI) s.t. $\Delta^{(t)} \leq (\partial E_{max} - \partial E_{min})^2$.
\label{lem:ei_bound}
\end{lemma}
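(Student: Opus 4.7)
The plan is to unwind the definition of $\Delta^{(t)}$ and chain two elementary inequalities, one from the structure of the running maximum $S^{(t-1)}$ and one from Lem.~\ref{lem:gvar_bound}. Concretely, I would split on which branch of the $\max$ in Def.~\ref{def:ei} is active: if $\mathrm{Var}[\partial E^{(t)}] - S^{(t-1)} \leq 0$, then $\Delta^{(t)} = 0$ and the conclusion is immediate since $(\partial E_{max} - \partial E_{min})^2 \geq 0$. Otherwise $\Delta^{(t)} = \mathrm{Var}[\partial E^{(t)}] - S^{(t-1)}$, and it suffices to control this quantity from above.

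The key observation for the non-trivial branch is that $S^{(t-1)} \geq 0$. This follows directly from the definition $S^{(t-1)} = \sum_{t_i=1}^{t-1} \Delta^{(t_i)} \cdot I^{(t_i)}$, because each $\Delta^{(t_i)}$ is itself a $\max$ with $0$ and hence non-negative, and each indicator $I^{(t_i)}$ takes values in $\{0,1\}$. A one-line induction on $t$ makes this rigorous if desired. With $S^{(t-1)} \geq 0$ in hand, I would write
\[
\Delta^{(t)} \;=\; \mathrm{Var}[\partial E^{(t)}] - S^{(t-1)} \;\leq\; \mathrm{Var}[\partial E^{(t)}].
\]

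Finally, I would invoke Lem.~\ref{lem:gvar_bound} applied to the \qnn\ initialized with $\bm{\theta}_{0}^{(t)}$, giving $\mathrm{Var}[\partial E^{(t)}] \leq (\partial E_{max} - \partial E_{min})^2$, and combine with the previous display to conclude
\[
\Delta^{(t)} \;\leq\; (\partial E_{max} - \partial E_{min})^2,
\]
uniformly in $t$. Note that the bound on the right-hand side is a property of the \qnn\ architecture (and, via Assumption~\ref{assms:gmax}, is itself controlled by $B_{\partial E}^2$), so it does not depend on the iteration index, which is exactly what the lemma claims.

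There is no real obstacle here; the only subtle point worth stating explicitly is the non-negativity of $S^{(t-1)}$, which is what prevents the subtraction from ever making $\Delta^{(t)}$ exceed the raw variance. If I wanted to tighten the statement, I could note that the same argument also yields $\Delta^{(t)} \leq B_{\partial E}^2$ via Assumption~\ref{assms:gmax}, but that is not needed for the lemma as stated and I would leave it for downstream use (e.g., when invoking Doob's results in Thm.~\ref{thm:fct} and Thm.~\ref{thm:ost}, where a uniform $L^{1}$ bound on the increments is required).
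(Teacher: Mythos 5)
Your proposal is correct and follows essentially the same route as the paper: unwind the $\max$ in Def.~\ref{def:ei}, bound the non-trivial branch by $\mathrm{Var}[\partial E^{(t)}]$, and apply Lem.~\ref{lem:gvar_bound}. If anything, your version is slightly tighter than the paper's, since you make explicit the non-negativity of $S^{(t-1)}$ that actually licenses dropping the subtracted term, whereas the paper only notes that $S^{(t-1)}$ is itself bounded above by $(\partial E_{max}-\partial E_{min})^2$.
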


The results indicate that $S^{(t)}$ is $L^1$-bounded and integrable for each $t$.
Building upon these lemmas, we investigate the submartingale property and rigorously prove in Lem.~\ref{lem:submg} that $S^{(t)}$ is a submartingale. 

\begin{lemma}[Informal Statement of Submartingale Property]
Let $\{I^{(t)}\}_{t \ge 1}$ be a sequence of Bernoulli random variables on a probability space $(\Omega, \mathcal{F}, P)$ s.t. $I^{(t)} =1$ with a real number $p \in (0,1]$. Then, $\{ S^{(t)} \}_{t \ge 1}$ is a submartingale with respect to the filtration $\{\mathcal{F}^{(t)}\}_{t \ge 1}$ which denotes the collections of all possible events up to time $t$.
\label{lem:submg}
\end{lemma}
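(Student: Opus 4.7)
My plan is to verify, in order, the three clauses of Definition~\ref{def:martingale}. Fix the natural filtration $\mathcal{F}^{(t)} = \sigma(\bm{\theta}_{0}^{(1)}, I^{(1)}, \dots, \bm{\theta}_{0}^{(t)}, I^{(t)})$ generated by the proposal-and-acceptance history through iteration $t$. By construction of Algo.~\ref{algo:srch_init_params}, $\mathrm{Var}[\partial E^{(s)}]$ is a deterministic function of $\bm{\theta}_{0}^{(s)}$, each $\Delta^{(s)}$ is a deterministic function of $\mathrm{Var}[\partial E^{(s)}]$ and of $S^{(s-1)}$, and hence $S^{(t)} = \sum_{s=1}^{t} \Delta^{(s)} I^{(s)}$ is $\mathcal{F}^{(t)}$-measurable, settling adaptedness. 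Integrability follows from Lemma~\ref{lem:ei_bound}: each summand satisfies $0 \leq \Delta^{(s)} I^{(s)} \leq (\partial E_{\max} - \partial E_{\min})^2$, so $\mathbb{E}[|S^{(t)}|] \leq t(\partial E_{\max} - \partial E_{\min})^2 < \infty$ for every fixed $t$.

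The substantive step is the conditional expectation inequality. Using the one-step decomposition $S^{(t+1)} - S^{(t)} = \Delta^{(t+1)} I^{(t+1)}$ and taking conditional expectation given $\mathcal{F}^{(t)}$, I would write
\[
\mathbb{E}[S^{(t+1)} \mid \mathcal{F}^{(t)}] \;=\; S^{(t)} + \mathbb{E}[\Delta^{(t+1)} I^{(t+1)} \mid \mathcal{F}^{(t)}] \;\geq\; S^{(t)},
\]
where the inequality follows because $\Delta^{(t+1)} = \max(\mathrm{Var}[\partial E^{(t+1)}] - S^{(t)}, 0) \geq 0$ almost surely by Definition~\ref{def:ei} and $I^{(t+1)} \in \{0,1\}$, so the product is pathwise non-negative and hence its conditional expectation is non-negative. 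This directly yields the submartingale property.

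\textbf{Main obstacle.} The most delicate point is interpreting the hypothesis cleanly. The algorithmic $I^{(t)}$ is literally the indicator $\mathbf{1}(\Delta^{(t)} \geq 1/(poly(N,L)K))$, a measurable function of $\Delta^{(t)}$ and hence not independent of $\bm{\theta}_{0}^{(t)}$ a priori; to match the lemma's i.i.d.\ Bernoulli$(p)$ phrasing I would treat this as a modeling abstraction in which $f(\cdot)$ proposes parameters whose EI exceeds the threshold with stationary probability $p$ independent of past iterations. Strictly speaking, the submartingale property itself does not actually require independence --- only pathwise non-negativity of the increment --- but the i.i.d.\ Bernoulli$(p)$ assumption with $p > 0$ will be essential downstream for concluding that $\mathbb{E}[S^{(t+1)} - S^{(t)} \mid \mathcal{F}^{(t)}]$ is \emph{strictly} positive, which is what turns the $L^1$-boundedness together with Doob's convergence and optional stopping theorems (Theorems~\ref{thm:fct}--\ref{thm:ost}) into finite expected-hitting-time guarantees for the overall framework.
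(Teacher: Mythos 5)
Your proof is correct, but the key step takes a genuinely different (and more elementary) route than the paper's. You establish the inequality $\mathbb{E}[S^{(t+1)} \mid \mathcal{F}^{(t)}] \geq S^{(t)}$ purely from pathwise non-negativity of the increment $\Delta^{(t+1)} I^{(t+1)} \geq 0$, i.e.\ you observe that $S^{(t)}$ is a non-decreasing adapted integrable process and hence trivially a submartingale; no distributional assumption on $I^{(t)}$ or $\Delta^{(t)}$ is needed for this. The paper instead sets up an explicit joint law for the pair $(I^{(t)}, \Delta^{(t)})$ (with $I^{(t)}$ the indicator of $\Delta^{(t)} \geq \alpha$ for $\alpha = 1/(poly(N,L)K)$ and a density $p(y)$ for $\Delta^{(t)}$), assumes the pairs are i.i.d., and computes $\mathbb{E}[\Delta^{(t)} I^{(t)}] = \int_{\alpha}^{\infty} y\, p(y)\, dy \geq \alpha p > 0$, concluding $\mathbb{E}[S^{(t)} \mid \mathcal{F}^{(t-1)}] \geq S^{(t-1)} + \alpha p$. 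What the paper's heavier machinery buys is precisely the quantitative drift lower bound $\alpha p = \delta$, which is not needed for the submartingale property itself but is the engine of the hitting-time bound in Theorem~\ref{thm:hittime} (via the drift-adjusted process $Z^{(t)} = S^{(t)} - \delta t$). You identify this distinction explicitly and correctly in your closing remarks, including the honest caveat that the algorithmic $I^{(t)}$ is a deterministic function of $\Delta^{(t)}$ so the i.i.d.\ Bernoulli hypothesis is a modeling abstraction --- the same abstraction the paper makes. The one thing your write-up leaves implicit is that if one wants the stronger conclusion $\mathbb{E}[S^{(t+1)} - S^{(t)} \mid \mathcal{F}^{(t)}] \geq \alpha p$ for later use, the non-negativity argument alone does not deliver it and the paper's integral computation (or an equivalent conditioning argument) becomes necessary.
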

Note that the random variable is defined in relation to the comparison between the values of $\Delta^{(t)}$ and $\nicefrac{1}{(poly(N, L)K)} $. From the definition, it is easy to see that $\Delta^{(t)} \geq \nicefrac{1}{(poly(N, L)K)}$ when $I^{(t)} =1$, and $\Delta^{(t)} < \nicefrac{1}{(poly(N, L)K)}$ when $I^{(t)} =0$. More precisely, each $\Delta \cdot I$ (omitting the superscripts) defines a joint distribution of one discrete random variable and one continuous random variable. More details about this intuition are provided in {\bf Appendix~\ref{app:proofs}}.

Leveraging the convergence of submartingales and the monotonicity of $S^{(t)}$, we establish in Lem.~\ref{lem:submg_bound} that $S^{(t)}$ has a supremum, which indicates that our proposed framework can eventually generate effective initial model parameters for \qnns\ that can yield non-negligible gradient variance.

\begin{lemma}[Boundedness of Submartingale]
Let $\{S^{(t)}\}_{t \geq 1}$ be a submartingale w.r.t. a $\{\mathcal{F}^{(t)}\}_{t \geq 1}$ s.t. $\sup_t \mathbb{E}[|S^{(t)}|] < \infty$. Then, $\{S^{(t)}\}_{t \geq 1}$ is almost surely bounded by a finite constant $B_{S}$ s.t. $S^{(t)} \le B_{S}, \text{a.s.,} \forall\ t \in \mathbb{Z}^+.$
\label{lem:submg_bound}
\end{lemma}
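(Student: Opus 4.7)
The plan is to combine Doob's Forward Convergence Theorem (Thm.~\ref{thm:fct}) with the monotone structure of $S^{(t)}$ inherited from Def.~\ref{def:ei}. The $L^1$-boundedness hypothesis $\sup_t \mathbb{E}[|S^{(t)}|] < \infty$ is already assumed, so Thm.~\ref{thm:fct} immediately yields an almost sure, finite limit $S^{(\infty)} = \lim_{t \to \infty} S^{(t)}$. On its own, however, Doob's theorem only tells us the sequence settles to a finite limit; for a generic submartingale, intermediate terms could still overshoot that limit, so additional structure is required to get a uniform bound that holds for every $t$.

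That additional structure is monotonicity. I would read off from Def.~\ref{def:ei} that the per-iteration increment is
\[
S^{(t)} - S^{(t-1)} = \Delta^{(t)} \cdot I^{(t)},
\]
which is non-negative because $\Delta^{(t)} = \max(\mathrm{Var}[\partial E^{(t)}] - S^{(t-1)}, 0) \geq 0$ and $I^{(t)} \in \{0,1\}$. Therefore $\{S^{(t)}\}_{t \geq 1}$ is almost surely non-decreasing. Any non-decreasing real sequence that converges to a finite limit is bounded above by that limit, so on the almost sure event where the limit exists I would conclude $S^{(t)} \leq S^{(\infty)}$ for every $t$. Setting $B_S := S^{(\infty)}$ (which is a.s.\ finite) gives the claimed bound. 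If one prefers a deterministic constant, Lem.~\ref{lem:ei_bound} and Assumption~\ref{assms:gmax} already supply one: each summand is bounded by $(\partial E_{\max}-\partial E_{\min})^2 \leq B_{\partial E}^2$, which via Lem.~\ref{lem:gvar_bound} caps $S^{(t)}$ by a fixed constant independent of $\omega$.

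The main obstacle, I expect, is the conceptual gap between the generic-submartingale phrasing of the lemma and what Doob's theorem actually delivers: convergence of $S^{(t)}$ to a finite limit does not in general imply a uniform pathwise upper bound for every $t$. I would therefore make explicit that the step from ``finite almost sure limit'' to ``uniform a.s.\ bound on every $S^{(t)}$'' relies specifically on the non-negativity of the increments built into Def.~\ref{def:ei}, which is not a property of arbitrary submartingales. Once this monotonicity is invoked, the remainder is a one-line combination and the lemma follows.
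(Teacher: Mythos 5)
Your proof takes essentially the same route as the paper's: apply Doob's Forward Convergence Theorem under the $L^1$-bound to obtain an a.s.\ finite limit $S^{(\infty)}$, then use the monotonicity of $S^{(t)}$ to promote that limit to a uniform pathwise bound $B_S := S^{(\infty)}$. You are in fact slightly more careful than the paper, which only says ``if $\{S^{(t)}\}$ is monotone increasing'' without verifying it, whereas you derive the non-negativity of the increments $\Delta^{(t)} \cdot I^{(t)}$ directly from Def.~\ref{def:ei}; the only loose point is your closing aside, where bounding each summand by $(\partial E_{\max}-\partial E_{\min})^2$ does not by itself cap the $t$-term sum by a $t$-independent constant — the deterministic bound instead comes from the fact that $S^{(t)}$ telescopes to the most recently accepted $\mathrm{Var}[\partial E^{(t)}]$, which Lem.~\ref{lem:gvar_bound} bounds.
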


Expand on Lem.~\ref{lem:submg_bound}, Thm.~\ref{thm:hittime} shows the expected hitting time $\mathbb{E}[T_b]$ of a bounded submartingale, ensuring that our framework will converge to a desired solution within a finite number of iterations.

\begin{theorem}[Expected Hitting Time of a Bounded Submartingale]
Let $\delta = \nicefrac{p}{(poly(N,L)K)} > 0$ with $p$ defined in Lem.~\ref{lem:submg}. Let $T_b$ be the hitting time of a bounded submartingale $\{ S^{(t)} \}_{t \ge 1}$, where $S^{(t)} \leq B_S$ almost surely, for $\forall\ t \in \mathbb{Z}^+$ (by Lem.~\ref{lem:submg_bound}). We define the hitting time as:
$ T_b = \inf\left\{t \in \mathbb{Z}^+ : S^{(t)} = b \right\},$
for some threshold $b \leq B_S$ such that the set is non-empty almost surely. Then the expected hitting time satisfies:
\vspace{-1mm}
\[
\mathbb{E}[T_b] \leq \frac{b}{\delta} = \frac{b \cdot poly(N,L) \cdot K}{p}.
\]
\label{thm:hittime}
\vspace{-3mm}
\end{theorem}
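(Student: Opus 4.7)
The strategy is to reduce the bound on $\mathbb{E}[T_b]$ to a one-step drift estimate for $\{S^{(t)}\}$, combined with an application of Doob's Optional Stopping Theorem to a compensated version of the process. First, I would establish a uniform, strictly positive lower bound on the conditional expected increment of $S^{(t)}$. By Def.~\ref{def:ei}, the increment is $S^{(t)} - S^{(t-1)} = \Delta^{(t)} I^{(t)}$, and by construction (see Lem.~\ref{lem:submg}) whenever $I^{(t)} = 1$ it is forced that $\Delta^{(t)} \ge 1/(poly(N,L)\,K)$. This yields the pointwise inequality $\Delta^{(t)} I^{(t)} \ge I^{(t)}/(poly(N,L)\,K)$, and since $I^{(t)} \sim \mathrm{Bernoulli}(p)$ is, by hypothesis, independent of $\mathcal{F}^{(t-1)}$, we obtain
\[
\mathbb{E}\!\left[S^{(t)} - S^{(t-1)} \,\middle|\, \mathcal{F}^{(t-1)}\right] \;\ge\; \frac{p}{poly(N,L)\, K} \;=\; \delta.
\]

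I would then introduce the compensated process $N^{(t)} := S^{(t)} - \delta\, t$. The drift estimate immediately gives $\mathbb{E}[N^{(t)} - N^{(t-1)} \mid \mathcal{F}^{(t-1)}] \ge 0$, so $\{N^{(t)}\}$ is a submartingale adapted to $\{\mathcal{F}^{(t)}\}$ with $N^{(0)} = 0$, and it is integrable because the uniform bound $S^{(t)} \le B_S$ from Lem.~\ref{lem:submg_bound} yields $|N^{(t)}| \le B_S + \delta\, t < \infty$. For any fixed $n \in \mathbb{Z}^+$, the truncated stopping time $T_b \wedge n$ is bounded by $n$, so Doob's Optional Stopping Theorem (Thm.~\ref{thm:ost}, condition (i)) gives $\mathbb{E}[N^{(T_b \wedge n)}] \ge \mathbb{E}[N^{(0)}] = 0$, i.e.,
\[
\delta\, \mathbb{E}[T_b \wedge n] \;\le\; \mathbb{E}\!\left[S^{(T_b \wedge n)}\right].
\]
Because every increment $\Delta^{(t)} I^{(t)}$ is non-negative, $\{S^{(t)}\}$ is monotonically non-decreasing, so by the definition of $T_b$ one has $S^{(T_b \wedge n)} \le b$ almost surely. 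Combining the two inequalities yields $\delta\, \mathbb{E}[T_b \wedge n] \le b$.

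To finish, note that $T_b \wedge n \uparrow T_b$ as $n \to \infty$, so the Monotone Convergence Theorem gives $\mathbb{E}[T_b \wedge n] \to \mathbb{E}[T_b]$, and consequently
\[
\mathbb{E}[T_b] \;\le\; \frac{b}{\delta} \;=\; \frac{b \cdot poly(N,L) \cdot K}{p},
\]
which is the claimed bound. The hypothesis that the hitting set $\{t : S^{(t)} = b\}$ is non-empty almost surely ensures $T_b < \infty$ on a full-measure set, so the limiting identity is meaningful; in fact, the derived finiteness of $\mathbb{E}[T_b]$ reconfirms this a posteriori.

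The main obstacle will lie in the drift step: because $I^{(t)}$ is defined by thresholding $\Delta^{(t)}$, the two random variables are strongly dependent, and one cannot simply factor $\mathbb{E}[\Delta^{(t)} I^{(t)}]$ as a product. The correct route is to pass first through the deterministic pointwise inequality $\Delta^{(t)} I^{(t)} \ge I^{(t)}/(poly(N,L)K)$, and only afterwards exploit the independence of $I^{(t)}$ from $\mathcal{F}^{(t-1)}$ supplied by Lem.~\ref{lem:submg}. A secondary subtlety is that $S^{(t)}$ is real-valued and could in principle overshoot $b$, which would leave $T_b$ undefined; this is circumvented by the theorem's standing assumption that the hitting set is non-empty almost surely, combined with the monotonicity of $S^{(t)}$, which guarantees $S^{(T_b \wedge n)} \le b$ and keeps the truncation-plus-OST argument intact.
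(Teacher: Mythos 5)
Your proposal is correct and follows essentially the same route as the paper's proof: both establish the one-step drift bound $\mathbb{E}[\Delta^{(t)} I^{(t)}] \ge \delta$, form the compensated submartingale $S^{(t)} - \delta t$, apply Doob's Optional Stopping Theorem to the truncated stopping time $T_b \wedge n$, and pass to the limit. Your handling of the final limit (bounding $S^{(T_b \wedge n)} \le b$ pointwise via monotonicity and invoking monotone convergence for $\mathbb{E}[T_b \wedge n]$) is a slightly cleaner variant of the paper's dominated-convergence step, but it is not a substantively different argument.
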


\begin{figure*}[t]
  \begin{subfigure}{0.5\linewidth}
    \centering  
    \includegraphics[width=0.49\textwidth]{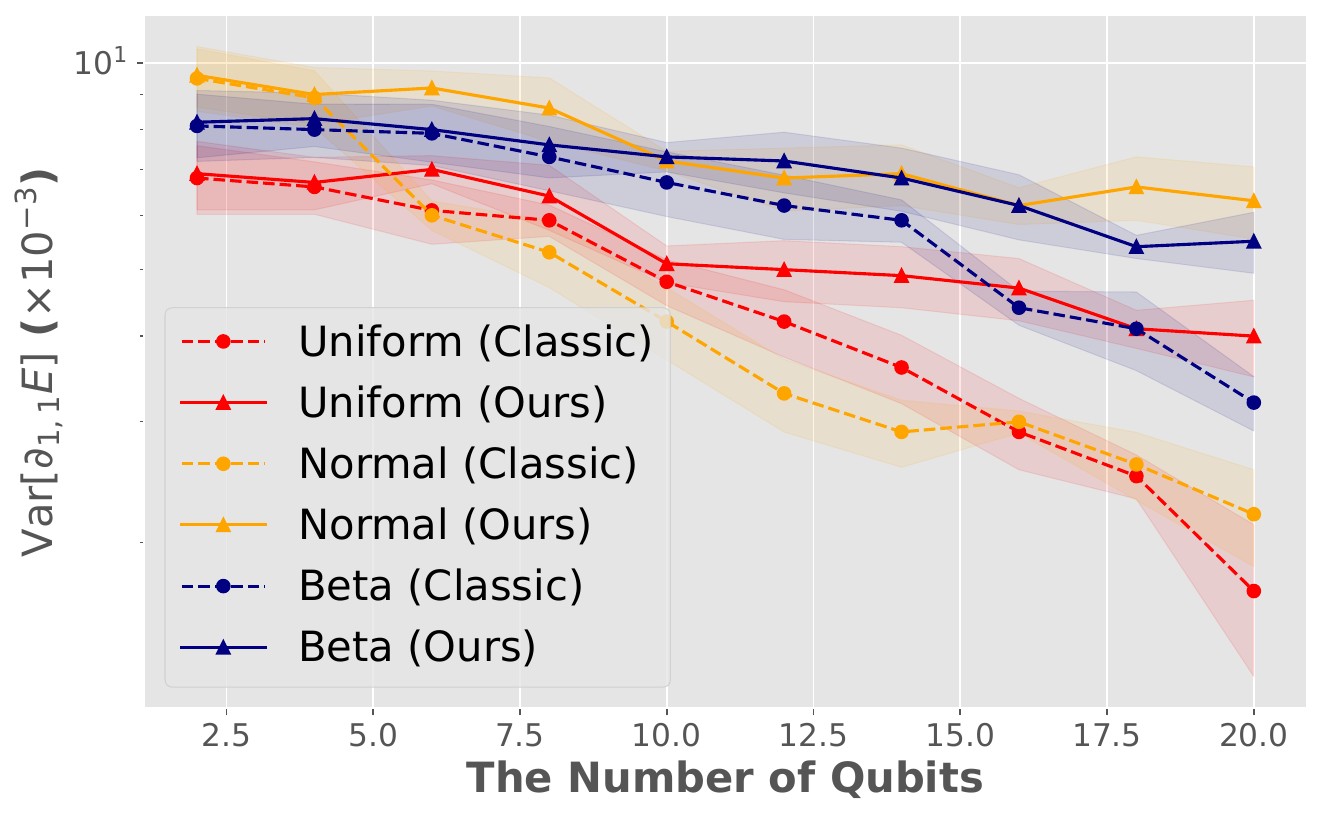}
    \includegraphics[width=0.49\textwidth]{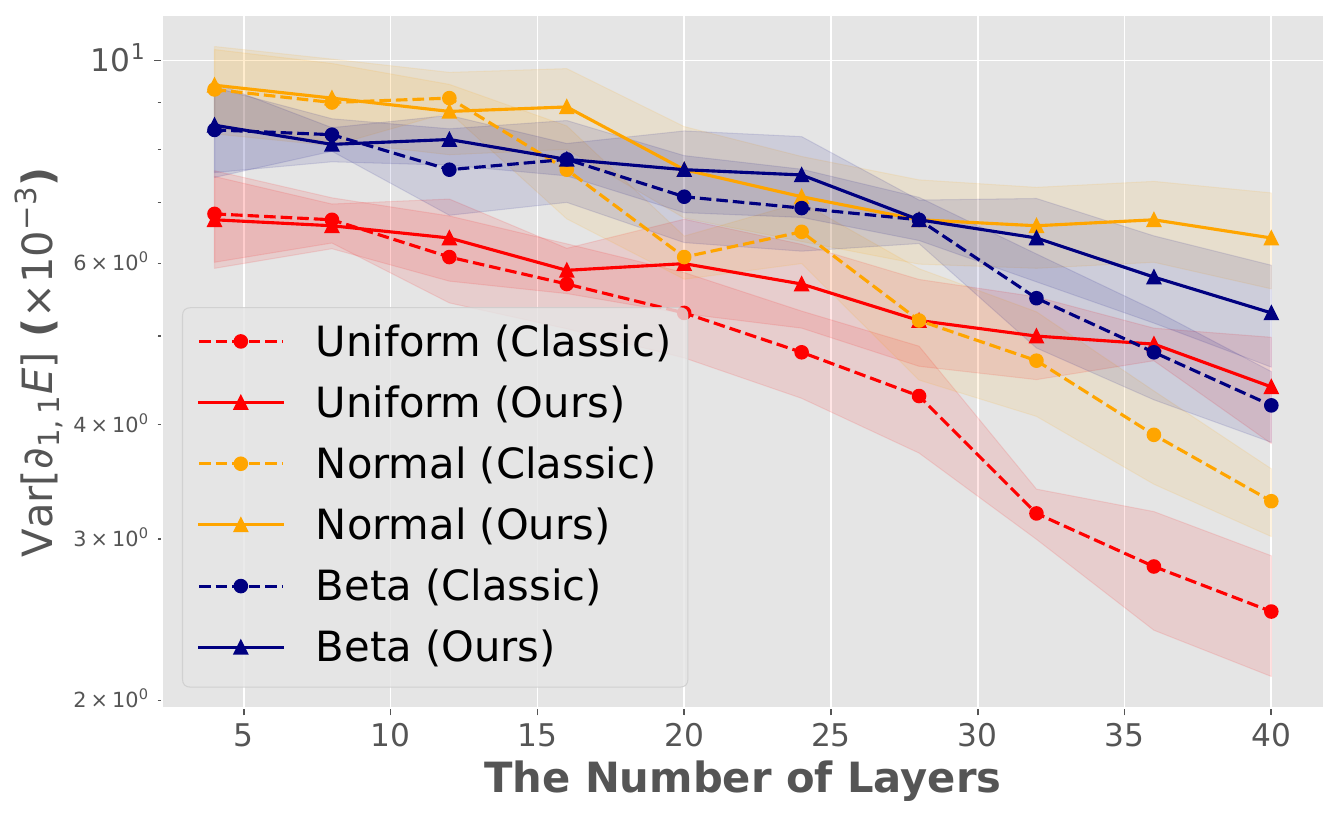}
    \caption{Iris}
  \end{subfigure}
  \begin{subfigure}{0.5\linewidth}
    \centering  
    \includegraphics[width=0.49\textwidth]{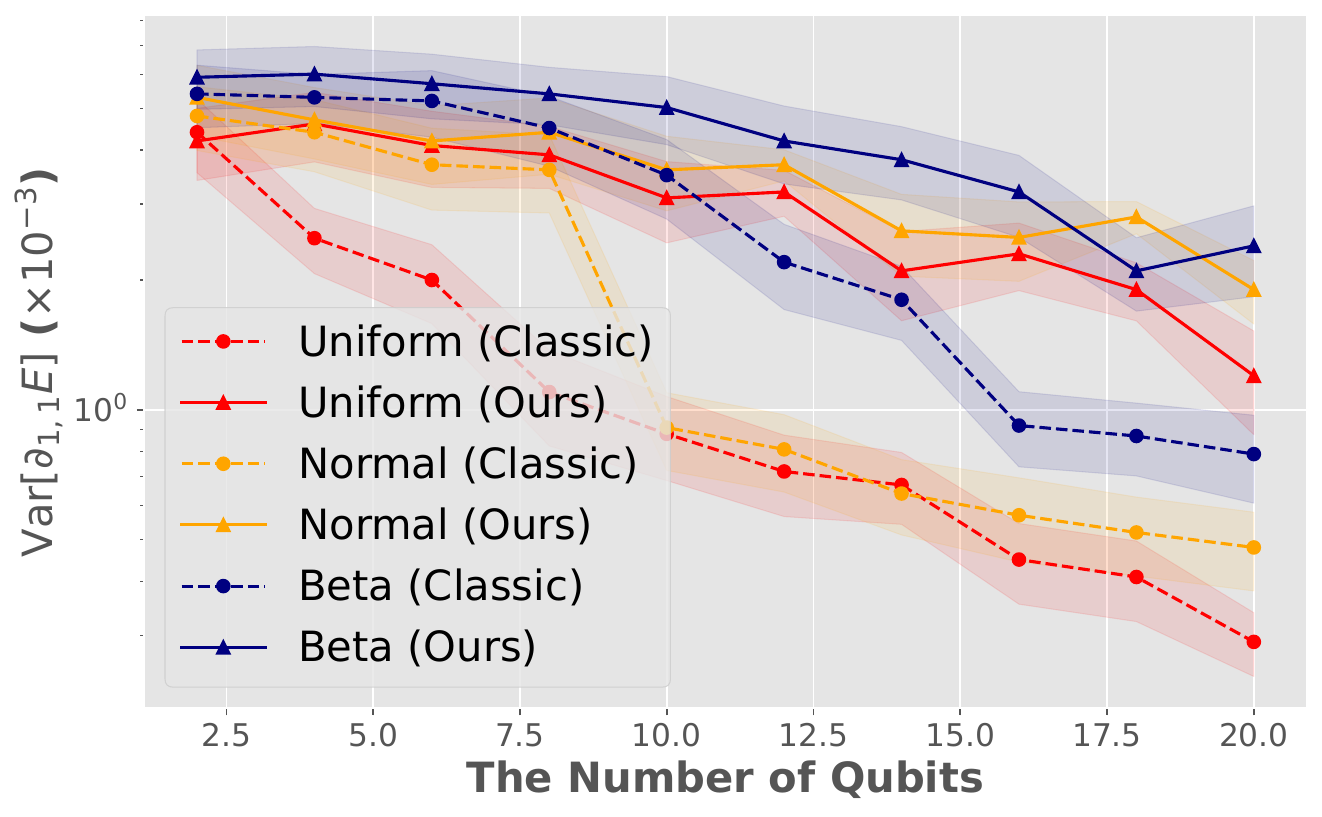}
    \includegraphics[width=0.49\textwidth]{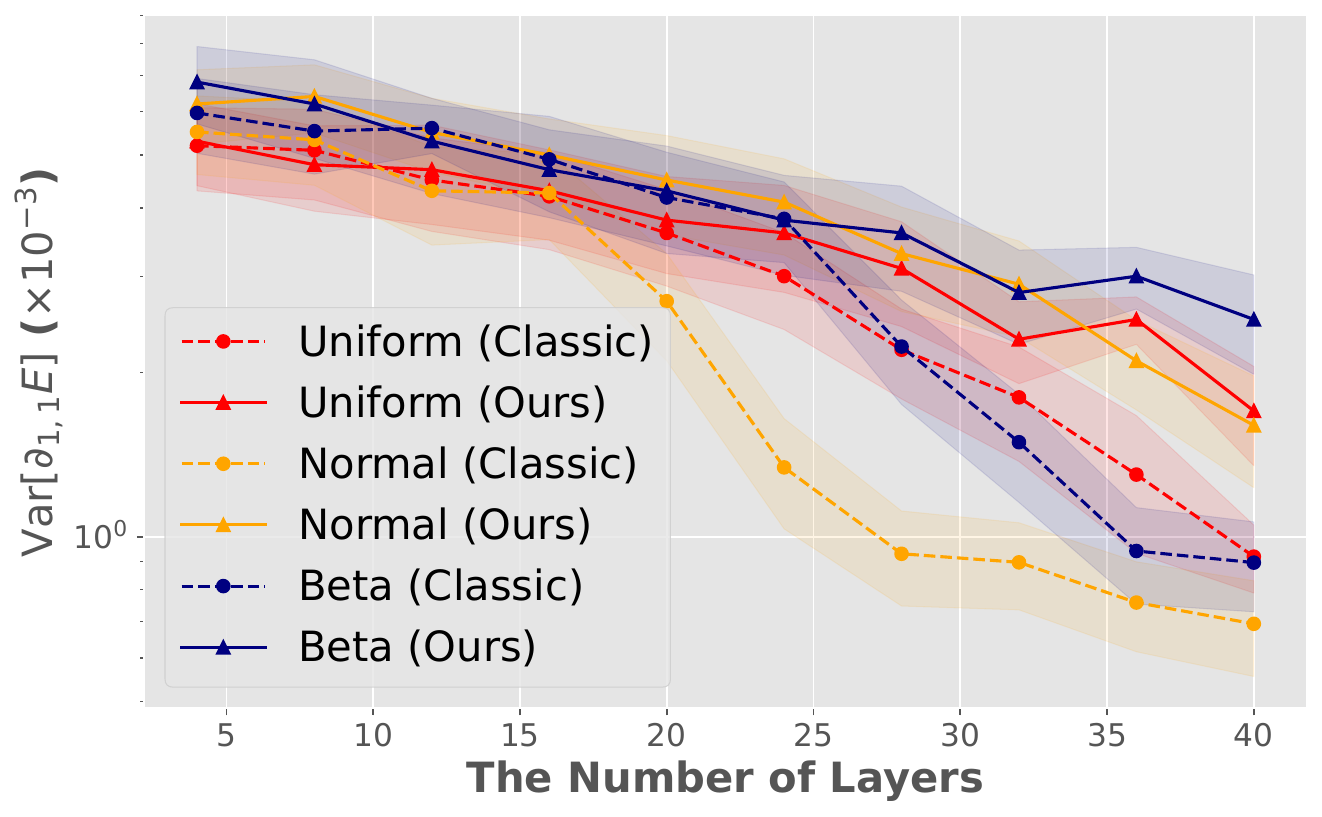}
    \caption{Wine}
  \end{subfigure}
  \begin{subfigure}{0.5\linewidth}
    \centering  
    \includegraphics[width=0.49\textwidth]{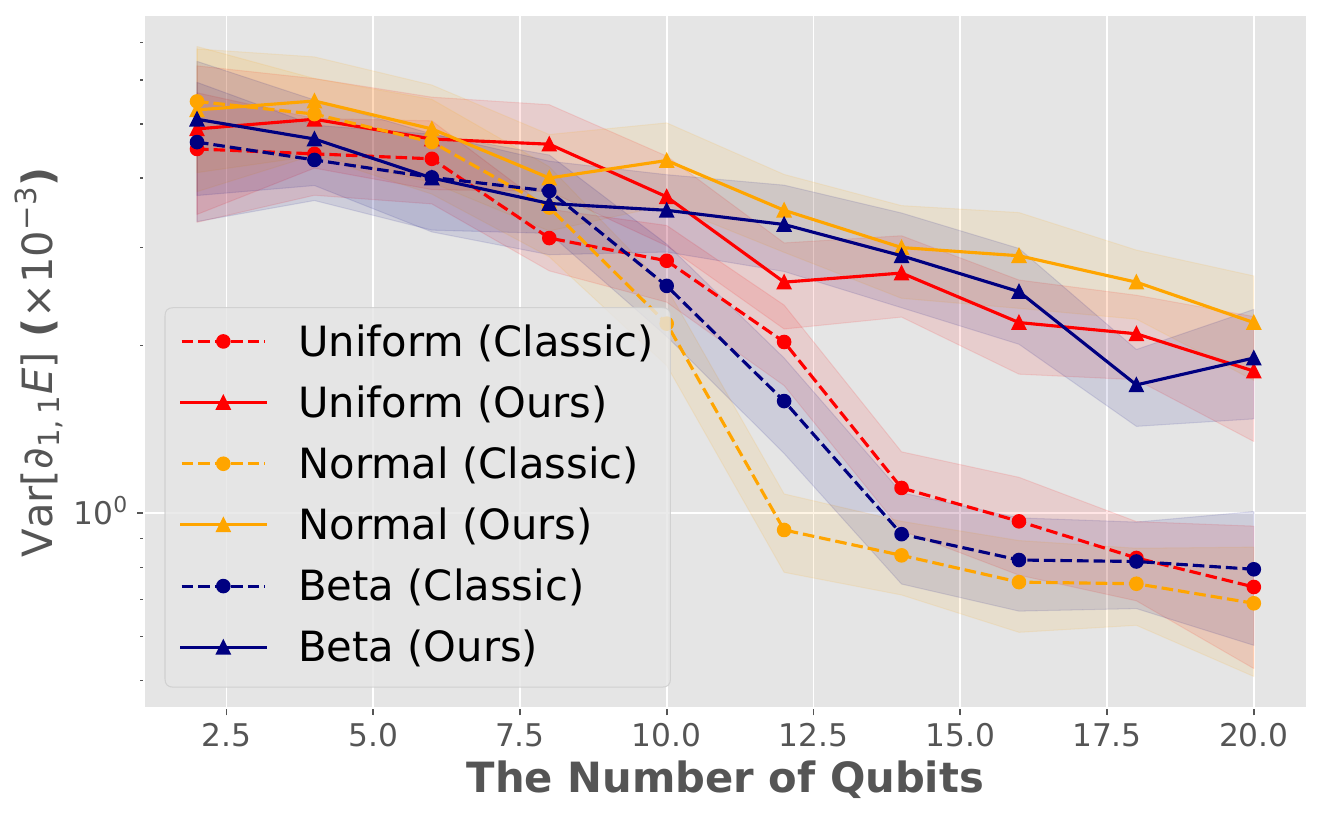}
    \includegraphics[width=0.49\textwidth]{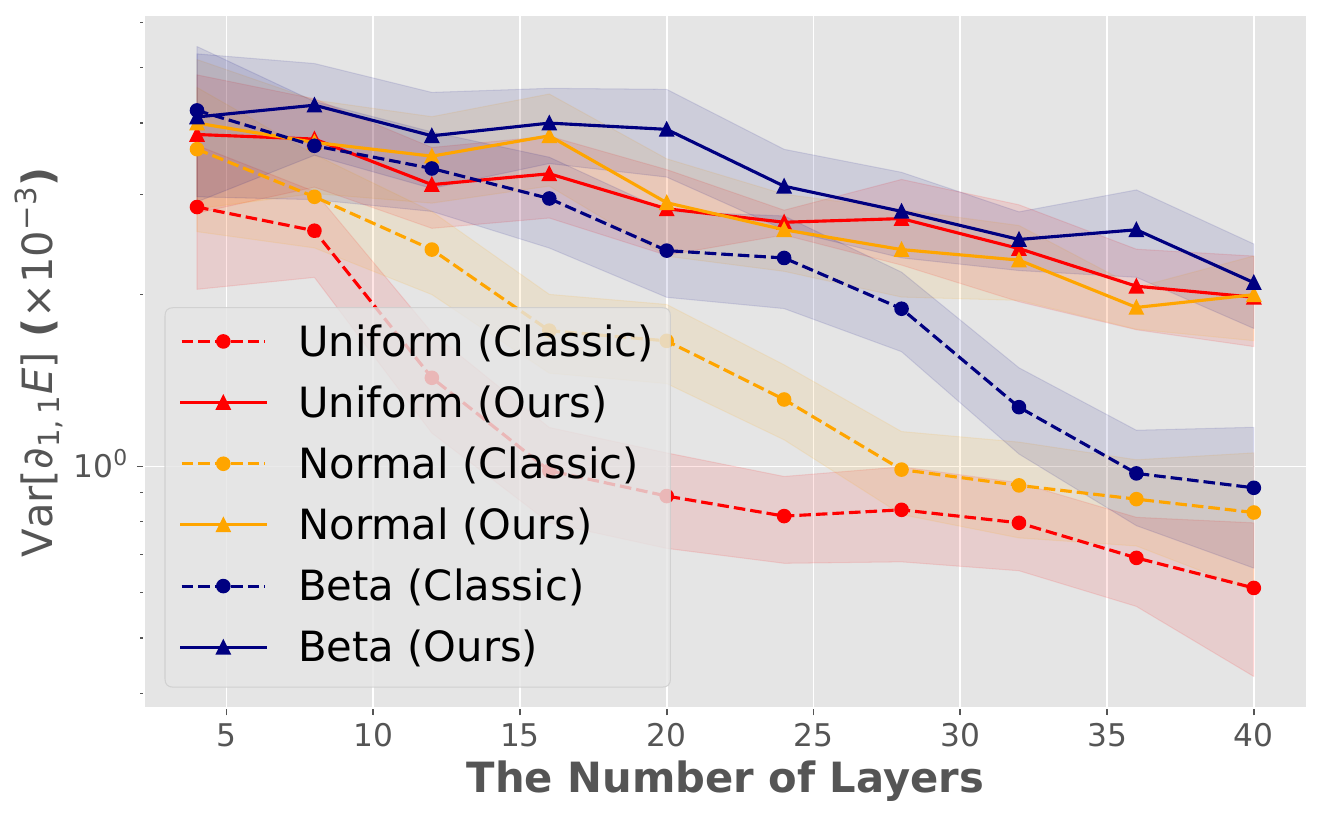}
    \caption{Titanic}
  \end{subfigure}
  \begin{subfigure}{0.5\linewidth}
    \centering  
    \includegraphics[width=0.49\textwidth]{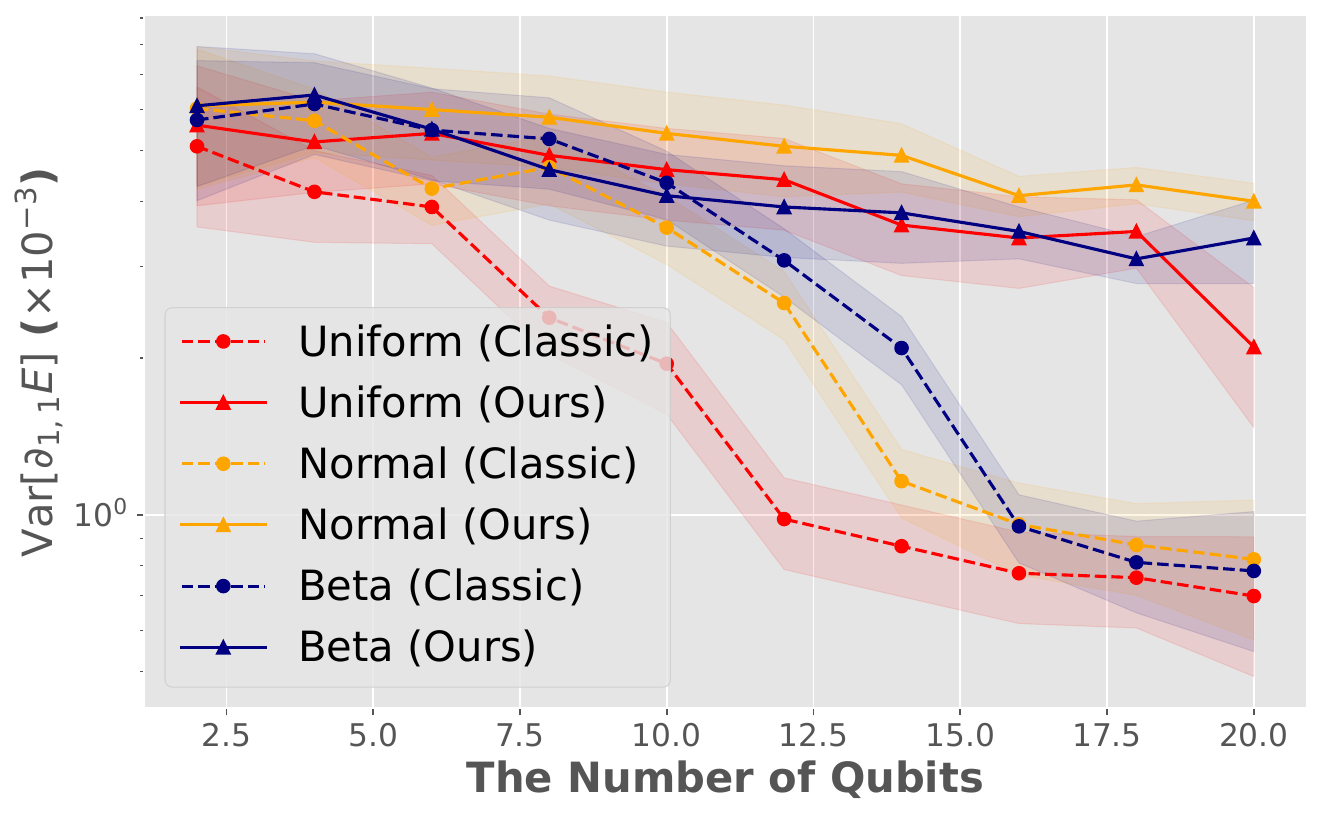}
    \includegraphics[width=0.49\textwidth]{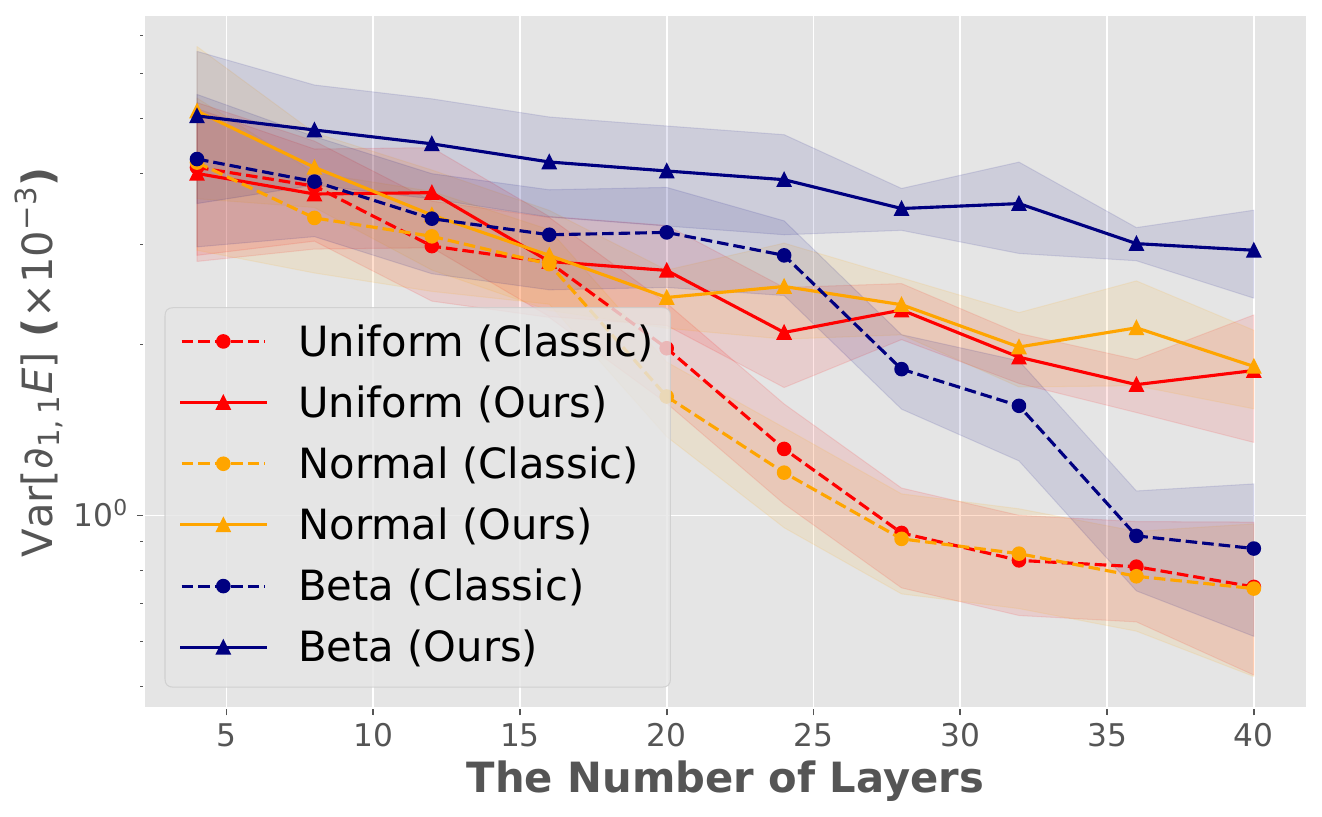}
    \caption{MNIST}
  \end{subfigure}
\caption{Gradient variance trends of \qnn\ parameters across varying qubits and layers: Comparing classic initializations (Uniform, Normal, Beta) with our proposed framework.}
\label{fig:dist}
\end{figure*}

With this theorem, it is straightforward to derive the results for two meaningful cases: (i) $b = \nicefrac{1}{poly(N, L) }$; and (ii) $b = B_S$. This is summarized as Cor.~\ref{cor:2thlds} in the {\bf Appendix~\ref{app:proofs}}.
(i) When the threshold $b = \nicefrac{1}{poly(N, L) }$, the expected hitting time satisfies $\mathbb{E}[T_b] \leq \nicefrac{K}{p}$, where the probability $p$ is defined in Lem.~\ref{lem:submg}. Note that $K$ determines the desired threshold on increment $\Delta^{(t)}$. This is related to the generative power provided by the model $f$, which is manifested in Lem.~\ref{lem:submg}. Concretely, if the model $f$ can help return a wanted $\Delta^{(t)}$ with greater $\nicefrac{1}{(poly(N,L)K)}$ (i.e. smaller $K$) and better probability (i.e. $p$), then the expected stopping iteration $T$ is getting smaller.
(ii) When $b = B_S$, i.e., the supremum of the submartingale, $\mathbb{E}[T_b] \leq \nicefrac{B_S \cdot poly(N, L) \cdot K}{p}$. This suggests that even when iteratively generating more effective initial parameters for \qnns\ (can yield higher gradient variance), our framework can still converge within a tractable (polynomial) number of iterations if $B_s = \mathcal{O}(poly(N,L))$.
Both cases indicate that our method is not only theoretically grounded but also robust in generating meaningful initializations for \qnns\ under different optimization goals. 

Besides, we discuss extreme cases when $p$ is negligible, e.g., $p = \mathcal{O}(2^{-N})$. One such case involves ansatz-induced BPs~\citep{holmes2022connecting}, where initialization-based methods fail in this scenario. Another case arises when \llms\ repeatedly generate identical ineffective outputs. The former case could be addressed by modifying the \qnn\ architecture~\citep{holmes2022connecting}, while the latter one can be solved by adjusting hyperparameters, like temperature or top-p, to enhance output diversity~\citep{achiam2023gpt}.

\section{Experiment}
\label{sec:exp}
In this section, we first introduce the experimental settings and further present our results in detail.

\paragraph{Experimental settings.}
We evaluate our proposed method across four public datasets, {\bf Iris}, {\bf Wine}, {\bf Titanic}, and {\bf MNIST}.
We present the dataset statistics and settings in the {\bf Appendix~\ref{app:exp}}.
In the experiment, we analyze the trend of gradient variance by varying the number of qubits, ranging from 2 to 20 in increments of 2 (fixed 2 layers), and the number of layers, spanning from 4 to 40 in steps of 4 (fixed 2 qubits). To obtain reliable results, we repeat the experiments five times and present them as curves (mean) with their bandwidth (standard deviation).
Overall, our framework can generate effective model parameters within 50 iterations. In each iteration, we employ an Adam optimizer with a learning rate of 0.01 and a batch size of 20 to train a \qnn\ with 30 epochs and compute the gradient variance. After training, we compute the expected improvement (EI) and compare it with an assumed lower bound, $\nicefrac{1}{(poly(N, L)K)}$, in each iteration, where we set $K = T$ in the experiments. We empirically choose the lower bound by $\nicefrac{1}{(T\cdot N^6)}$ ({\bf Appendix~\ref{app:exp}}) and apply it for all cases, as we observe in Fig.~\ref{fig:dist} that the magnitudes of gradient variances are comparable across all datasets.
For evaluation, we follow~\citep{McClean2018landscapes} to assess BP by measuring the gradient variance during the early stage of \qnns' training. A higher gradient variance implies a lower BP risk.

\begin{figure}[h]
  \centering
  \includegraphics[width=\linewidth]{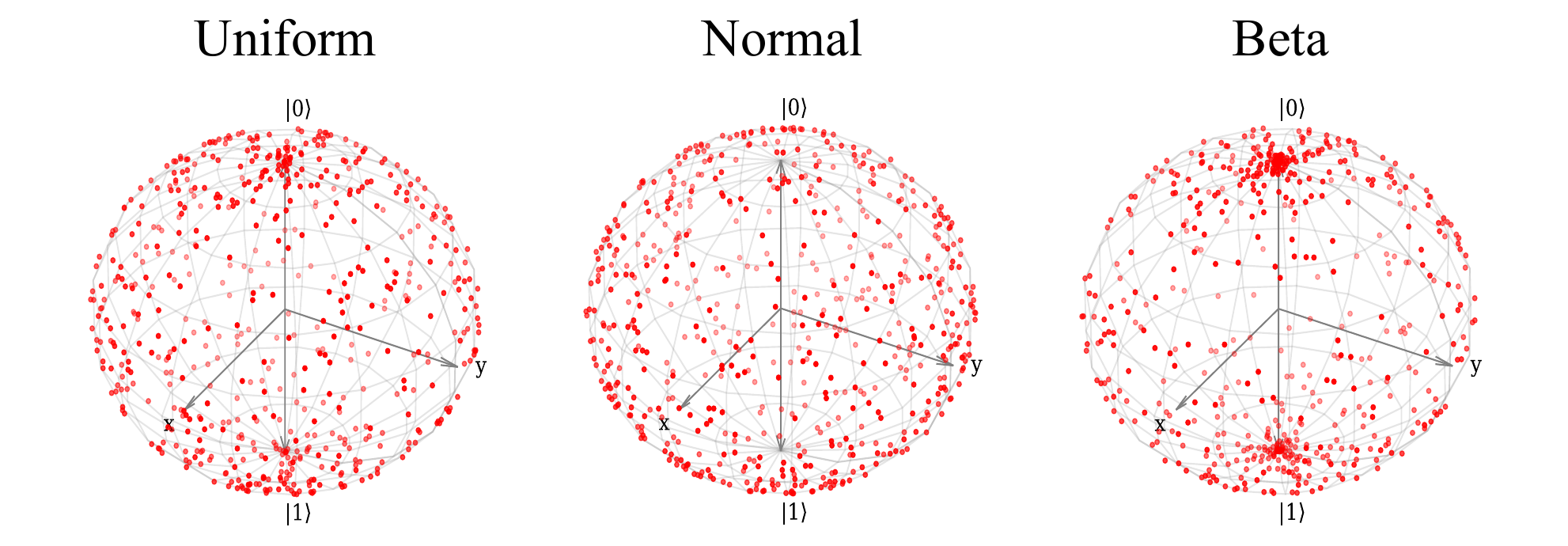}
  \caption{Example of three classic distributions commonly used for initialization. In the figure, the red dots represent the initial values of the model parameters.}
\label{fig:prior}
\end{figure}

\begin{figure*}[t]
  \begin{subfigure}{0.5\linewidth}
    \centering  
    \includegraphics[width=0.49\textwidth]{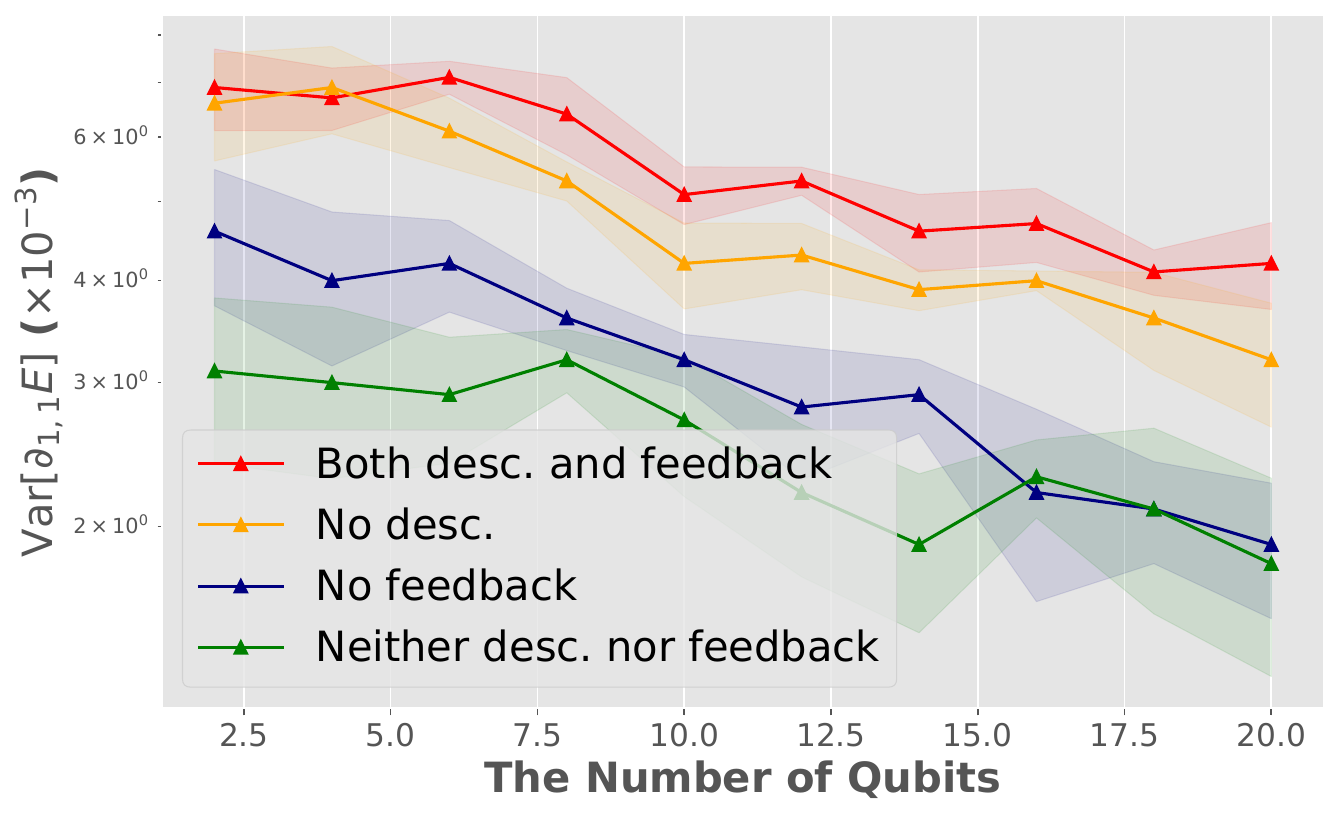}
    \includegraphics[width=0.49\textwidth]{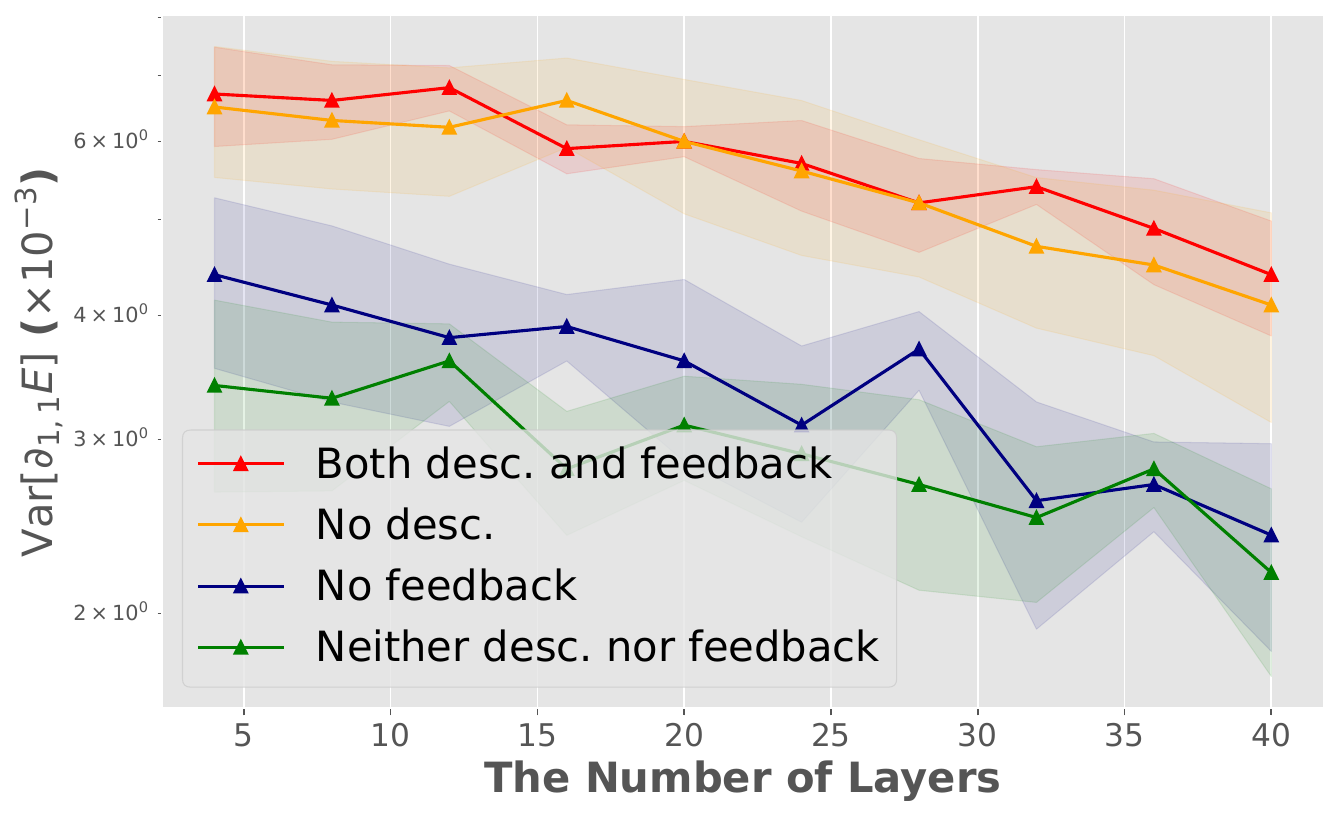}
    \caption{Iris}
  \end{subfigure}
  \begin{subfigure}{0.5\linewidth}
    \centering  
    \includegraphics[width=0.49\textwidth]{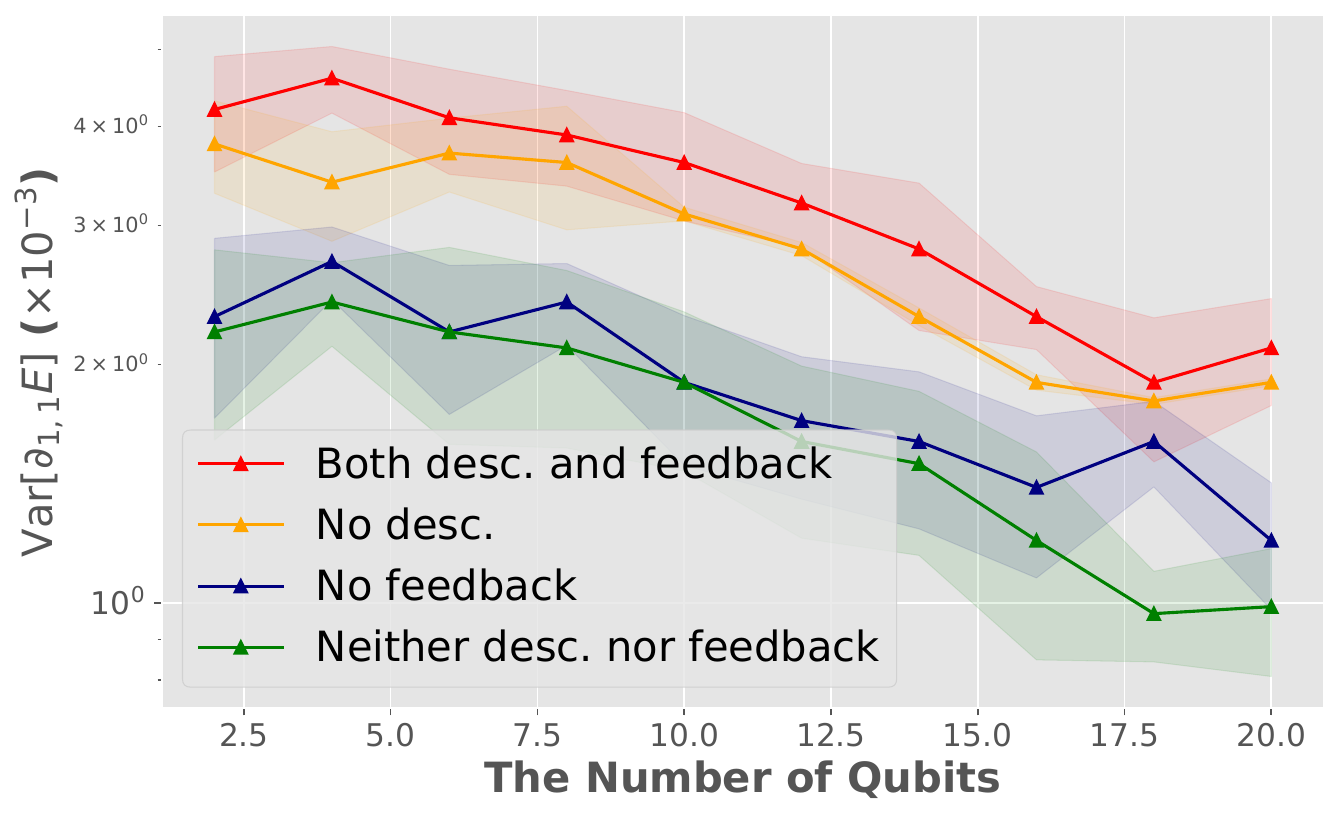}
    \includegraphics[width=0.49\textwidth]{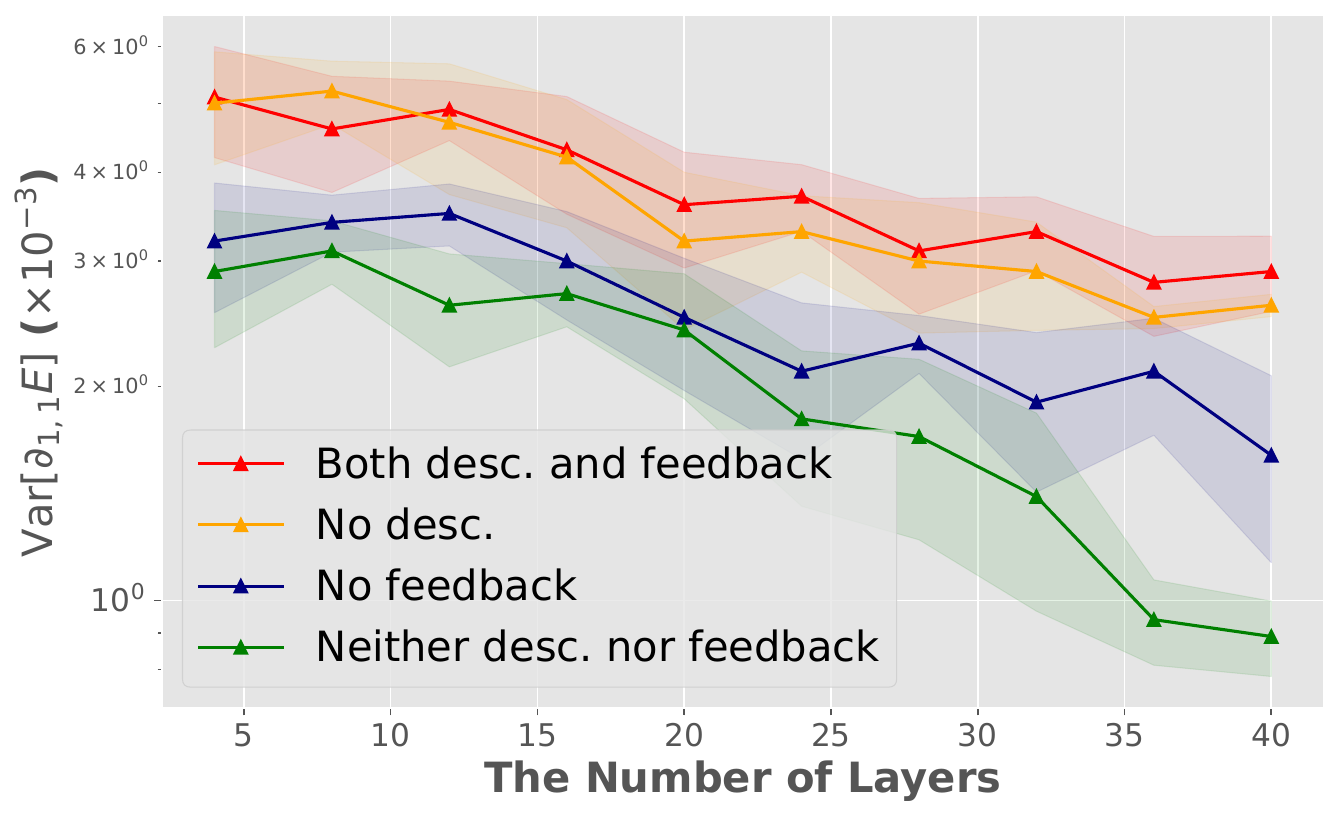}
    \caption{Wine}
  \end{subfigure}
  \begin{subfigure}{0.5\linewidth}
    \centering  
    \includegraphics[width=0.49\textwidth]{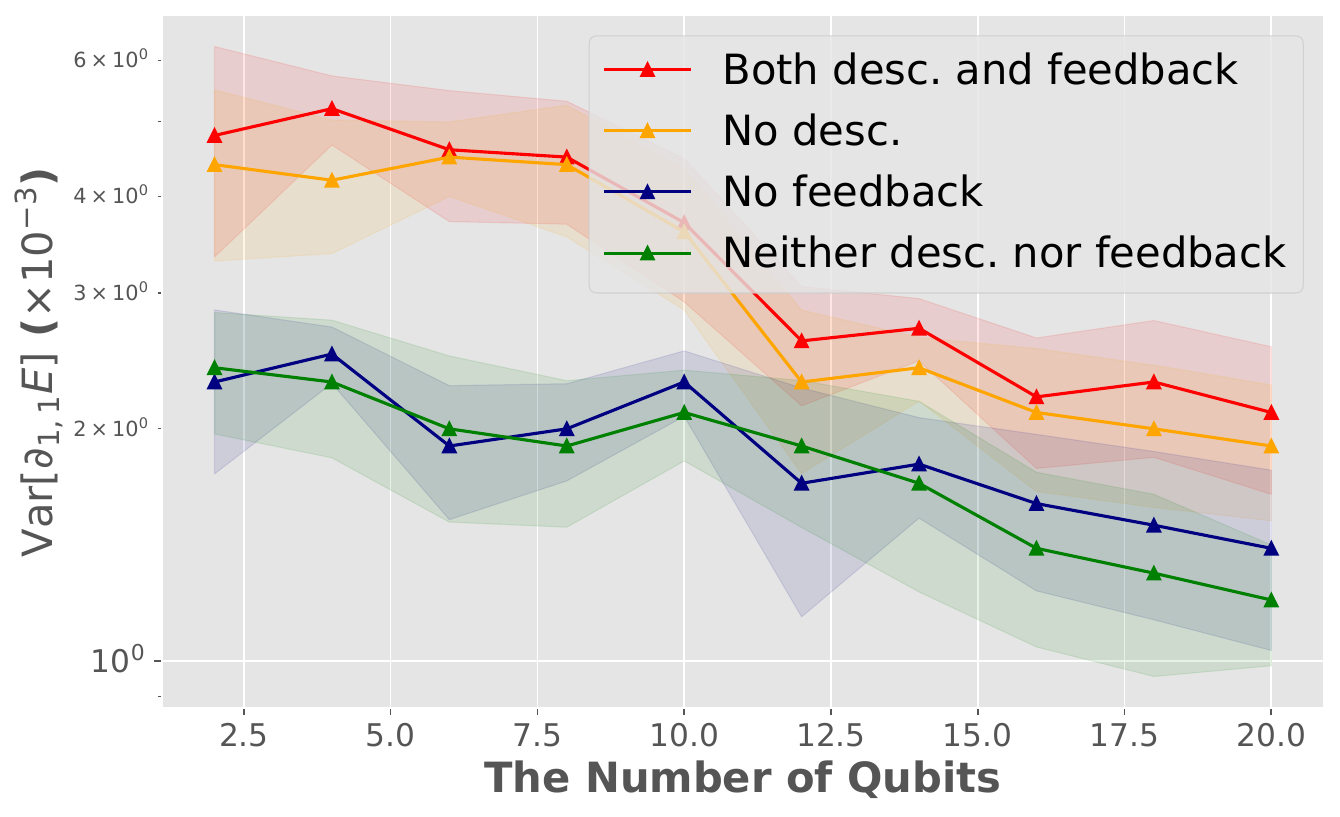}
    \includegraphics[width=0.49\textwidth]{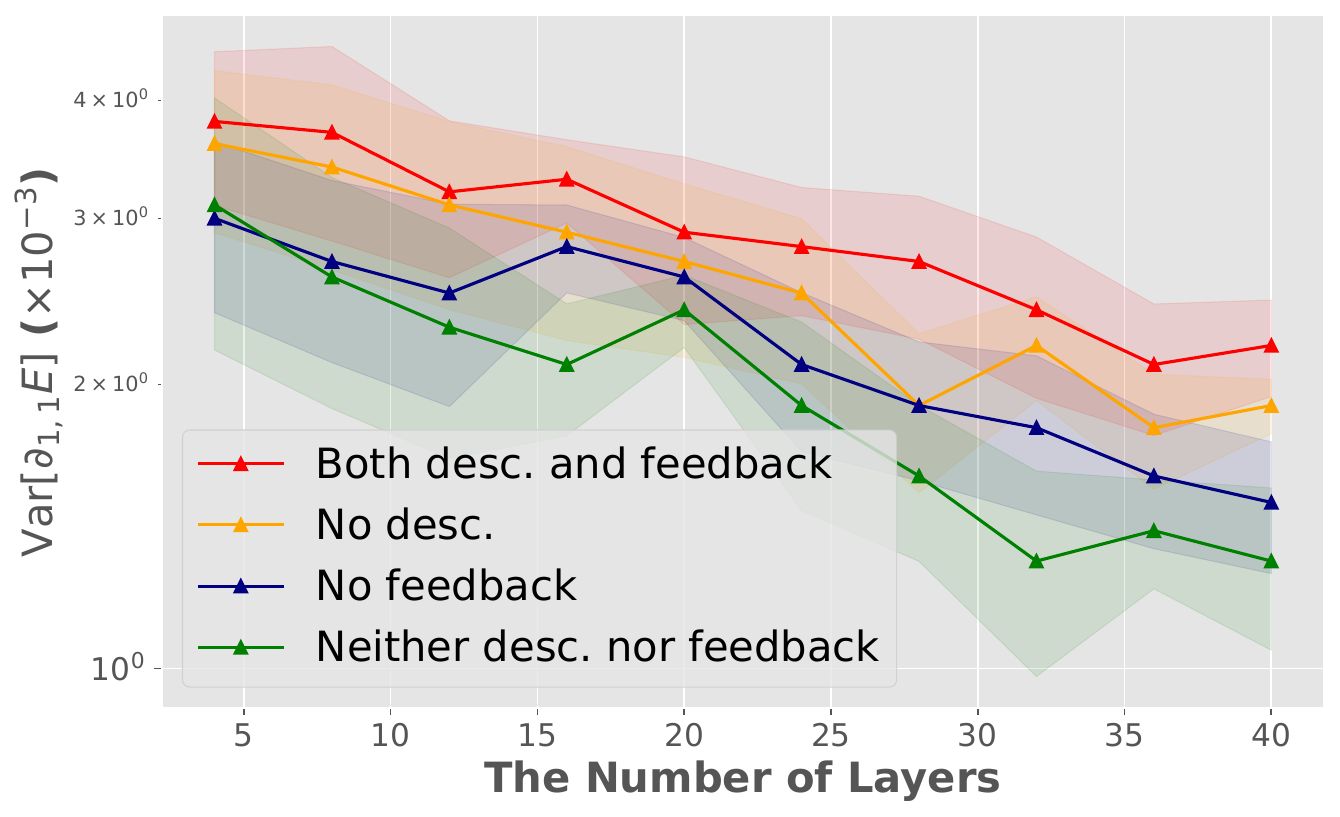}
    \caption{Titanic}
  \end{subfigure}
  \begin{subfigure}{0.5\linewidth}
    \centering  
    \includegraphics[width=0.49\textwidth]{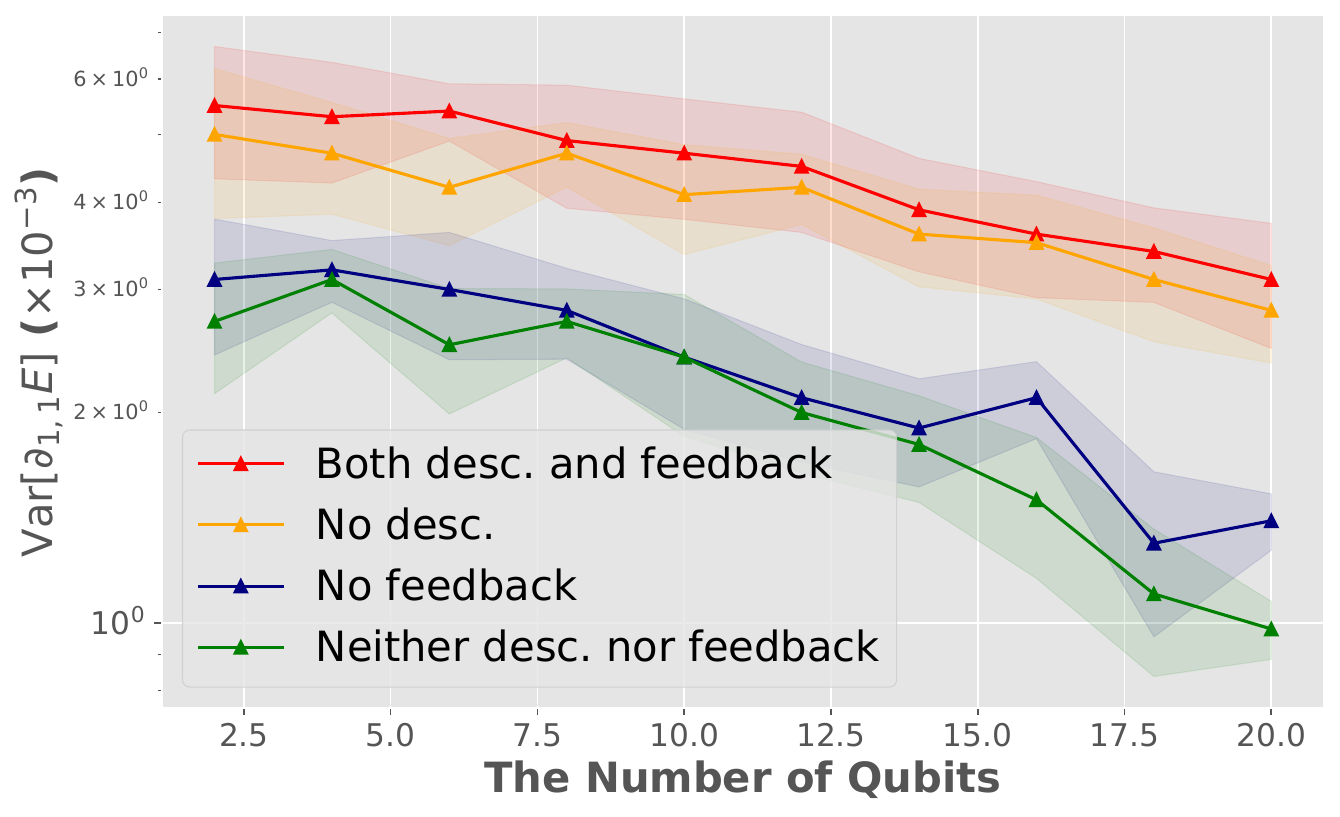}
    \includegraphics[width=0.49\textwidth]{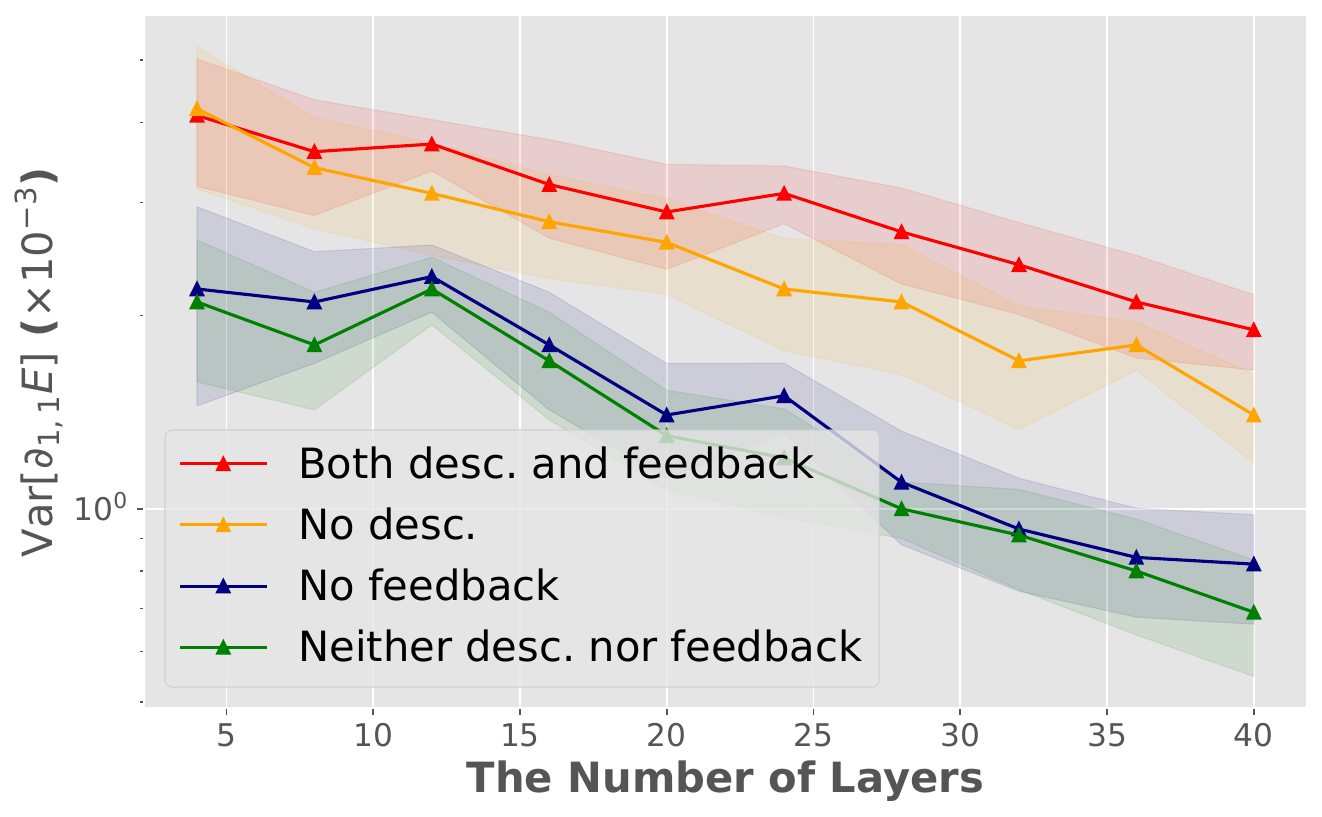}
    \caption{MNIST}
  \end{subfigure}
\caption{Analysis of prompts' impact, i.e., investigate whether data description (desc.) and gradient feedback (feedback) affect the gradient variance in the first element of \qnns' model parameters across different model structures, considering variations in the number of qubits and layers.}
\label{fig:prompts}
\vspace{-10pt}
\end{figure*}

\paragraph{Generating initial model parameters of \qnns\ using our framework can help mitigate BPs.}
We analyze gradient variance trends in the first element of \qnns' model parameters across varying qubit and layer settings for three classic initialization distributions, uniform, normal, and beta distributions, which are presented in Fig.~\ref{fig:prior} as examples.
For each initialization with classic distribution, we compare it (``Classic'') with our proposed methods (``Ours'').
As presented in Fig.~\ref{fig:dist}, we observe that in the case of using classic initialization, the gradient variance of \qnns\ will significantly decrease as the number of qubits or layers increases. Compared with it, our method can maintain higher variances, indicating that our framework can mitigate BPs better.
In the rest of the experiments, if there is no specific state, we adopt a uniform distribution as prior knowledge for posterior refinement.

\begin{figure}[t]
  \centering
  \includegraphics[width=\linewidth]{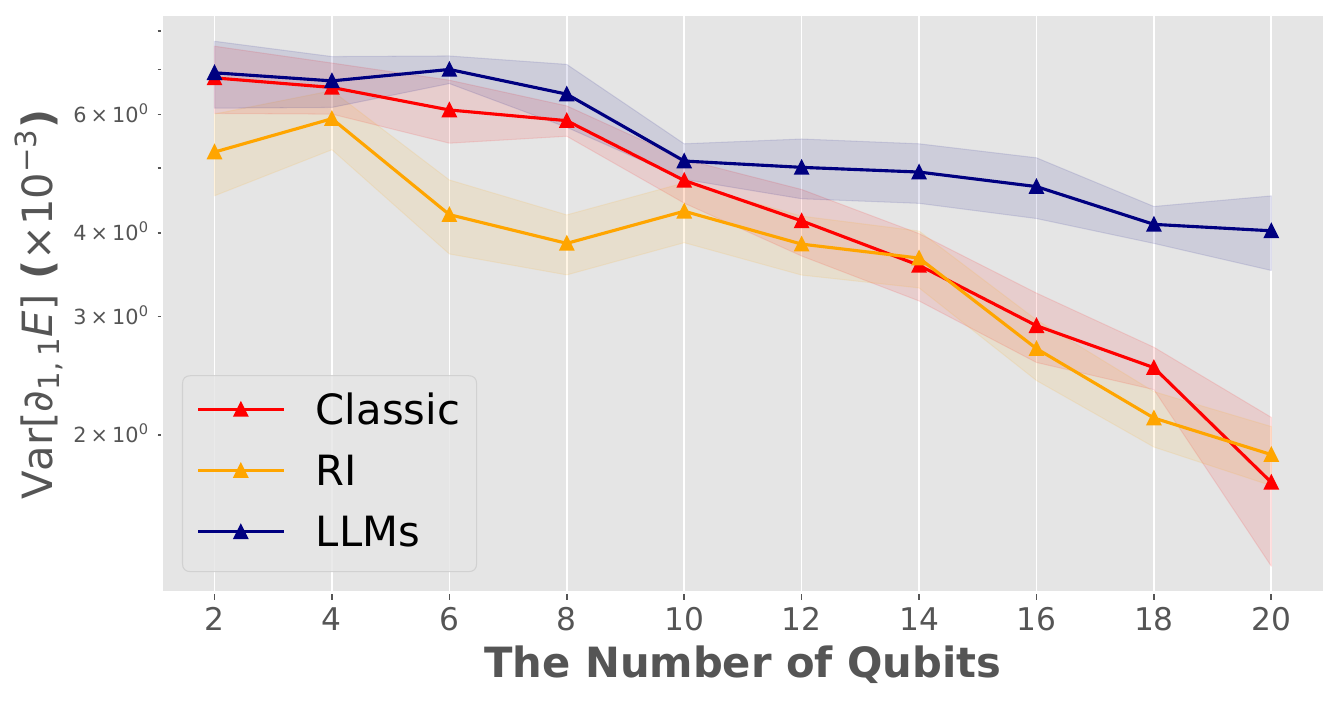}
  \caption{Comparison of model parameter initializations using a classic method, random initializer (RI), and \llms. All methods apply a uniform distribution.}
\label{fig:ri}
\end{figure}
\paragraph{Contribution of \llms.}
Besides comparing our framework with the classic method, we further investigate \llms' contribution to the initialization process on Iris. Specifically, we compare the generator within our framework when initialized using a random initializer (RI), which uniformly generates the parameters, versus using \llm-based uniform initialization (\llms). As shown in Fig.~\ref{fig:ri}, ``RI'' performs comparably to, or even worse than, the ``Classic'' method (uniform), as it generates random parameters without any guided refinement. In contrast, ``\llms'' consistently outperforms both baselines, achieving significantly higher gradient variance. These findings suggest that the \llm-driven generator can more effectively explore the parameter space and identify better initializations within a limited number of iterations.

\begin{table}[t]
\small
\centering
\setlength{\tabcolsep}{2pt}
\begin{tabular}{ccc}
\toprule
{\llms} & {Acc.} & {Max i/o} \\
\midrule
GPT-4o & 100\% & 128,000 / 4,000 \\
GPT-4o Mini & 85\% & 128,000 / 16,000 \\
Gemini 1.5 Flash & 75\% & 1,000,000 / 8,000 \\
Gemini 1.5 Pro & 90\% & 2,000,000 / 8,000 \\
Claude 3.5 Sonnet & 100\% & 200,000 / 8,000 \\
LLaMA 3 70B Instruct & 0\% & 8,000 / 2,000 \\
LLaMA 3 405B Instruct & 50\% & 128,000 / 2,000 \\
\bottomrule
\end{tabular}
\caption{Comparison of initial parameters' generation by accuracy (Acc.) via GPT~\citep{hurst2024gpt}, Gemini~\citep{team2024gemini}, Claude~\citep{anthropic2024claude35sonnet}, and LLaMA~\citep{grattafiori2024llama}.}
\label{tab:llm}
\end{table}
\paragraph{Comparison of generative performance using \llms.} In our method, \qnn\ parameters are generated by \llms. We compare the generative performance under varying \qnn\ structures, such as different numbers of qubits or layers. Specifically, we primarily evaluate whether the correct size of model parameters can be generated by testing 20 combinations in accuracy, fixing either 2 layers while varying qubits from 2 to 20, or 2 qubits while varying layers from 4 to 40. As shown in Tab.~\ref{tab:llm}, the results indicate that both GPT-4o and Claude 3.5 Sonnet can achieve 100\% accuracy in generating the correct shapes of model parameters. Considering 4K output tokens are sufficient for our settings, we mainly use GPT-4o as the backbone \llms. We provide additional observation in the {\bf Appendix~\ref{app:exp}}.

\paragraph{Investigation of prompts.} We examine whether the content of prompts influences generative performance. In the experiments, we tested four prompting scenarios: (i) Including both data description and gradient feedback in prompts (Both desc. and feedback), (ii) Including gradient feedback only (No desc.), (iii) Including data description only (No feedback), (iv) Including neither data description nor gradient feedback (Neither desc. nor feedback). As the results presented in Fig.~\ref{fig:prompts}, we observe that suppressing either dataset description or gradient feedback in the prompts leads to a reduction in the gradient variance of \qnns. Notably, the reduction is more significant in most cases when gradient feedback is muted compared to the dataset description, suggesting that both factors play a crucial role in mitigating BPs, with gradient feedback contributing significantly more.

\begin{figure}[t]
  \begin{subfigure}{1.0\linewidth}
    \centering
    \includegraphics[width=0.49\textwidth]{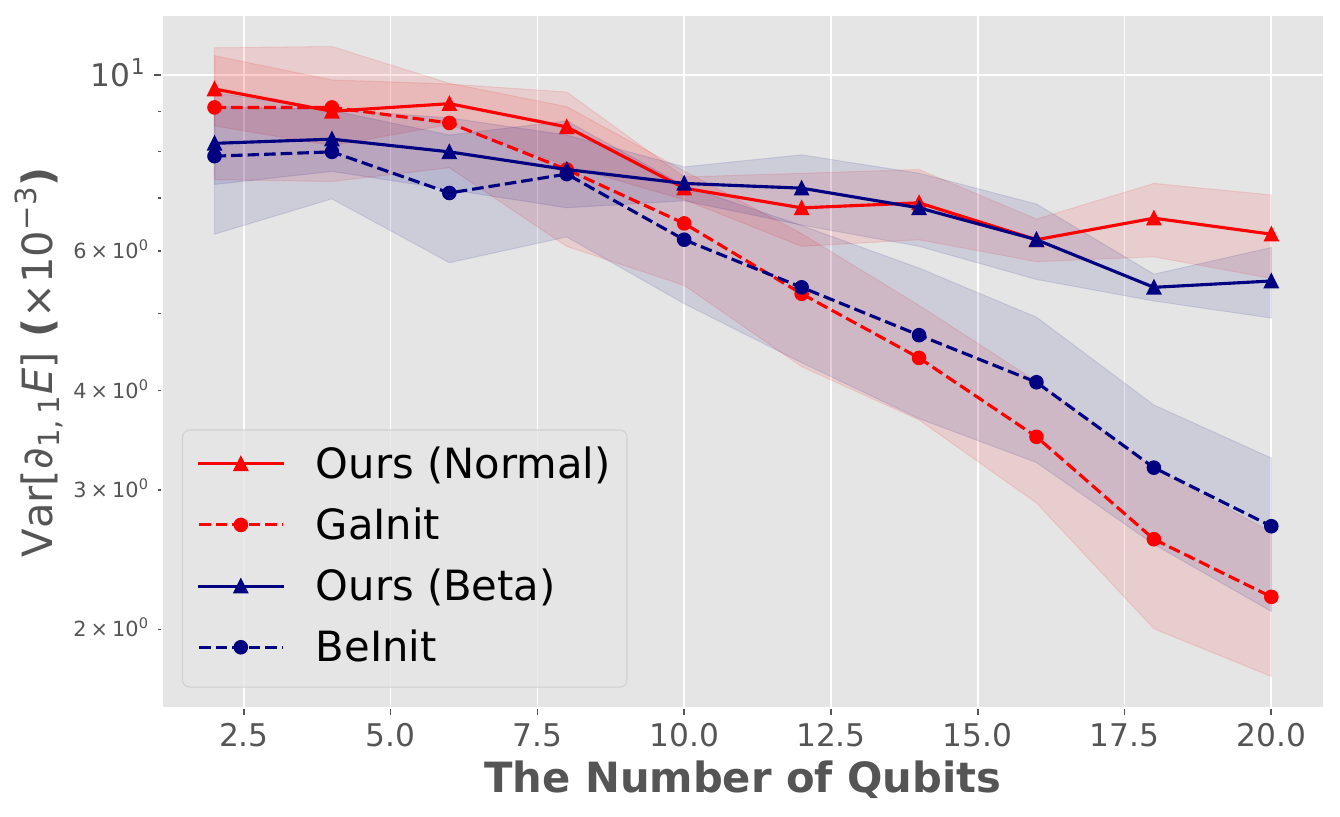}
    \includegraphics[width=0.49\textwidth]{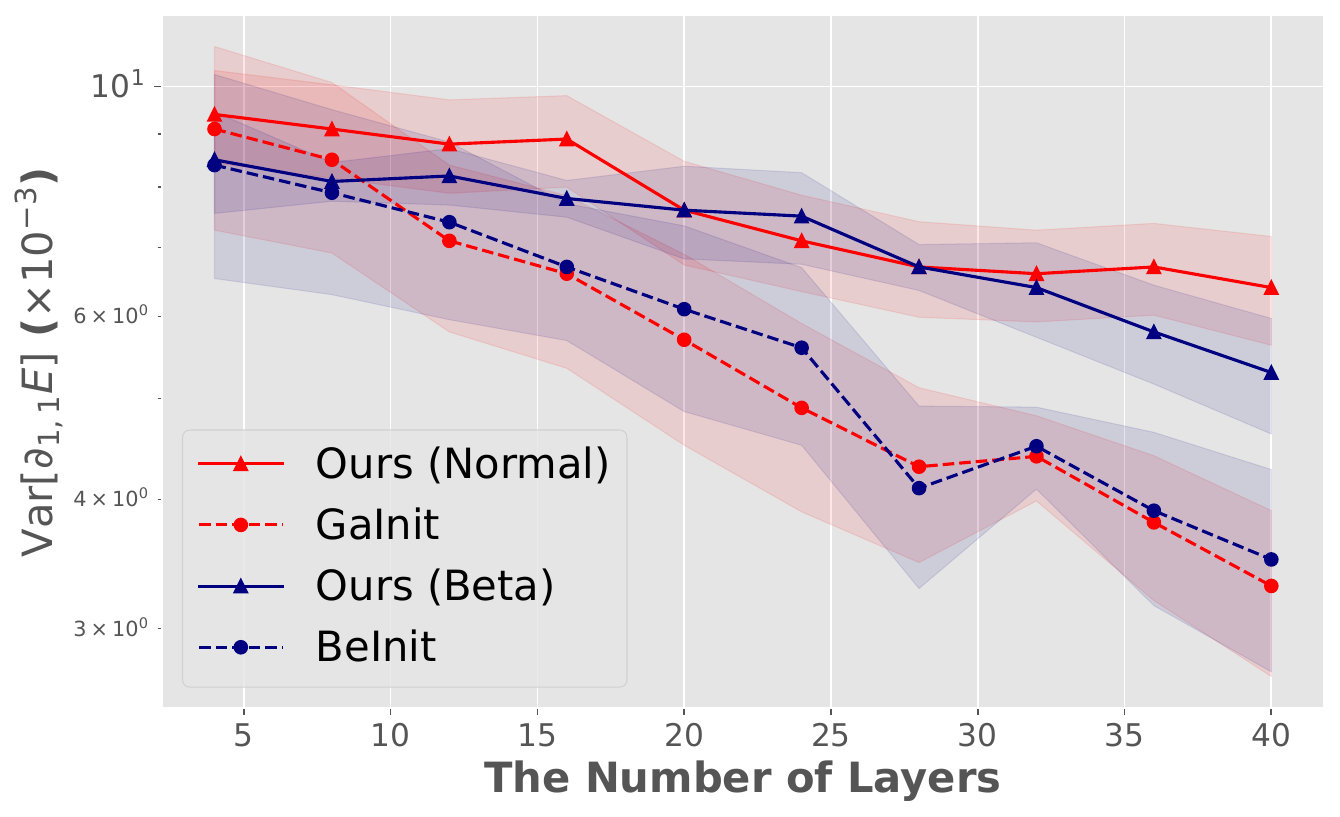}
  \end{subfigure}
\caption{Comparison between two strategies and our framework, which is initialized with the corresponding data distribution for a fair comparison.}
\label{fig:baselines}
\end{figure}

\paragraph{Comparison with initialization-based strategies.}
We compare our framework with two representative initialization-based strategies, GaInit~\citep{zhang2022escaping} and BeInit~\citep{kulshrestha2022beinit}. Both of them leverage well-designed Normal and Beta distributions to initialize the \qnns, respectively. For a fair comparison, we initialize the \qnns\ with the corresponding distribution. We present the results on Iris in Fig.~\ref{fig:baselines} as an example. The results demonstrate that our framework can generate initial model parameters of \qnns\ that achieve higher gradient variance at the beginning of training as the model size increases, indicating better mitigation for BPs.

\paragraph{Analysis of the expected improvement.}
We analyze the patterns on the expected improvement (EI) and the corresponding gradient variance across various \qnn\ structures as iterations progress. Representative experiments conducted on Iris are illustrated in Fig.~\ref{fig:ei_gvar} as an example ({\bf Appendix~\ref{app:exp}}). Our findings show that the framework can reliably discover meaningful initial parameters regardless of model size. Besides, as the model size grows, more iterations are required to obtain effective initial parameters that enable \qnns\ to maintain higher gradient variance. This is expected, as larger models expand the candidate space, demanding greater computational resources to explore effectively. Both observations verify the Cor.~\ref{cor:2thlds} ({\bf Appendix~\ref{app:proofs}}).

\begin{figure}[t]
\centering
  \begin{subfigure}{0.46\linewidth}
    \centering  
    \includegraphics[width=0.95\textwidth]{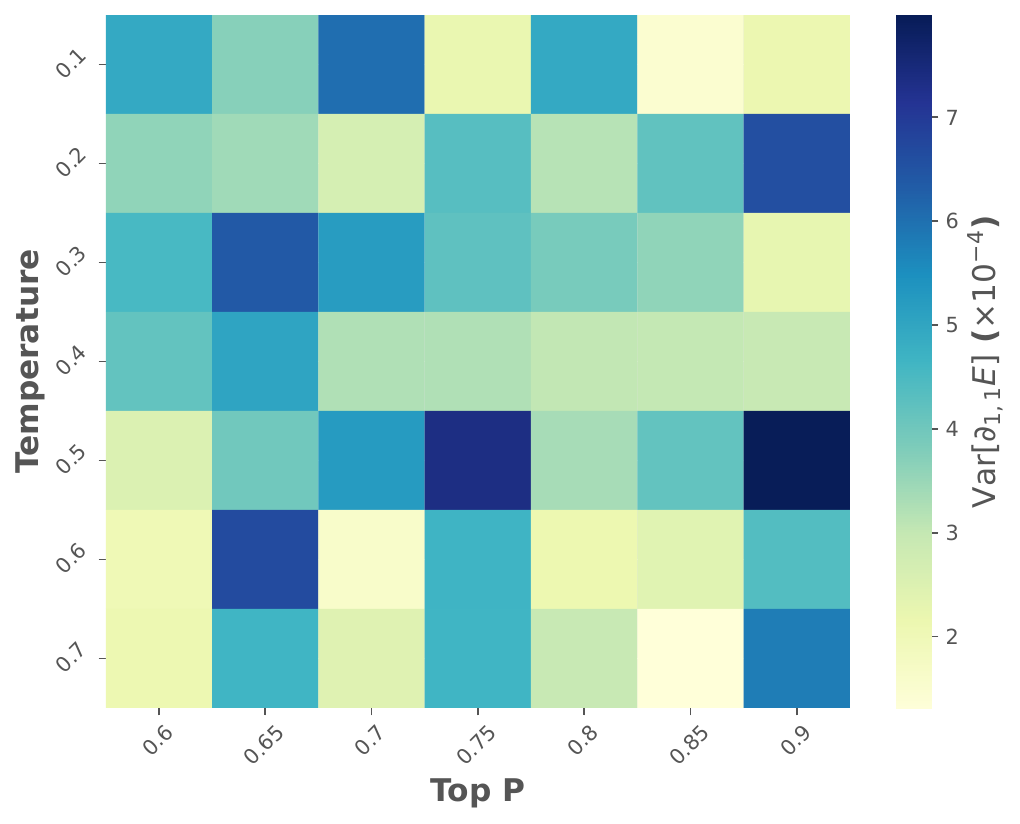}
    \caption{Iris}
  \end{subfigure}
  \hspace{0.05\linewidth}
  \begin{subfigure}{0.46\linewidth}
    \centering  
    \includegraphics[width=0.95\textwidth]{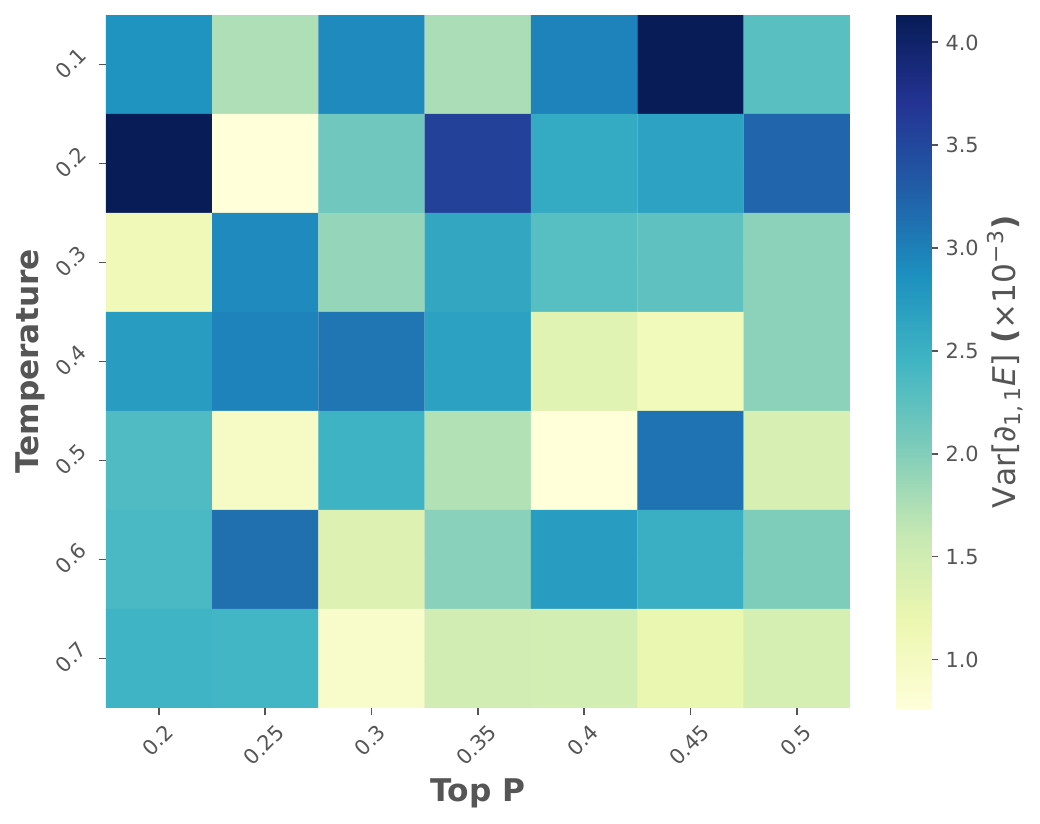}
    \caption{Wine}
  \end{subfigure}
  \hspace{0.05\linewidth}
  \begin{subfigure}{0.46\linewidth}
    \centering  
    \includegraphics[width=0.95\textwidth]{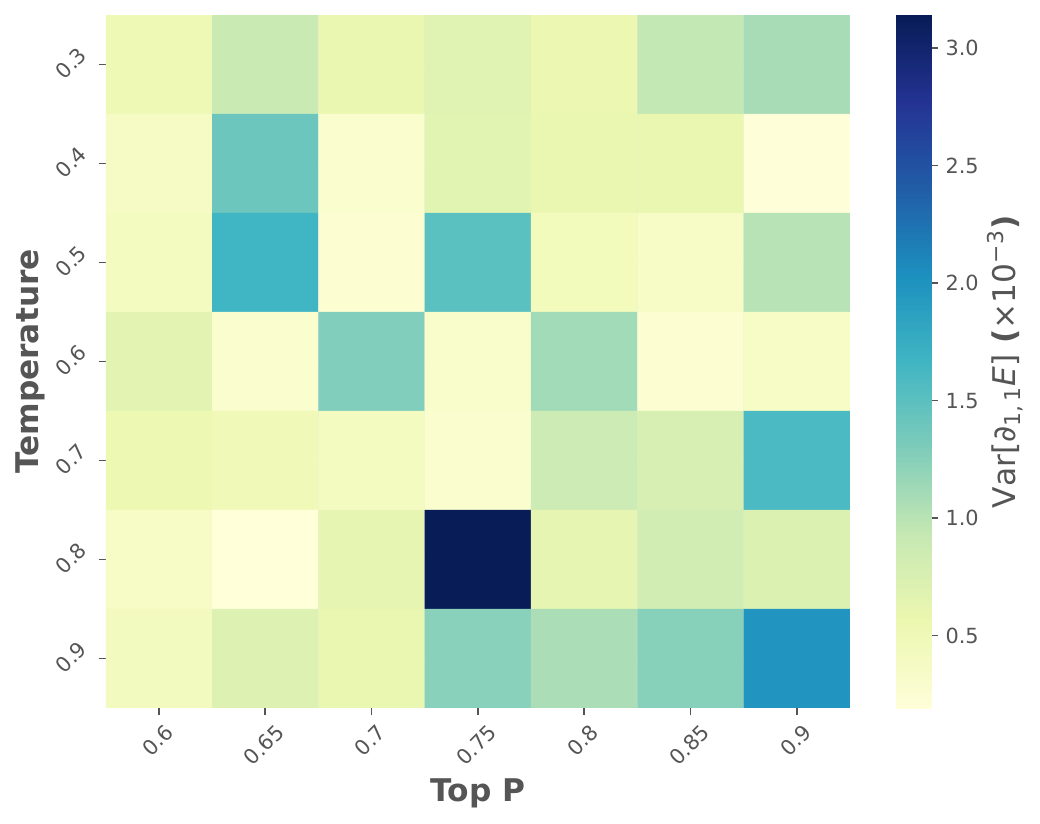}
    \caption{Titanic}
  \end{subfigure}
  \hspace{0.05\linewidth}
  \begin{subfigure}{0.46\linewidth}
    \centering  
    \includegraphics[width=0.95\textwidth]{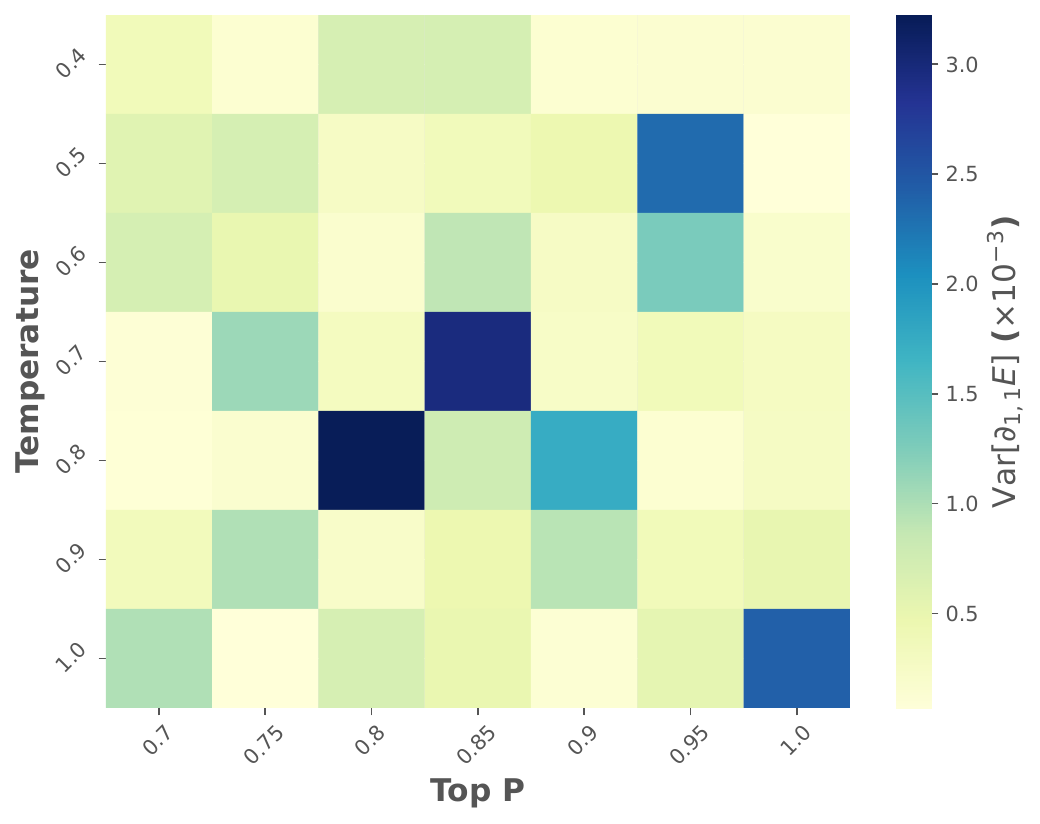}
    \caption{MNIST}
  \end{subfigure}
\caption{Analysis of the sensitivity of hyperparameters, including Temperature and Top P. The grid with the darkest color indicates the optimal combination.}
\label{fig:hypm}
\end{figure}
\paragraph{Sensitivity analysis of hyperparameters.}
We analyze the sensitivity of hyperparameters, including Temperature and Top P, for \llms. Temperature controls the randomness of predictions, with higher values generating more diverse outputs, while Top P affects the probabilities of token selections, balancing generation diversity and structural consistency. To identify optimal settings, we first narrowed down the ranges through manual tuning and then applied grid search to determine the best combinations (Temperature, Top P) for each dataset: Iris (0.5, 0.9), Wine (0.1, 0.45), Titanic (0.8, 0.75), and MNIST (0.8, 0.8), as presented in Fig.~\ref{fig:hypm}. The combinations of the above hyperparameters were used in this study.
Due to the page limit, we present supplementary results in {\bf Appendix~\ref{app:exp}}.

\begin{figure*}[t]
\centering
    \begin{subfigure}{0.49\linewidth}
        \centering
        \includegraphics[width=0.99\linewidth, height=1.3cm]{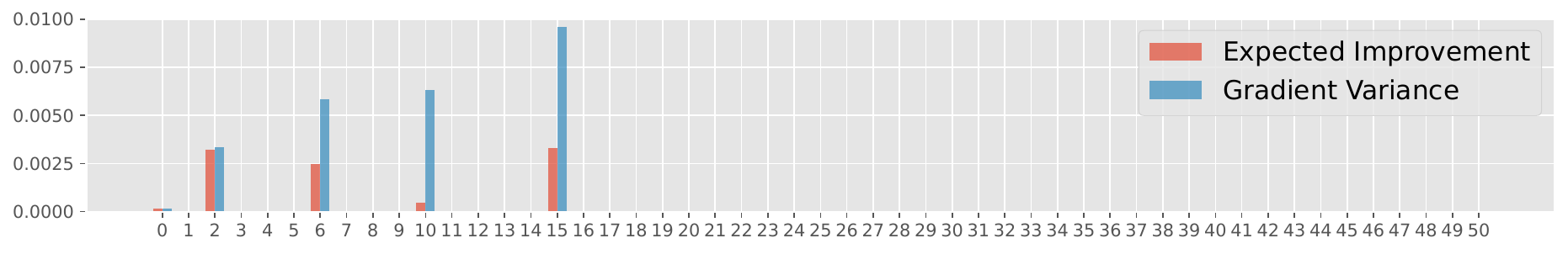}
        \vspace{-1mm}
        \includegraphics[width=0.99\linewidth, height=1.3cm]{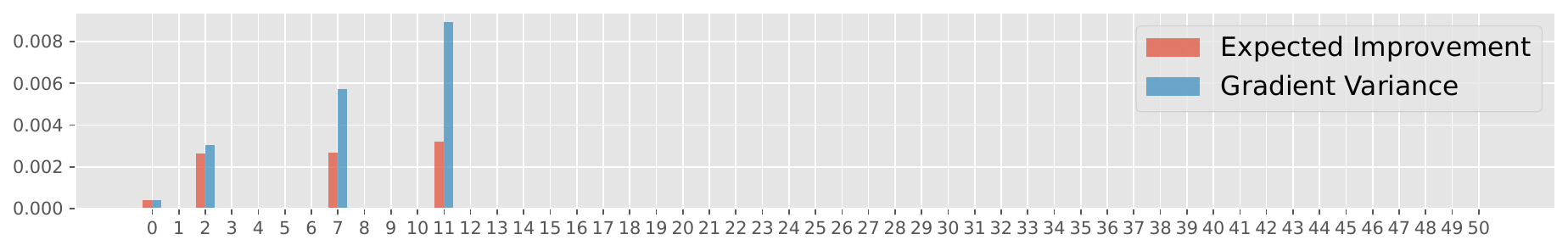}
        \vspace{-1mm}
        \includegraphics[width=0.99\linewidth, height=1.3cm]{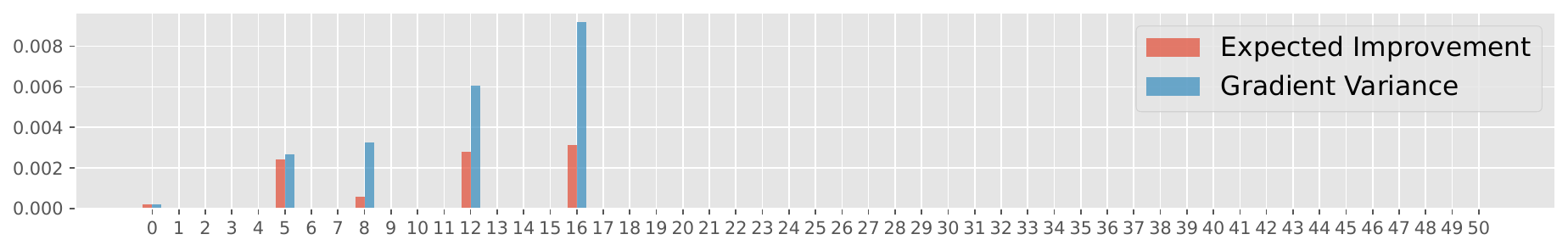}
        \vspace{-1mm}
        \includegraphics[width=0.99\linewidth, height=1.3cm]{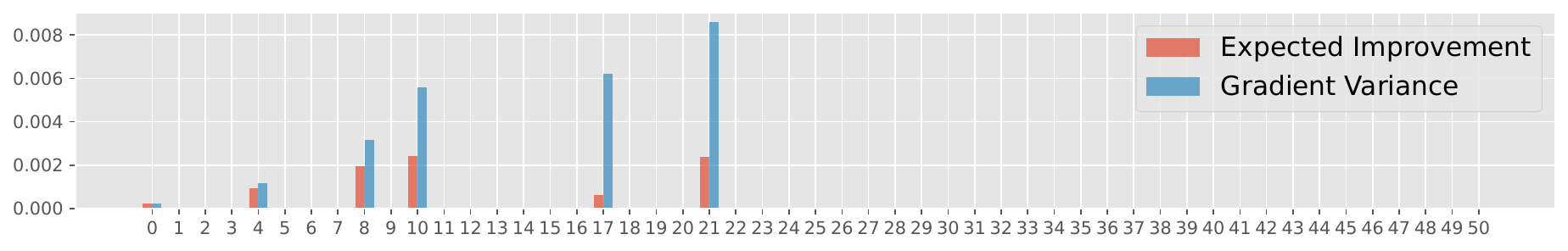}
        \vspace{-1mm}
        \includegraphics[width=0.99\linewidth, height=1.3cm]{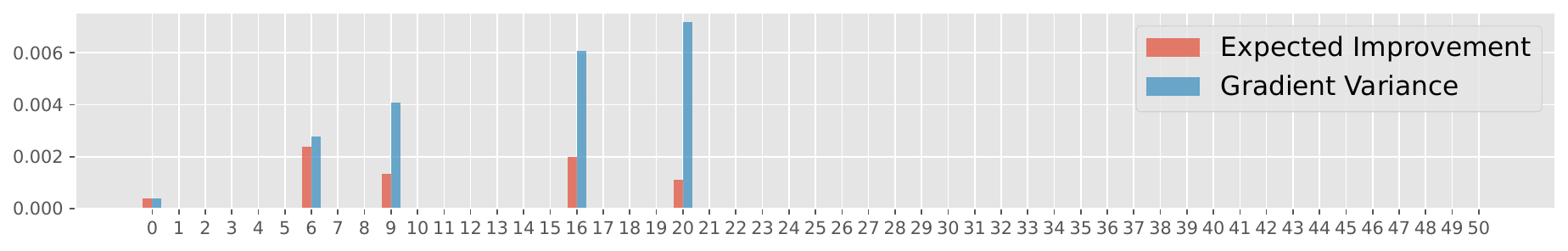}
        \vspace{-1mm}
        \includegraphics[width=0.99\linewidth, height=1.3cm]{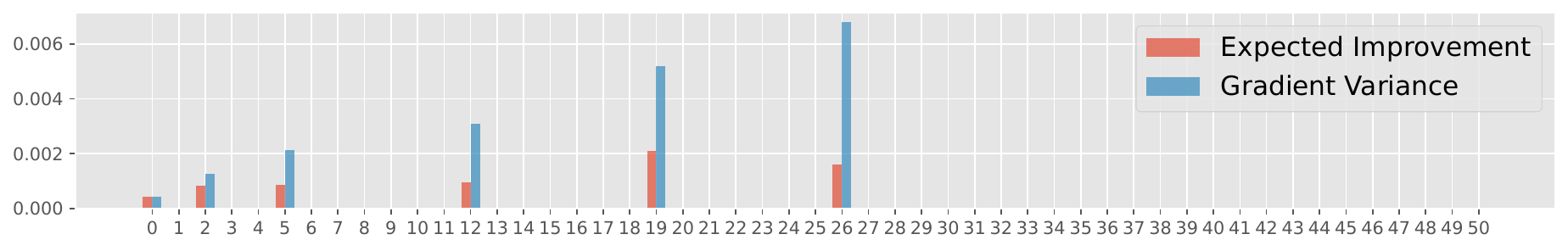}
        \vspace{-1mm}
        \includegraphics[width=0.99\linewidth, height=1.3cm]{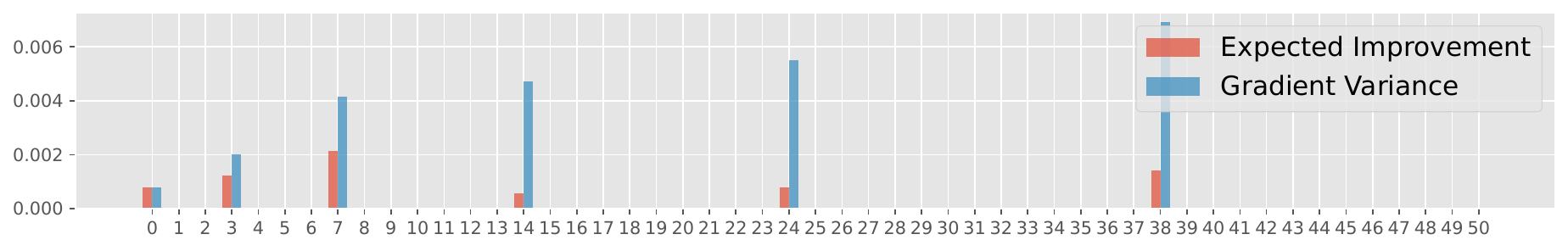}
        \vspace{-1mm}
        \includegraphics[width=0.99\linewidth, height=1.3cm]{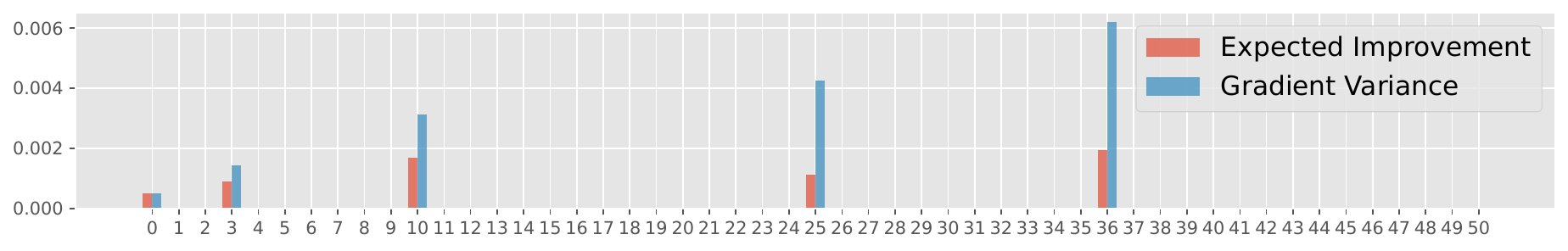}
        \vspace{-1mm}
        \includegraphics[width=0.99\linewidth, height=1.3cm]{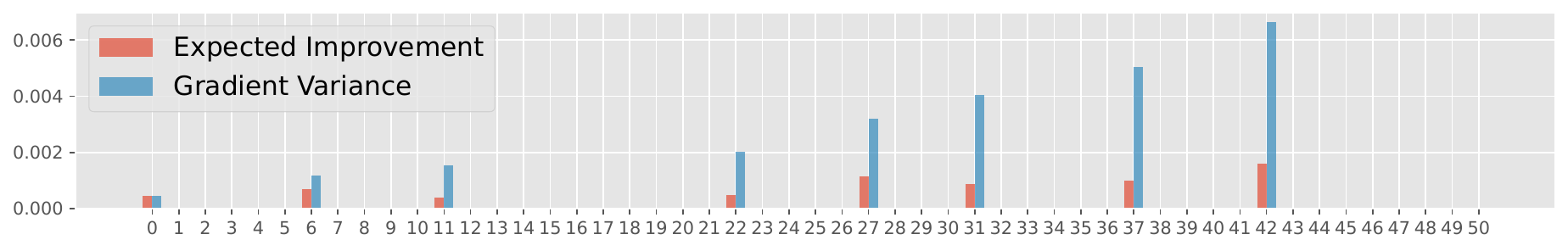}
        \vspace{-1mm}
        \includegraphics[width=0.99\linewidth, height=1.3cm]{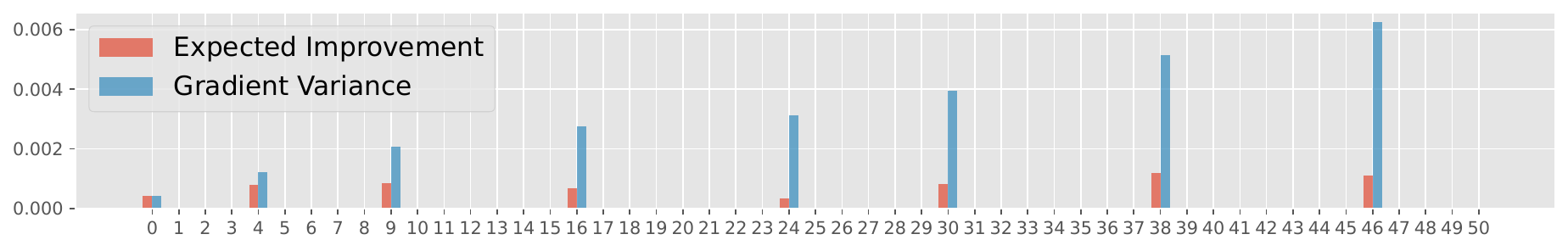}
        \caption{The number of qubits ranges from 2 to 20.}
    \end{subfigure}
    \begin{subfigure}{0.49\linewidth}
        \centering
        \includegraphics[width=0.99\linewidth, height=1.3cm]{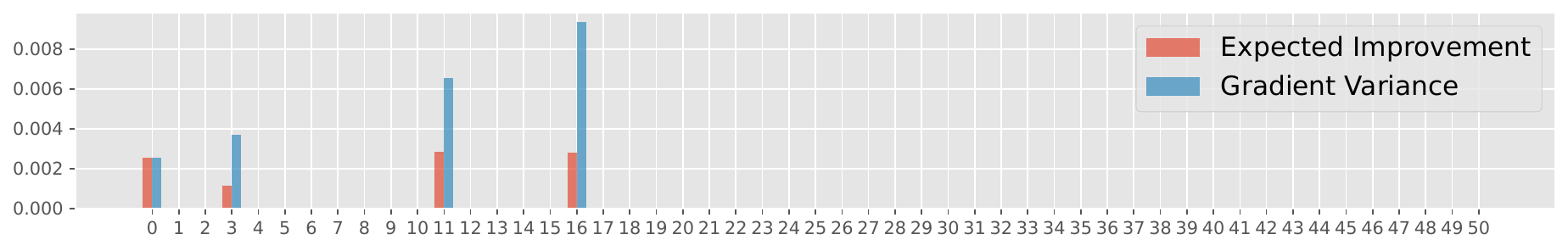}
        \vspace{-1mm}
        \includegraphics[width=0.99\linewidth, height=1.3cm]{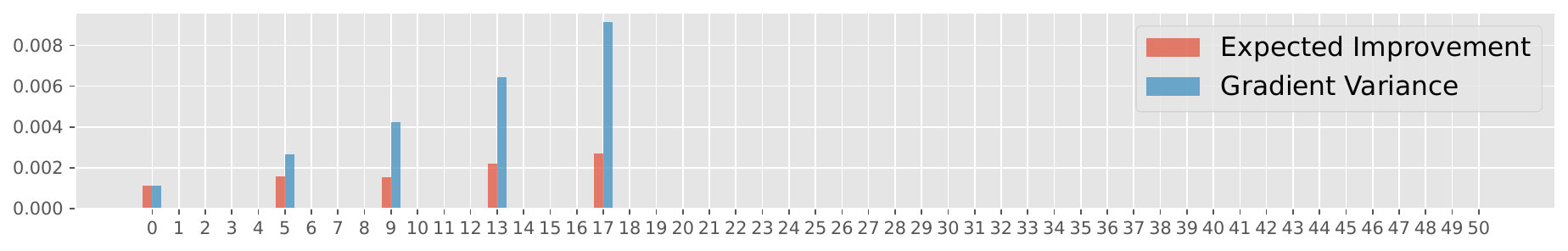}
        \vspace{-1mm}
        \includegraphics[width=0.99\linewidth, height=1.3cm]{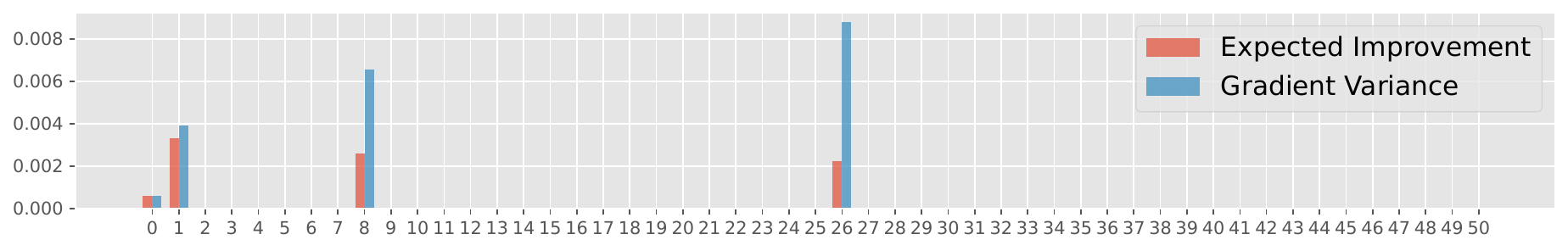}
        \vspace{-1mm}
        \includegraphics[width=0.99\linewidth, height=1.3cm]{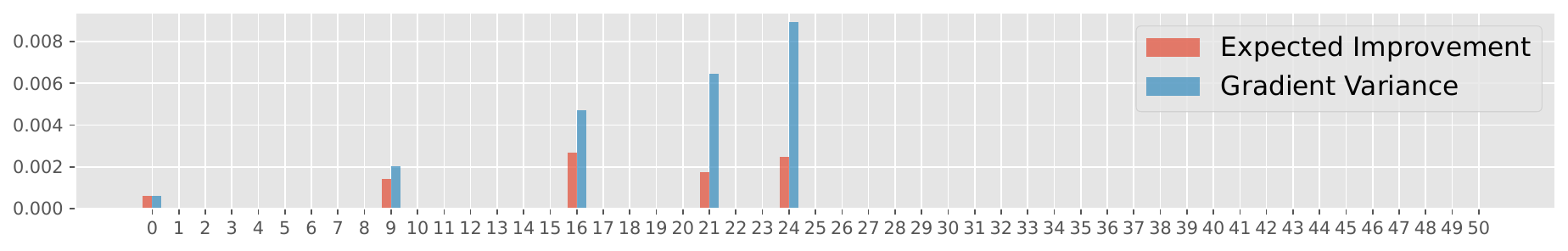}
        \vspace{-1mm}
        \includegraphics[width=0.99\linewidth, height=1.3cm]{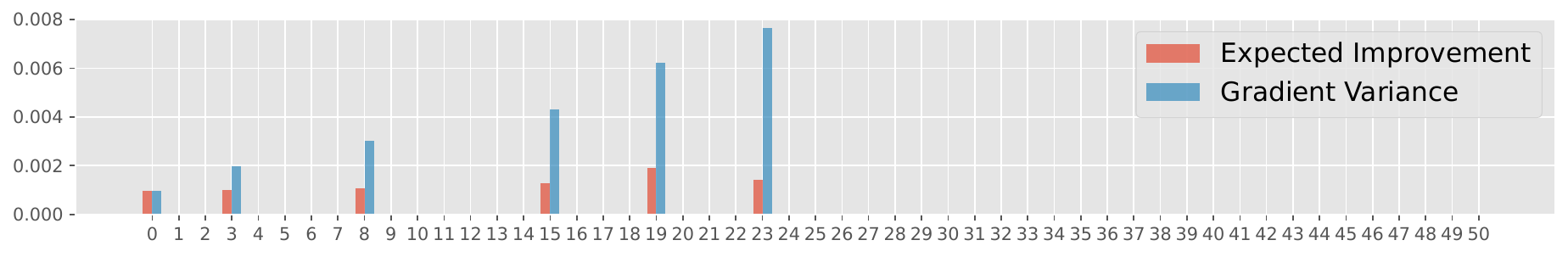}
        \vspace{-1mm}
        \includegraphics[width=0.99\linewidth, height=1.3cm]{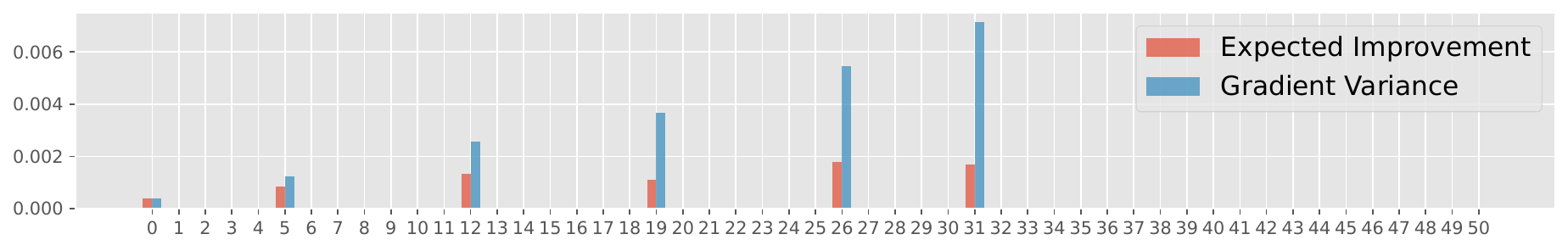}
        \vspace{-1mm}
        \includegraphics[width=0.99\linewidth, height=1.3cm]{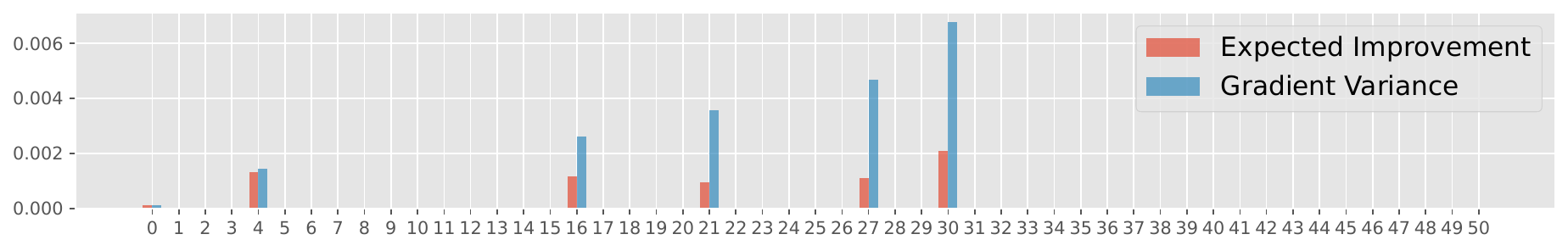}
        \vspace{-1mm}
        \includegraphics[width=0.99\linewidth, height=1.3cm]{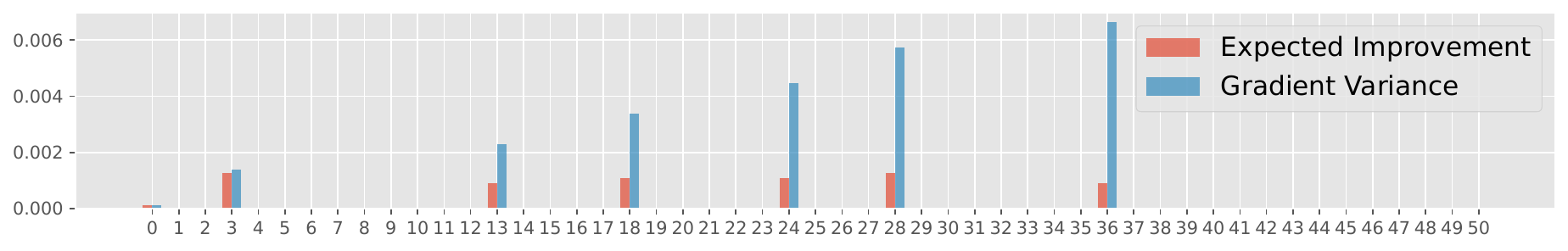}
        \vspace{-1mm}
        \includegraphics[width=0.99\linewidth, height=1.3cm]{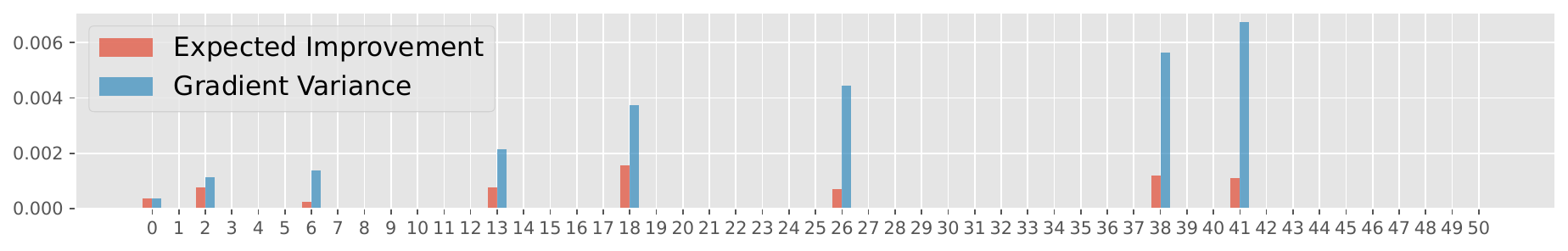}
        \vspace{-1mm}
        \includegraphics[width=0.99\linewidth, height=1.3cm]{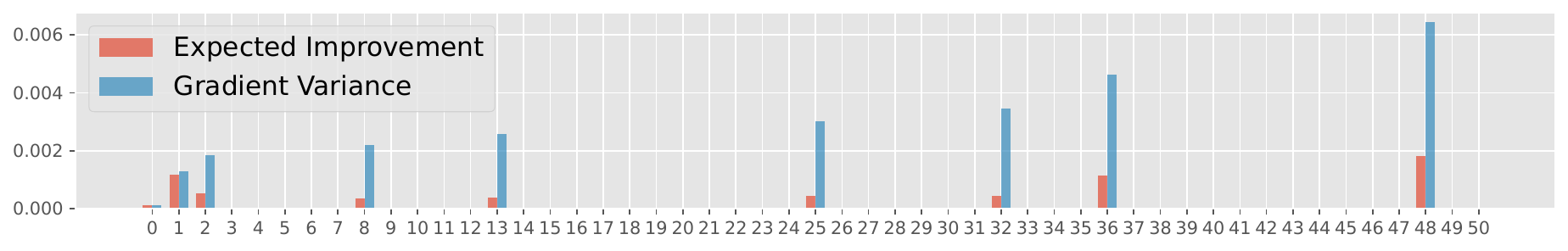}
        \caption{The number of layers ranges from 4 to 40.}
    \end{subfigure}
\vspace{-2mm}
\caption{We analyze the patterns of expected improvement and the corresponding gradient variance and present the results in two columns: the left column illustrates the trends w.r.t. the number of qubits, while the right column captures the effects of increasing the number of layers.}
\label{fig:ei_gvar}
\end{figure*}

\section{Related Work}
\label{sec:rewk}
\citet{McClean2018landscapes} first investigated BP phenomena and demonstrated that under the assumption of the 2-design Haar distribution, gradient variance in \qnns\ will exponentially decrease to zero at the beginning of training as the model size increases. In recent years, enormous studies have been devoted to mitigating BP issues in \qnns~\citep{qi2023barren}. \citet{cunningham2025investigating} categorize most existing studies into the following five groups.
(i) Initialization-based strategies initialize model parameters with various well-designed distributions in the initialization stage~\citep{grant2019initialization, sack2022avoiding, mele2022avoiding, grimsley2023adaptive, liu2023mitigating, park2024hamiltonian, kashif2024alleviating, shang2025avoiding}.
(ii) Optimization-based strategies address BP issues and further enhance trainability during optimization~\citep{ostaszewski2021structure, suzuki2021normalized, heyraud2023estimation, liu2024mitigating, sannia2024engineered, wu2021mitigating, gharibyan2023hierarchical, sciorilli2024towards, falla2024graph}.
(iii) Model-based strategies attempt to mitigate BPs by proposing new model architectures~\citep{li2021vsql, bharti2021simulator, du2022quantum, selvarajan2023dimensionality, tuysuz2023classical, Kashif2024resQNets, zhang2022quark, shin2024layerwise}.
(iv) To address both BPs and saddle points, \citet{zhuang2025enhancing} regularize \qnns' model parameters via Bayesian approaches.
(v) \citet{rappaport2023measurement} measure the BP phenomenon via various informative metrics.

\section{Conclusion}
\vspace{-0.1cm}
In this work, we aim to mitigate BPs by introducing a new \llm-driven submartingale-based framework, namely AdaInit. This framework iteratively generates effective initial parameters using generative models, such as \llms, for \qnns\ that yield non-negligible gradient variance, thereby mitigating BPs. Our theoretical analysis establishes the submartingale property for the iterative process, ensuring the effective generation. Through extensive experiments across various model scales, we demonstrated that AdaInit outperforms conventional classic initialization methods in maintaining higher gradient variance as \qnn's sizes increase. Overall, this study might initiate a new avenue to explore how \llms\ help mitigate BPs.

\paragraph{Limitations, future work, and broad impact.}
First, our theoretical analyses assume that the maximum gradient of \qnns\ is bounded by a positive constant, implying that gradient explosion does not occur during training, a condition that is typically satisfied in practice. Second, due to the widely-recognized limitations of quantum simulation, our experiments are constrained to \qnns\ with up to 20 qubits. We also assume an idealized setting where quantum measurements are noise-free. Moreover, our current scope excludes ansatz-induced BPs, which may be mitigated by architectural modifications, as discussed above.
For {\bf future work}, we plan to (i) accelerate the convergence of the iterative process and (ii) expand the applicability beyond BP mitigation. In particular, it can be leveraged to guide \qnn\ architecture design or to identify optimal model parameters in training.
More {\bf broadly}, our framework can support robust \qnn\ training across various domains, such as healthcare, where robustness and reliability are critical.

\noindent
{\bf Ethical considerations.}
This work does not involve human subjects, sensitive personal data, or tasks with foreseeable negative societal impact. The datasets used (Iris, Wine, Titanic, and MNIST) are standard, publicly available benchmarks. We ensured compliance with their respective licenses and data usage guidelines. The proposed framework is designed to mitigate barren plateaus and does not directly enable harmful applications. Nonetheless, as with other advances in optimization, the method could potentially be applied in sensitive domains; in such cases, practitioners should carefully consider fairness, privacy, and security concerns in line with the ACL Code of Ethics.

\section*{Acknowledgments}
This work was supported in part by the National Science Foundation under Grant No. 2451670.

\bibliography{ref}
\bibliographystyle{acl_natbib}

\appendix
\begin{center}
  {\LARGE\bfseries Appendix}
\end{center}
\vspace{1em}

In the appendix, we provide detailed proofs and additional experimental results. The datasets and source code, along with a README file, are publicly available.\footnote{\url{https://github.com/junzhuang-code/AdaInit}}

\section{Proofs}
\label{app:proofs}
In this section, we provide formal proofs that consolidate the theoretical guarantees of our method.

\begin{proof}[Proof for \Cref{lem:gvar_bound}]
We denote a sequence of gradient $\bm{\partial E} = \{\partial E^{(t)}\}_{t=0}^{T_{tr}}$, where $T_{tr}$ represents the number of training epochs for a \qnn. Within this sequence, we denote $\partial E_{max}$, $\partial E_{min}$, and $\overline{\partial E}$ as the maximum, minimum, and mean values of the gradient. For $\forall$ $t \in \mathbb{Z}^+$, we have:
\[
    \partial E^{(t)}, \overline{\partial E} \in [\partial E_{min}, \partial E_{max}],
\]
then the gap between $\partial E^{(t)}$ and $\overline{\partial E}$ will not exceed the range of $[\partial E_{min}, \partial E_{max}]$:
\[
    |\partial E^{(t)} - \overline{\partial E}| \leq \partial E_{max} - \partial E_{min}.
\]
Thus, we have:
\[
\begin{aligned}
   \mathrm{Var}[\partial E]
   &= \frac{1}{T_{tr}} \sum_{t=1}^{T_{tr}} (\partial E^{(t)} - \overline{\partial E})^2 \\
   &\leq \frac{1}{T_{tr}} \sum_{t=1}^{T_{tr}} (\partial E_{max} - \partial E_{min})^2 \\
   &= (\partial E_{max} - \partial E_{min})^2. \\
\end{aligned}
\]
Thus, the gradient variance $\mathrm{Var}[\partial E]$ satisfies the bound $\mathrm{Var}[\partial E] \leq (\partial E_{max} - \partial E_{min})^2$.
\end{proof}

\begin{proof}[Proof for \Cref{lem:ei_bound}]
From Def.~\ref{def:ei}, for $\forall$ $t \in \mathbb{Z}^+$, in the $t$-th search iteration, we have:
 \[
   \Delta^{(t)} = \max(\mathrm{Var}[\partial E^{(t)}] - S^{(t-1)}, 0).
 \]
Combining with Lem.~\ref{lem:gvar_bound}, for $\forall$ $t \in \mathbb{Z}^+$, we have:
 \[
   \mathrm{Var}[\partial E^{(t)}], S^{(t-1)} \leq (\partial E_{max} - \partial E_{min})^2.
 \]
The above equation holds true as $S^{(t-1)}$ denotes the historical maximum gradient variance in the past iterations. Thus, we have:
 \[
   \mathrm{Var}[\partial E^{(t)}] - S^{(t-1)} \leq (\partial E_{max} - \partial E_{min})^2,
 \]
which indicates that:
 \[
   \Delta^{(t)} \leq (\partial E_{max} - \partial E_{min})^2.
 \]
\end{proof}

Before introducing the formal statement of submartingale property, we first define the random variables. Let $\alpha = \nicefrac{1}{(poly(N, L)K)}$. 
To avoid confusion, notably, we clarify that the $K$ refers to a previously defined parameter, whereas in Tab.~\ref{tab:llm}, the suffix `K' attached to numbers represents the unit `thousand'.
Formally, we define discrete random variables $I^{(t)}$ and continuous random variables $\Delta^{{(t)}}$ as follows:
\begin{equation}\label{eq:prob_def}
\begin{cases}
    P(I^{(t)} = 1) = P(\Delta^{(t)} \geq \alpha) = p, \\
    P(I^{(t)} = 0) = P(\Delta^{(t)} < \alpha) = 1-p,
\end{cases}
\end{equation}
with a real number $p \in (0,1]$. Hence, $I^{(t)}$ is a Bernoulli random variable as an indicator of the event $\Delta^{(t)} \geq \alpha$, and its associated continuous random variable $\Delta^{{(t)}}$ is implicitly defined with an arbitrary probability density function $p(y)$ that satisfies
\begin{equation}\label{eq:cts_prob_def}
    P(\Delta^{(t)} \geq \alpha) = \int_{\alpha}^\infty p(y) dy= p.
\end{equation}
With such a relationship between them, the following can be easily verified, for all $t \in \mathbb{Z}^+$,
\begin{equation}\label{eq:discr_cts_condition}
    P(I^{(t)} = x |\Delta^{(t)} = y ) = 
\begin{cases}
    x, & y \geq \alpha \\
    1-x, & \hbox{o.w.}
\end{cases}.
\end{equation}

\begin{lemma}[Submartingale Property, formal statement of \Cref{lem:submg}]\label{lem:formal_submg}
Let $\{I^{(t)}, \Delta^{(t)}\}_{t \ge 1}$ be a sequence of random variables as defined in \Cref{eq:prob_def}. We define joint random variables $W^{(t)} = \Delta^{(t)} \cdot I^{(t)}$ and the natural filtration $\mathcal{F}^{(t)} = \sigma\bigl(W^{(1)}, \cdots, W^{(t)} \bigr)$. Then the sequence $\{ S^{(t)} \}_{t \ge 1}$ as defined by Def.~\ref{def:ei} is a submartingale with respect to the filtration $\{\mathcal{F}^{(t)}\}_{t \ge 1}$.
\end{lemma}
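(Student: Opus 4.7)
The plan is to verify the three defining conditions of a submartingale from \Cref{def:martingale} directly for $S^{(t)} = \sum_{i=1}^{t} W^{(i)}$, viewing each $W^{(i)} = \Delta^{(i)} \cdot I^{(i)}$ as a nonnegative random variable whose law is pinned down by \Cref{eq:prob_def} and \Cref{eq:discr_cts_condition}.

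Adaptedness is immediate by construction: since $\mathcal{F}^{(t)} = \sigma(W^{(1)}, \ldots, W^{(t)})$, the partial sum $S^{(t)}$ is trivially $\mathcal{F}^{(t)}$-measurable. For $L^1$-integrability I would chain \Cref{assms:gmax} with \Cref{lem:ei_bound} to bound $0 \leq \Delta^{(t)} \leq (\partial E_{max} - \partial E_{min})^2 \leq B_{\partial E}^2$ almost surely; combined with $I^{(t)} \in \{0,1\}$ this yields $W^{(t)} \in [0, B_{\partial E}^2]$ and hence $\mathbb{E}[|S^{(t)}|] \leq t \cdot B_{\partial E}^2 < \infty$ for every $t \in \mathbb{Z}^+$.

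The heart of the argument is the submartingale inequality $\mathbb{E}[S^{(t+1)} \mid \mathcal{F}^{(t)}] \geq S^{(t)}$. I would decompose $S^{(t+1)} = S^{(t)} + W^{(t+1)}$, pull out the $\mathcal{F}^{(t)}$-measurable term $S^{(t)}$, and invoke the i.i.d.\ hypothesis on the pairs $\{(I^{(t)}, \Delta^{(t)})\}$ to deduce that $W^{(t+1)}$ is independent of $\mathcal{F}^{(t)} = \sigma(W^{(1)}, \ldots, W^{(t)})$. The conditional expectation then collapses to the unconditional mean $\mathbb{E}[W^{(t+1)}]$. Using the deterministic coupling $I^{(t+1)} = \mathbf{1}\{\Delta^{(t+1)} \geq \alpha\}$ extracted from \Cref{eq:discr_cts_condition} together with the nonnegativity $\Delta^{(t)} \geq 0$ built into \Cref{def:ei}, I obtain $W^{(t+1)} = \Delta^{(t+1)} \mathbf{1}\{\Delta^{(t+1)} \geq \alpha\} \geq 0$, so $\mathbb{E}[W^{(t+1)}] \geq 0$ and condition (iii) follows. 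Integrating against the density from \Cref{eq:cts_prob_def} in fact yields the sharper $\mathbb{E}[W^{(t+1)}] \geq \alpha \, p > 0$, which is strictly more than the definition demands and foreshadows the positive drift rate used later in \Cref{thm:hittime}.

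The main obstacle I anticipate is notational rather than analytical: within each time step the pair $(I^{(t)}, \Delta^{(t)})$ is \emph{dependent} (tied together by the indicator relation in \Cref{eq:discr_cts_condition}), yet across time steps they are \emph{independent} by the i.i.d.\ hypothesis. The write-up must carefully separate these two uses so that collapsing $\mathbb{E}[W^{(t+1)} \mid \mathcal{F}^{(t)}]$ onto $\mathbb{E}[W^{(t+1)}]$ is justified via independence \emph{across} time without inadvertently asserting independence of $I^{(t+1)}$ and $\Delta^{(t+1)}$ within the same time step. A secondary point is making sure the mixed discrete/continuous product $W^{(t+1)}$ is a bona fide random variable and that integration against the density in \Cref{eq:cts_prob_def} is well-defined; both follow from the explicit conditional law in \Cref{eq:discr_cts_condition}, but this should be stated explicitly before the tower and independence arguments are invoked.
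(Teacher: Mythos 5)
Your proposal is correct and follows essentially the same route as the paper's proof: verify adaptedness from the definition of the filtration, integrability from the bound in Lem.~\ref{lem:ei_bound}, and the submartingale inequality by decomposing $S^{(t+1)} = S^{(t)} + W^{(t+1)}$ and using independence across iterations to reduce the conditional expectation of the increment to its unconditional mean, which is bounded below by $\alpha p > 0$ exactly as in \Cref{eq:submg_elb}. Your observation that mere nonnegativity of $W^{(t+1)}$ already suffices for condition (iii) is a slight simplification, and your caution about within-step dependence versus across-step independence matches the care the paper takes via \Cref{eq:discr_cts_condition}.
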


\begin{proof}[Proof for \Cref{lem:formal_submg}] 
According to Def.~\ref{def:martingale}, a process $S^{(t)}$ is a submartingale relative to $(\Omega, \mathcal{F}, P)$ if it satisfies Adaptedness, Integrability, and Submartingale. 

\paragraph{Adaptedness.}
We first aim to verify that $S^{(t)}$ is determined based on the information available up to past $t$ iterations. By Def.~\ref{def:ei}, $S^{(t)}=\sum_{t_i=1}^{t} \Delta^{(t_i)} \cdot I^{(t_i)} = \sum_{t_i=1}^{t} W^{(t_i)}$ is a finite sum of random variables that are measurable w.r.t. $\sigma\bigl(W^{(1)}, \dots, W^{(t)} \bigr)$ (or $\sigma\bigl(I^{(1)}, \dots, I^{(t)} \bigr)$ for short due to the relationship between $I^{(t)}$ and $\Delta^{(t)}$ as shown in \Cref{eq:prob_def}). Thus, $S^{(t)}$ is also measurable w.r.t. $\mathcal{F}^{(t)}$, ensuring the adaptedness.

\paragraph{Integrability.}
In Lem.~\ref{lem:ei_bound}, $\Delta^{(t)} \leq (\partial E_{max} - \partial E_{min})^2$ for $\forall$ $t \in \mathbb{Z}^+$. Thus,
\begin{equation}
\begin{aligned}
\mathbb{E}[|S^{(t)}|]
 &= \mathbb{E}\Bigl[\Bigl| \sum_{t_i=1}^{t} \Delta^{(t_i)} \cdot I^{(t_i)} \Bigr|\Bigr] \notag \\
 &\le \mathbb{E}\Bigl[\Bigl| \sum_{t_i=1}^{t} (\partial E_{max} - \partial E_{min})^2 \cdot I^{(t_i)} \Bigr|\Bigr] \notag \\
 &< \infty, \notag
\end{aligned}
\end{equation}
which ensures $\mathbb{E}[|S^{(t)}|]$ is integrable for each $t$.

\paragraph{Submartingale.} Before proving this condition, we show the following necessary inequality: with $\alpha = \nicefrac{1}{(poly(N, L)K)}$,
{\small
\begin{align}\label{eq:submg_elb}
    \mathbb{E}[\Delta^{(t)} \cdot I^{(t)}] &= \sum_{x=0,1} \int_{-\infty}^\infty P(I^{(t)} = x| \Delta^{(t)} =y) p(y) x y\ dy \nonumber \\
    &= \int_\alpha^\infty p(y) y\ \ dy \nonumber \\
    &\geq \alpha \int_{\alpha}^\infty p(y) dy \nonumber\\
    &= \alpha p\ ,
\end{align}
}
where the second step follows from \Cref{eq:discr_cts_condition} and the last step uses \Cref{eq:cts_prob_def}. Apparently $\alpha p > 0$.

We observe that
\[
S^{(t)} = S^{(t-1)} + \Delta^{(t)} \cdot I^{(t)}.
\]
Since $S^{(t-1)}$ is $\mathcal{F}^{(t-1)}$-measurable, thus,
\begin{align*}
\mathbb{E}\bigl[S^{(t)} \big| \mathcal{F}^{(t-1)}\bigr]
    &= \mathbb{E}\bigl[S^{(t-1)} + \Delta^{(t)} \cdot I^{(t)} \big| \mathcal{F}^{(t-1)}\bigr] \\
    &= S^{(t-1)} + \mathbb{E}[\Delta^{(t)} \cdot I^{(t)}] \\
    &\geq S^{(t-1)} + \alpha p \\
    &\ge S^{(t-1)},
\end{align*}
where the last two step applies \Cref{eq:submg_elb}. 

Thus, the submartingale condition holds true for $\forall$ $t \ge 1$ s.t.
\[
\mathbb{E}\bigl[S^{(t)} \big| \mathcal{F}^{(t-1)}\bigr] \ge S^{(t-1)}, \quad \forall\ t \ge 1.
\]
\end{proof}
Intuitively, $S^{(t)}$ tracks the cumulative amount of non-trivial variance improvements observed over the iterations — akin to measuring meaningful progress in exploration.

\begin{proof}[Proof for \Cref{lem:submg_bound}]
Since the process $\{S^{(t)}\}_{t \geq 1}$ is a $L^1$-bounded submartingale s.t. $\sup_t \mathbb{E}[|S^{(t)}|] < \infty$, we apply the Doob's Forward Convergence Theorem (by Thm.~\ref{thm:fct}), which guarantees the almost sure existence of a finite random variable $S^{(\infty)}$ s.t. $S^{(\infty)}=\lim_{t \to \infty}S^{(t)}$. This implies that the process $\{S^{(t)}\}$ has a well-defined almost sure limit.

Furthermore, if $\{S^{(t)}\}$ is monotone increasing, i.e., $S^{(t)} \leq S^{(t+1)}$, a.s., $\forall$ $t \in \mathbb{Z}^+$, then the limit $S^{(\infty)}$ serves as a supremum for the entire process. By Defining $B_{S} := \sup_t S^{(t)} = S^{(\infty)}$, we obtain a desired bound $S^{(t)} \leq B_{S}$, a.s., $\forall$ $t \in \mathbb{Z}^+$.
\end{proof}

\begin{proof}[Proof for \Cref{thm:hittime}]
To analyze the expected hitting time, we {\bf first} construct a drift-adjusted process $\{Z^{(t)}\}_{t \geq 1}$ adapted to a filtration $\{\mathcal{F}^{(t)}\}_{t \geq 1}$ as $Z^{(t)} = S^{(t)} - \delta t$, where $\delta = \nicefrac{p}{(poly(N,L)K)} > 0$.

Given $\alpha = \nicefrac{1}{(poly(N, L)K)}$ and follow those similar steps in the proof for Lem.~\ref{lem:formal_submg}, we can derive
\begin{equation}\label{eq:submg_elb2}
    \mathbb{E}[\Delta^{(t)} \cdot I^{(t)}] \geq \alpha p = \delta,
\end{equation}
where the last step is by definition of $\delta$.

We then verify that ${Z^{(t)}}$ is also a submartingale:

\textbullet\ {\bf Adaptedness}: Similr to $S^{(t)}$, ${Z^{(t)}}$ is also determined by the past $t$ iterations w.r.t. the same filtration $\sigma\bigl(W^{(1)}, \dots, W^{(t)} \bigr)$ as $S^{(t)}$. Thus, ${Z^{(t)}}$ can meet the adaptedness.

\textbullet\ {\bf Integrability}: Given that $S^{(t)}$ is $L^1$-bounded, and $Z^{(t)}$ is obtained by subtracting a deterministic finite value $\delta t$ from $S^{(t)}$, it follows immediately that $Z^{(t)}$ is also $L^1$-bounded, i.e., $\mathbb{E}[|Z^{(t)}|]$ is integrable for each $t$.

\textbullet\ {\bf Submartingale}: We further show that $Z^{(t)}$ meets the submartingale inequality as follows.
{\small\setlength{\belowdisplayskip}{-5pt}
  \begin{align*}
    \mathbb{E}[Z^{(t+1)}\!\mid\!\mathcal{F}^{(t)}]
    &= \mathbb{E}[S^{(t+1)} - \delta(t+1)\!\mid\!\mathcal{F}^{(t)}] \\
    &= \mathbb{E}[S^{(t+1)}\!\mid\!\mathcal{F}^{(t)}] - \delta(t+1) \\
    &= \mathbb{E}[S^{(t)} + \Delta^{(t+1)}\!\cdot\!I^{(t+1)}\!\mid\!\mathcal{F}^{(t)}] - \delta(t+1) \\
    &= S^{(t)} + \mathbb{E}[\Delta^{(t+1)}\!\cdot\!I^{(t+1)}\!\mid\!\mathcal{F}^{(t)}] -  \delta(t+1) \\
    &\geq S^{(t)} +\delta -  \delta(t+1) \\
    &= Z^{(t)}, \\
  \end{align*}
}
where the fifth step is obtained by combining the fact that $\Delta^{(t+1)} \cdot I^{(t+1)}$ is independent of $\mathcal{F}^{(t)}$ and \Cref{eq:submg_elb2}.

{\bf Second}, we define the hitting time $T_b$ as $T_b = \inf\left\{T \in \mathbb{Z}^+ : S^{(T)} = b \right\}$. Without loss of generality, we assume $b \leq B_S$ since $S^{(t)} \leq B_S$ a.s., for $\forall\ t \in \mathbb{Z}^+$ (by Lem.~\ref{lem:submg_bound}). We further verify that $T_b$ is a bounded stopping time as follows. 

We observe that $\{T_b \leq T\} = \{\exists\ t \leq T \hbox{ such that } S^{(t)} = b \} =  \bigcup_{t=0}^T \{S^{(t)}=b\}$.
Since $S^{(t)}$ is $\mathcal{F}^{(T)}$ measurable for $\forall\ t \leq T$, we have $\{S^{(t)}=b\} \in \mathcal{F}^{(T)}$, which indicates that the finite union $\bigcup_{t=0}^T \{S^{(t)}=b\} \in \mathcal{F}^{(T)}$. Hence, $\{T_b \leq T\} \in \mathcal{F}^{(T)}$, which by definition shows that $T_b$ is a bounded stopping time.

{\bf Third}, we define $T_b \wedge t$ as $\min(T_b,\ t)$. Given that $T_b$ is a bounded stopping time, $T_b \wedge t$ is also a bounded stopping time (by Lem.~\ref{lem:mst}). Based on this condition, the Doob's Optional Stopping Theorem implies that $\mathbb{E}[Z^{(T_b \wedge t)}] \geq \mathbb{E}[Z^{(0)}] = 0$ (by Thm.~\ref{thm:ost}). Thus, we have:
\[
\mathbb{E}[S^{(T_b \wedge t)} - \delta(T_b \wedge t)] \geq 0
\]
implying
\[
\mathbb{E}[S^{(T_b \wedge t)}] \geq \delta\mathbb{E}[T_b \wedge t].
\]

Since $S^{(t)}$ is non-decreasing and bounded by $B_S$, we have $T_b \wedge t = T_b$ as $t \to \infty$ almost surely, which implies that $\mathbb{E}[T_b \wedge t] = \mathbb{E}[T_b]$.
Moreover, by the definition of $T_b$, it follows that $S^{(T_b \wedge t)} \to b$ as $t \to \infty$, i.e., $\mathbb{E}[S^{(T_b \wedge t)}]$ is bounded by a dominating constant $b$. So, by the Dominated Convergence Theorem, as $t \to \infty$, we have $\mathbb{E}[S^{(T_b \wedge t)}] \to \mathbb{E}[S^{(T_b)}] = b$ (by Thm.~\ref{thm:dct}).

By integrating the above equations and taking the limit, we conclude:
\[
\mathbb{E}[T_b] \leq \frac{b}{\delta} = \frac{b K \cdot poly(N,L)}{p}.
\]
\end{proof}

Thus, the following result can be derived immediately by plugging concrete values of $b$ into Thm.~\ref{thm:hittime}.
\begin{corollary}[Expected Hitting Time Under Specific Thresholds] 
~~
\begin{enumerate}
    \item With an expected number of $\nicefrac{K}{p}$ iterations, Algo.~\ref{algo:srch_init_params} can identify a candidate model parameter $\bm{\theta}^*_0$ that has $\mathrm{Var}[\partial E] \approx \nicefrac{1}{poly(N, L)}$.
    \item With an expected number of $\nicefrac{B_S \cdot K \cdot poly(N,L)}{p}$ iterations, Algo.~\ref{algo:srch_init_params} can identify a candidate model parameter $\bm{\theta}^*_0$ that has $\mathrm{Var}[\partial E] = \mathcal{O}(poly(N,L))$.
\end{enumerate}
\label{cor:2thlds}
\end{corollary}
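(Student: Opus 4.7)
The plan is to obtain \Cref{cor:2thlds} as a direct consequence of \Cref{thm:hittime} by substituting the two specific values of the threshold $b$ and then translating the hitting-time statement about $\{S^{(t)}\}$ into a statement about the candidate parameter $\bm{\theta}_0^*$ returned by \Cref{algo:srch_init_params}. Because \Cref{thm:hittime} already delivers $\mathbb{E}[T_b] \leq b \cdot poly(N,L) \cdot K / p$ for any admissible threshold, the arithmetic in each case is essentially a one-line substitution; the substantive work is to argue admissibility of the two thresholds and to bridge the hitting event to the algorithm's output.

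First I would verify admissibility. In case (i), $b = 1/poly(N,L)$ matches the per-iteration improvement floor $\alpha = 1/(poly(N,L)\,K)$ used inside \Cref{algo:srch_init_params}, so $b$ sits safely below the almost sure supremum $B_S$ in the non-barren regime the framework targets. In case (ii), $b = B_S$ itself, so $b \leq B_S$ is tight, and finiteness of the hitting time is already contained in \Cref{lem:submg_bound} together with the Doob convergence argument used in the proof of \Cref{thm:hittime}. Plugging the two values into the bound then yields $\mathbb{E}[T_b] \leq K/p$ for case (i), where the $poly(N,L)$ factors cancel, and $\mathbb{E}[T_b] \leq B_S \cdot poly(N,L) \cdot K / p$ for case (ii) directly.

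Second I would connect the hitting event to the candidate parameter. The key observation is the update rule: whenever $\Delta^{(t)}$ clears the threshold $\alpha$, line~9 of \Cref{algo:srch_init_params} overwrites $S^{(t)}$ with $\mathrm{Var}[\partial E^{(t)}]$ and line~10 appends the current $\bm{\theta}_0^{(t)}$ to $\bm{\Theta}_0^*$. Hence at the first iteration $T_b$ at which $S^{(t)}$ reaches the target level, the most recently appended candidate has gradient variance equal to $S^{(T_b)}$, which gives $\mathrm{Var}[\partial E] \approx 1/poly(N,L)$ for case (i) and $\mathrm{Var}[\partial E] = \mathcal{O}(poly(N,L))$ for case (ii) under the extra assumption $B_S = \mathcal{O}(poly(N,L))$ noted immediately after the theorem.

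The main obstacle I expect is a subtlety hidden behind the symbol ``$\approx$'' in the corollary: the jumps of $\{S^{(t)}\}$ lie in $[\alpha, (\partial E_{\max} - \partial E_{\min})^2]$, so $S^{(t)}$ will in general overshoot any prescribed level rather than land on it exactly. The cleanest remedy is to redefine $T_b = \inf\{t \in \mathbb{Z}^+ : S^{(t)} \geq b\}$, which remains a bounded stopping time by the same measurability argument used in the proof of \Cref{thm:hittime} and lets the optional-stopping step proceed with the inequality $\mathbb{E}[S^{(T_b)}] \geq b$. The appended $\bm{\theta}_0^*$ then satisfies $\mathrm{Var}[\partial E] \geq b$, which is precisely the qualitative regime the two statements of the corollary record via ``$\approx$'' and ``$\mathcal{O}$''.
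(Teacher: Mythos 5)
Your proposal is correct and follows essentially the same route as the paper, which derives the corollary by directly substituting $b = \nicefrac{1}{poly(N,L)}$ and $b = B_S$ into Thm.~\ref{thm:hittime}. Your additional observation about overshoot --- that $S^{(t)}$ may jump past a prescribed level, so the hitting time is better defined with ``$\geq$'' --- is a sound refinement that the paper implicitly absorbs into the hypothesis that the set defining $T_b$ is non-empty almost surely.
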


\section{Supplementary Experiments}
\label{app:exp}
In this section, we present supplementary details about our experimental results.

\begin{table}[h] 
\centering
\begin{tabular}{ccccc}
  \toprule
    {Dataset} & {$\left| D \right|$} & {$\left| F \right|$} & {$\left| C \right|$} & {Splits} \\
    \midrule
    {\bf Iris} & 150 & 4 & 3 & 60:20:20 \\ 
    {\bf Wine} & 178 & 13 & 3 & 80:20:30 \\ 
    {\bf Titanic} & 891 & 11 & 2 & 320:80:179 \\ 
    {\bf MNIST} & 60,000 & 784 & 10 & 320:80:400 \\ 
  \bottomrule
\end{tabular}
\caption{Statistics of datasets. $\left| D \right|$, $\left| F \right|$, and $\left| C \right|$ denote the original number of instances, features, and classes, respectively. ``Split'' denotes the split instances for the train, validation, and test data.}
\label{tab:data}
\end{table}
\paragraph{Dataset.}
We evaluate our proposed method across four public datasets that are widely used in quantum machine learning. {\bf Iris}~\footnote{\url{https://archive-beta.ics.uci.edu/dataset/53/iris} (Fisher, 1936)} is a classic machine-learning benchmark that measures various attributes of three-species iris flowers. {\bf Wine}~\footnote{\url{https://archive-beta.ics.uci.edu/dataset/109/wine} (Fisher, 1936)} is a well-known dataset that includes 13 attributes of chemical composition in wines. {\bf Titanic}~\footnote{\url{https://www.kaggle.com/c/titanic} (Kaggle, 2012)} contains historical data about passengers aboard the Titanic and is typically used to predict survival. {\bf MNIST}~\footnote{\url{http://yann.lecun.com/exdb/mnist/} (LeCun et al., 2010)} is a widely used small benchmark in computer vision. This benchmark consists of 28$\times$28 gray-scale images of handwritten digits from 0 to 9.
We follow the settings of BeInit~\citep{kulshrestha2022beinit} and conduct experiments in binary classification. Specifically, we randomly sub-sample a certain number of instances from the first two classes of each dataset to create a new subset. After sub-sampling, we employ the t-SNE technique to reduce the feature dimensions to ensure they do not exceed the number of available qubits. The statistics of the original datasets, along with the data splits for training, validation, and testing, are presented in Table~\ref{tab:data}. Importantly, the total number of sub-sampled instances corresponds to the sum of the split datasets. For instance, in the Iris dataset, the total number of subsampled instances is 100.

\begin{table}[h]
\footnotesize
\centering
\setlength{\tabcolsep}{2.5pt}
\begin{tabular}{ccccc}
    \toprule
    LLaMA 3 & Variables & Layer & Expected & Actual \\
    \midrule
    70B & $N\in[2, 20]$ & 0 & (2, $N$, 3) & (2, 3) \\
    70B & $L\in[4, 40]$ & 0 & ($L$, 2, 3) & ($L$, 3) \\
    405B & $L\in[4, 40]$ & 1 & (2, 2) & (2, 2, 2) \\
    \bottomrule
\end{tabular}
\caption{Comparison of generated model parameters (layer, expected shape, and actual shape) between two open-source \llms—LLaMA 3 70B Instruct and LLaMA 3 405B Instruct—evaluated under various numbers of qubits ($N$) or layers ($L$).}
\label{tab:llama3}
\end{table}
\paragraph{Observations of open-source models.}
We observe that two LLaMA 3 open-source \llms~\citep{grattafiori2024llama} perform significantly worse. To further understand these failures, we present representative cases in Tab.~\ref{tab:llama3} and observe that both models fail to generate correct shapes of model parameters, indicating that these open-source models struggle to follow precise structural instructions and suggesting their current limitations in shape-constrained generation tasks.

\begin{figure}[h]
  \begin{subfigure}{1.0\linewidth}
    \centering
    \includegraphics[width=0.49\textwidth]{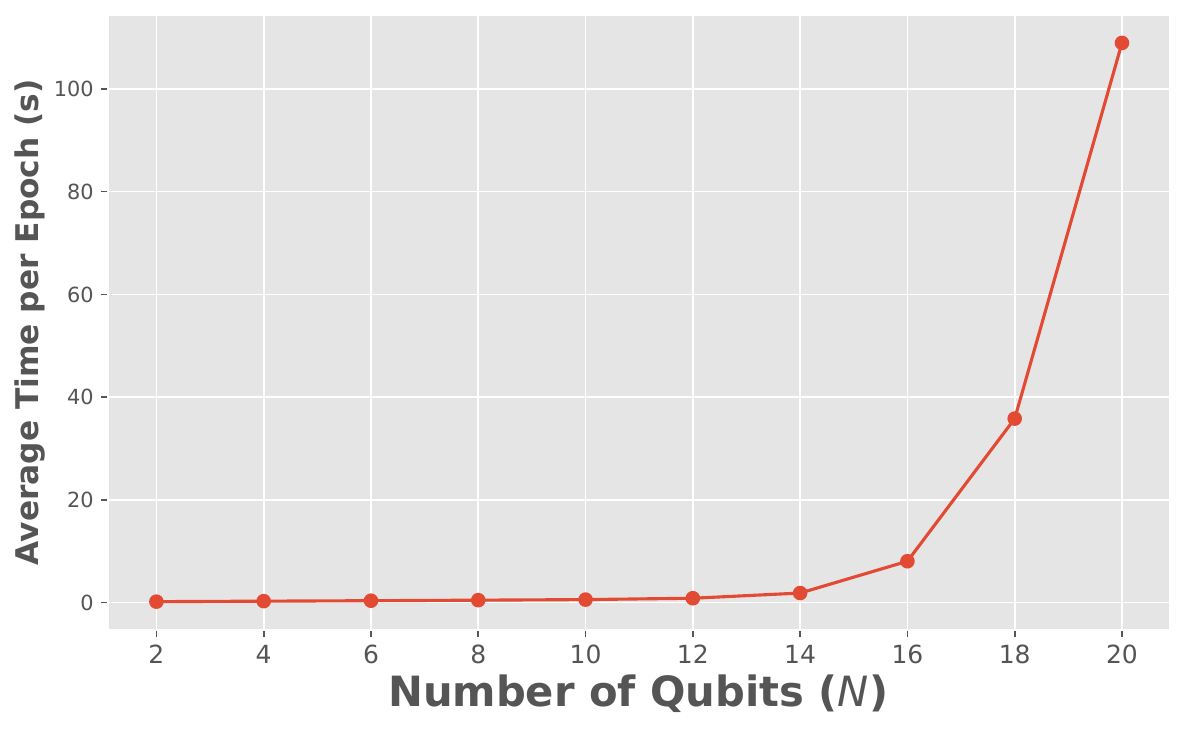}
    \includegraphics[width=0.49\textwidth]{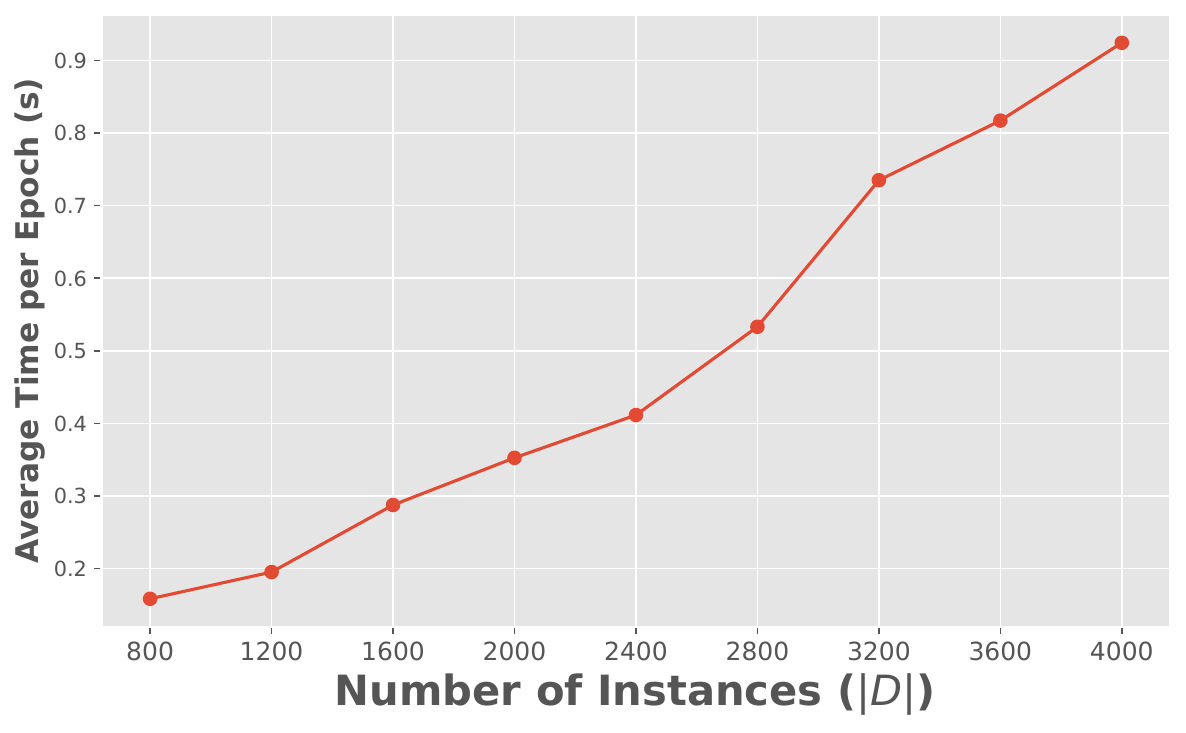}
  \end{subfigure}
\caption{Assessment of the simulation time in \qnn\ training.}
\label{fig:time}
\end{figure}
\paragraph{Simulation time in \qnn\ training.}
We assess the simulation time of \qnn\ training under varying model sizes (number of qubits, $N \in [2, 20]$) and subsampled MNIST dataset sizes (number of instances, $|D| \in [800, 4000]$). We train QNNs for 30 epochs and present the \textbf{average runtime per epoch}. When varying $N$, we fix the number of layers $L$ at 2; when varying $|D|$, we fix both $N$ and $L$ as 2.
As presented in Fig.~\ref{fig:time}, in a classical simulated environment, the average training time of \qnns\ increases exponentially w.r.t. the number of qubits, while it grows roughly linearly with the dataset size. These observations reflect an inherent scalability issue in classical simulation of quantum systems. Such limitations are widely acknowledged in the quantum computing community and are unlikely to be fully overcome until practical quantum hardware becomes more accessible.

\begin{figure}[h]
  \begin{subfigure}{1.0\linewidth}
    \centering
    \includegraphics[width=0.49\textwidth]{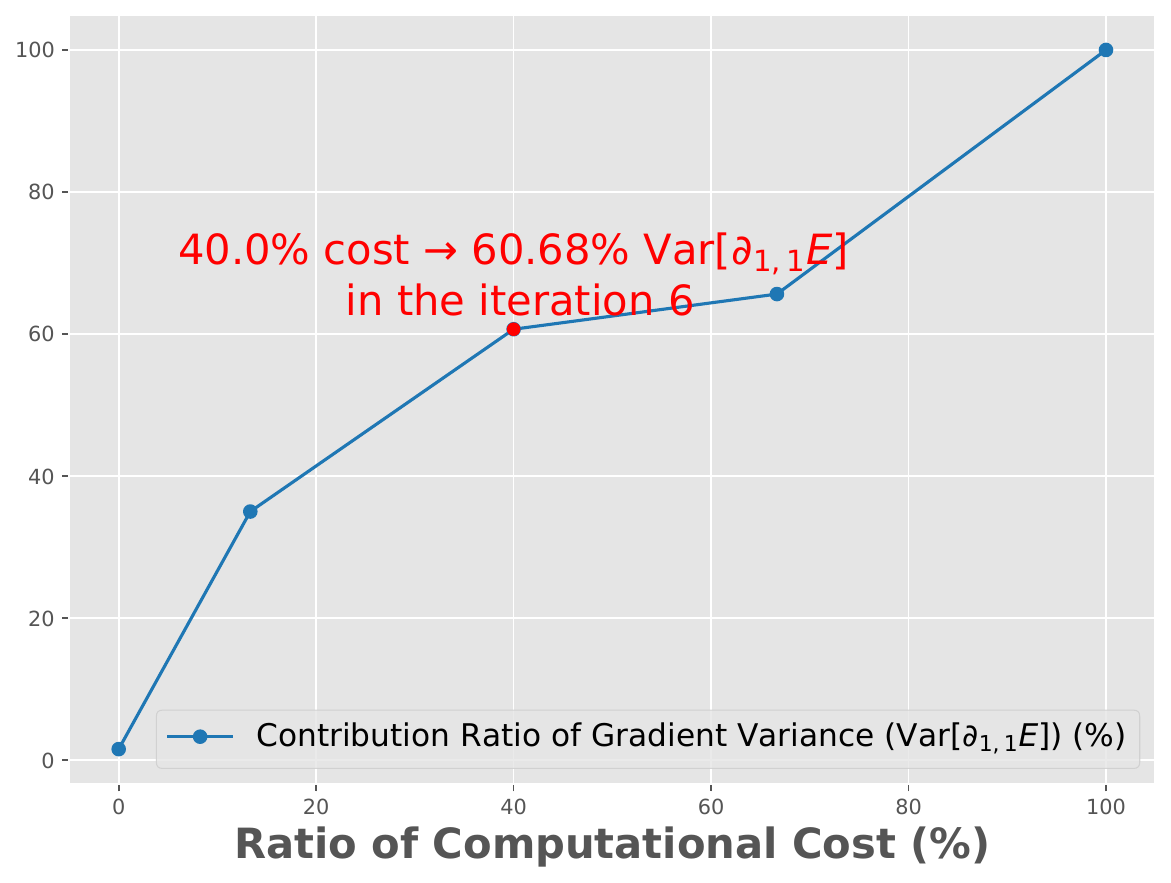}
    \includegraphics[width=0.49\textwidth]{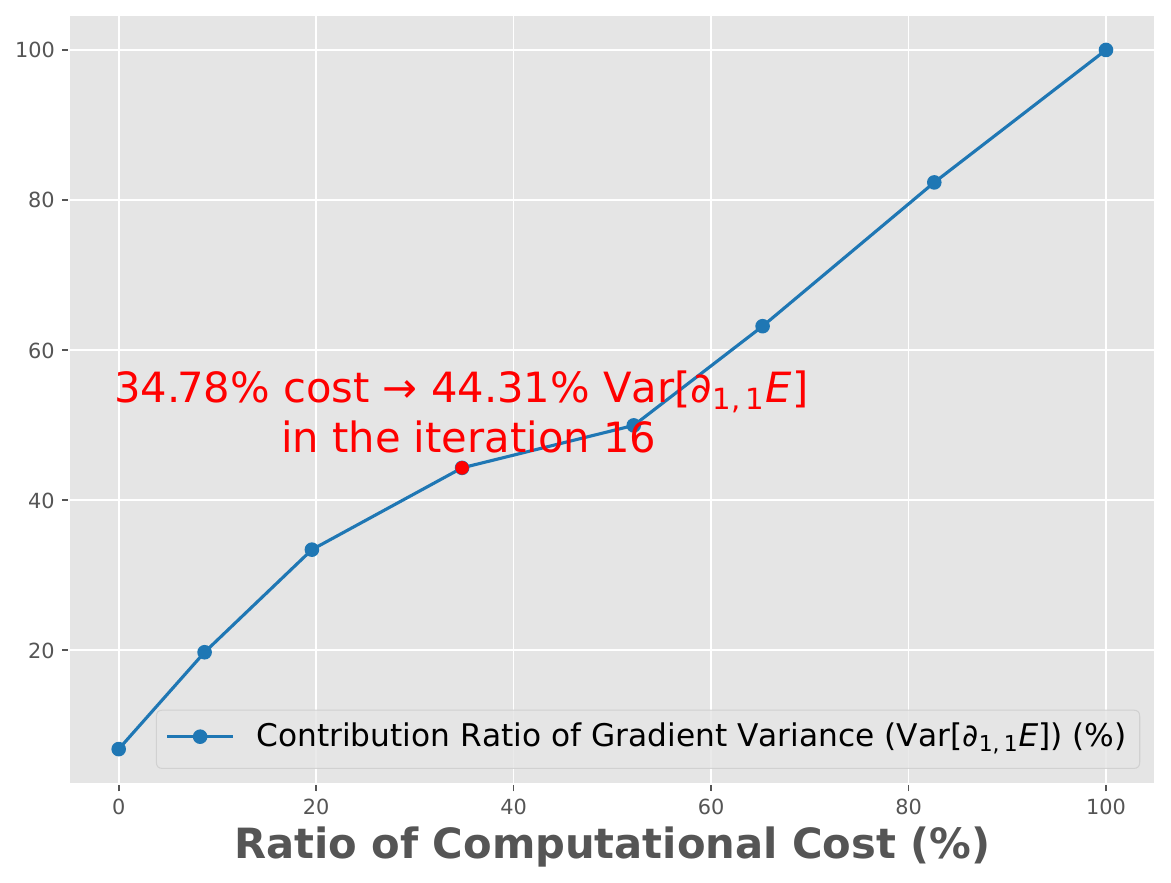}
  \end{subfigure}
\caption{Trade-off analysis between computational cost and performance benefits for 2 qubits (left) and 20 qubits (right) setups.}
\label{fig:tradeoff}
\end{figure}
\paragraph{Trade-off analysis.}
To analyze the trade-off in Fig.~\ref{fig:ei_gvar}, we present the relationship between computational cost (measured by the number of search iterations) and performance benefits (quantified by gradient variance) in Fig.~\ref{fig:tradeoff}. 
We observe a roughly linear relationship between them. In the 2-qubit case, 40\% cost yields over 60\% gain, while in the 20-qubit case, 35\% cost yields over 44\% gain, showing strong early-stage cost-effectiveness. The diminishing returns after a few iterations align with submartingale optimization behavior, and the consistent trends across scales highlight AdaInit’s suitability for budget-aware scenarios.

\begin{figure}[t]
\centering
  \includegraphics[width=\linewidth]{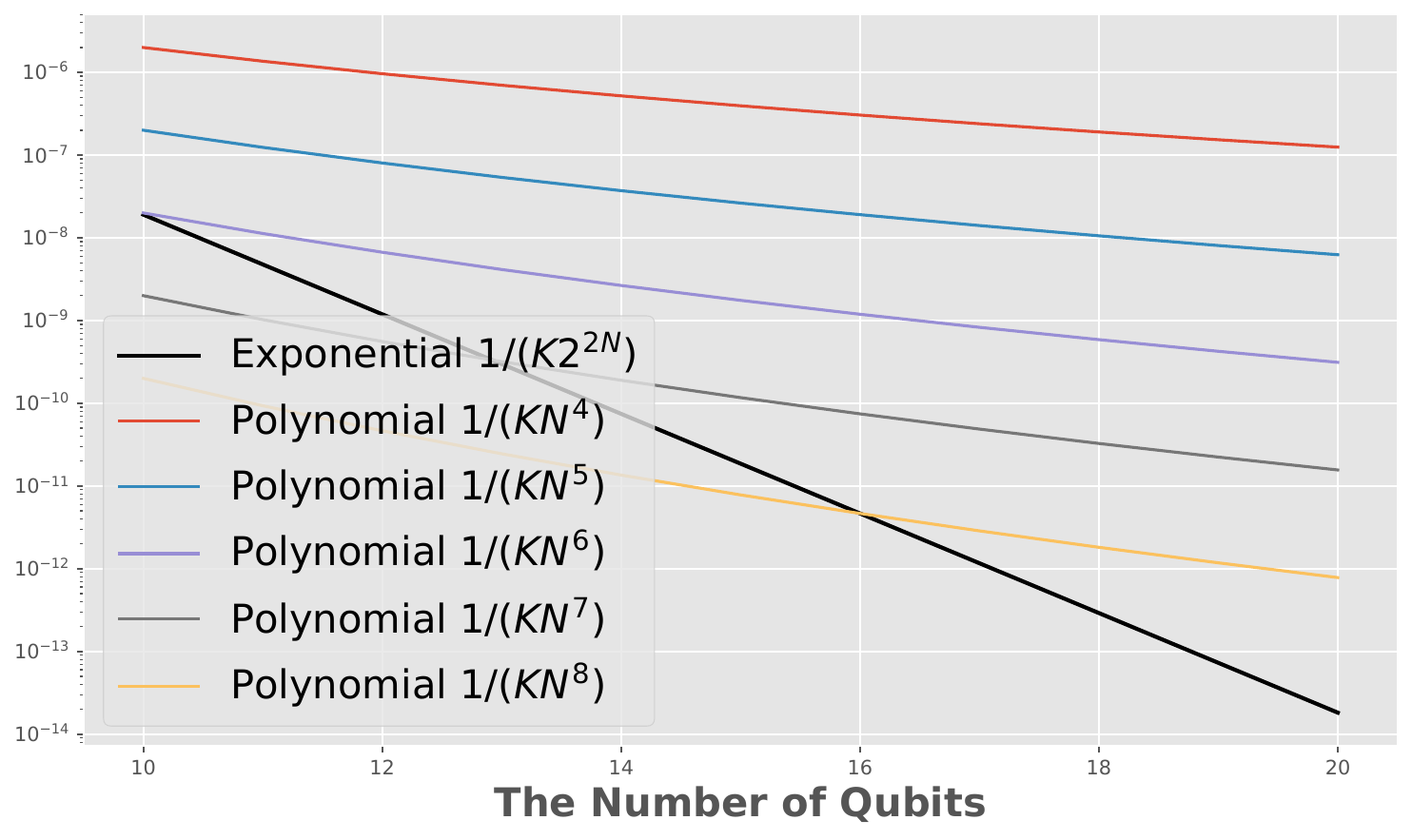}
  \caption{Trade-off analysis of the assumed lower bound, comparing polynomial terms against the exponential baseline, shown on a log scale.}
\label{fig:alb}
\end{figure}
\paragraph{Empirical analysis of the assumed lower bound.}
To determine the assumed lower bound, $\nicefrac{1}{(poly(N, L)K)}$, we conduct a trade-off analysis. A larger polynomial coefficient enlarges the admissible regime, but at the cost of including cases with vanishingly small gradient variance, whereas a smaller coefficient may filter out meaningful expected improvements, thereby preventing the framework from effectively exploring initial parameters. To proactively mitigate BPs, we restrict our attention to the range of qubits that are particularly susceptible to BPs. As illustrated in Fig.~\ref{fig:alb}, we vary the polynomial coefficient and compare against the exponential baseline $\nicefrac{1}{(K2^{2N})}$. Considering the trade-off, we empirically select $\nicefrac{1}{(KN^6)}$ as the lower bound.

\begin{figure}[h]
\centering
\begin{tcolorbox}[colback=gray!20, colframe=gray!50, title=Prompts]
{\small
\textbf{Role:} data generator.\\
\textbf{Goal:} Generate a dictionary iteratively with the following shape:
\begin{verbatim}
{
  'l0': a list, shape=(nlayers, nqubits, nrot),
  'l1': a list, shape=(out_dim, nqubits),
  'l2': a list, shape=(out_dim)
}
\end{verbatim}
\textbf{Requirements:}
\begin{itemize}
 \item Data shape: nlayers = \{\textcolor{red}{nlayers}\}, nqubits = \{\textcolor{red}{nqubits}\}, nrot = \{\textcolor{red}{nrot}\}, out\_dim = \{\textcolor{red}{nclasses}\}.
 \item Data type: float, rounded to four decimals.
 \item Data distribution: numerical numbers in each list are sampled from standard \{\textcolor{brown}{init}\} distributions, which may be modeled from the following dataset.
 \item Dataset description: \{\textcolor{blue}{data\_desc}\}
 \item Adjust the sampling based on feedback from the previous searches: \{\textcolor{teal}{feedback}\}
 \item Crucially, ensure that the length of `l0' = `nlayers' and the length of `l1' = `out\_dim'.
 \item Print out a dictionary [only] (Don't show Python code OR include `[```python\textbackslash n]', `[```json\textbackslash n]', `[```]').
\end{itemize}
}
\end{tcolorbox}
\end{figure}
\paragraph{Prompt designs.}
Before presenting the prompts, we first introduce the notation for the hyperparameter in the prompts. `\textcolor{red}{nlayers}', `\textcolor{red}{nqubits}', `\textcolor{red}{nrot}', `\textcolor{red}{nclasses}' denote the number of layers, qubits, rotation gates, and classes for the \qnn, respectively. `\textcolor{brown}{init}' denotes the initial data distribution for the \qnn. `\textcolor{blue}{data\_desc}' denotes the data description. `\textcolor{teal}{feedback}' denotes the gradient feedback from the previous iteration.

\begin{figure}[t]
\centering
  \includegraphics[width=\linewidth]{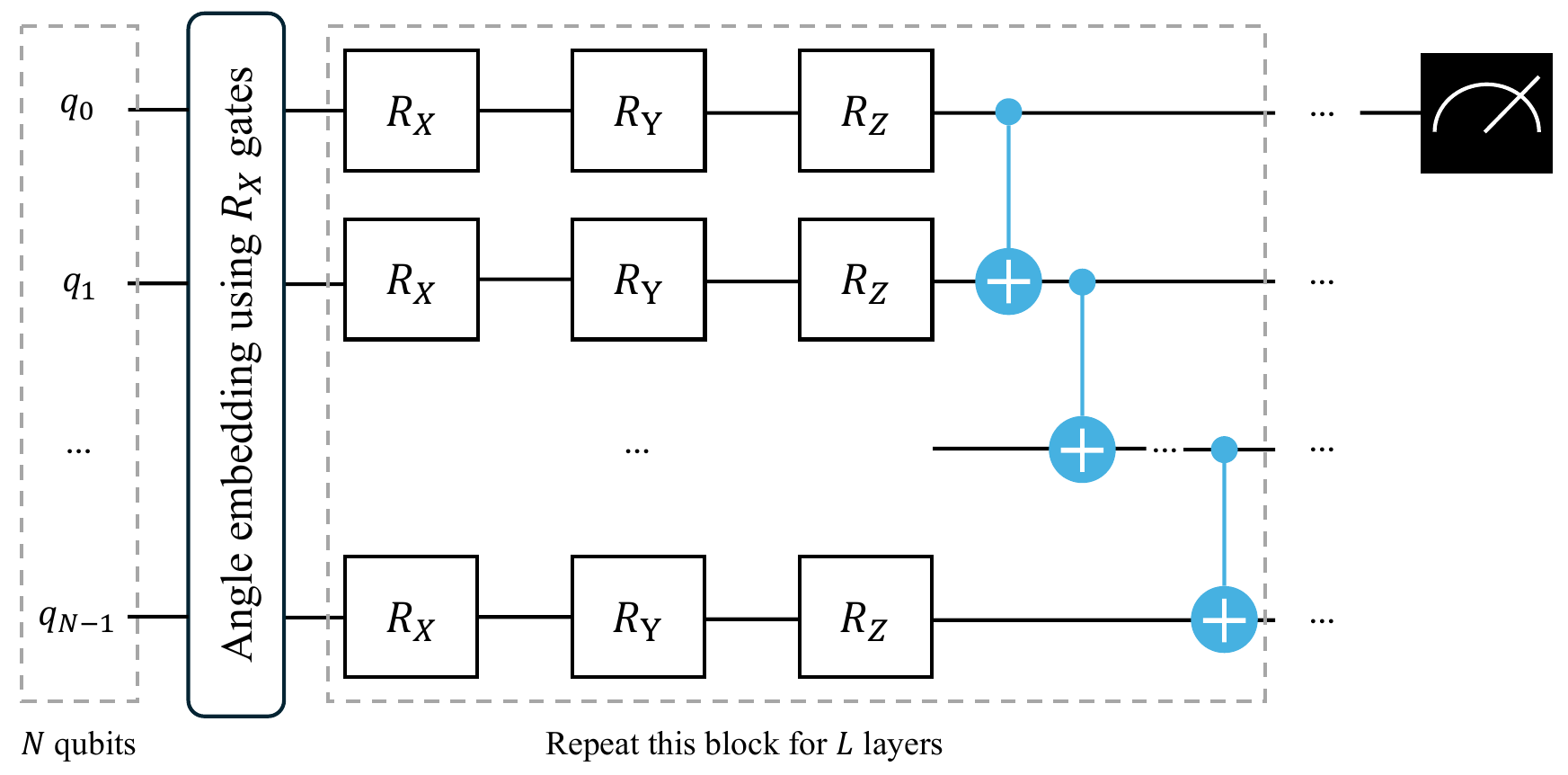}
  \caption{Architecture of our backbone quantum circuit. The number of rotation gates is fixed as 3.}
\label{fig:vqc}
\end{figure}
\paragraph{Model architecture of the quantum circuit.}
In this study, we evaluate our framework using a backbone \qnn\ consisting of a quantum circuit followed by a fully connected layer. Classical data are first encoded into quantum states via angle encoding, where each feature is mapped to rotation gates (e.g., $R_X$) on a specific qubit. This encoding maps data into the Hilbert space while preserving differentiability. The circuit applies repeated layers of parameterized rotations ($R_X$, $R_Y$, $R_Z$) and linear-topology CNOT gates for entanglement. After computation, the quantum state is measured in the computational basis, and expectation values of Pauli-Z operators are computed and used as circuit outputs. These values are then processed by the classical fully connected layer. The overall architecture is adaptable in terms of layers, qubits, and rotation gates, as illustrated in Fig.~\ref{fig:vqc}.

\paragraph{Hardware and software.}
The experiment is conducted on a server with the following settings:
\begin{itemize}[itemsep=-1mm]
  \item Operating System: Ubuntu 22.04.3 LTS
  \item CPU: Intel Xeon w5-3433 @ 4.20 GHz
  \item GPU: NVIDIA RTX A6000 48GB
  \item Software: Python 3.11.8, PyTorch 2.2.2, Pennylane 0.35.1.
\end{itemize}
Based on the above computational infrastructure and setup, for example, our search framework can be reproduced in about 15 hours using 18 qubits.

\paragraph{Use of \llms.}
\llms\ were used only to assist in polishing the language and improving readability. No part of the technical content, analysis, or experimental results was generated by \llms.

\paragraph{Potential risk.} NA

\end{document}